\begin{document}

\title{Balanced Allocations in Batches:  Simplified and Generalized}

\author[]{Dimitrios Los\thanks{\texttt{dimitrios.los@cl.cam.ac.uk}} }
\author[]{Thomas Sauerwald\thanks{\texttt{thomas.sauerwald@cl.cam.ac.uk}}}
\affil[]{Department of Computer Science \& Technology, University of Cambridge}

\maketitle

\begin{abstract}
We consider the allocation of $m$ balls (jobs) into $n$ bins (servers). In the \TwoChoice process, for each of $m$ sequentially arriving balls, two randomly chosen bins are sampled and the ball is placed in the least loaded bin. It is well-known that the maximum load is $m/n+\log_2 \log n + \Oh(1)$ with high probability.  

Berenbrink, Czumaj, Englert, Friedetzky and Nagel~\cite{BCE12} introduced a parallel version of this process, where $m$ balls arrive in consecutive batches of size $b=n$ each. Balls within the same batch are allocated in parallel, using the load information of the bins at the beginning of the batch. They proved that the gap of this process is $\Oh(\log n)$ with high probability. 

In this work, we present a new analysis of this setting, which is based on exponential potential functions. This allows us to both simplify and generalize the analysis of~\cite{BCE12} in different ways:
\begin{enumerate}
    \item Our analysis covers a broad class of processes. This includes not only \TwoChoice, but also processes with fewer bin samples like $(1+\beta)$, processes which can only receive one bit of information from each bin sample and graphical allocation, where bins correspond to vertices in a graph.
    \item Balls may be of different weights, as long as their weights are independent samples from a distribution satisfying a technical condition on its moment generating function. %
    \item For arbitrary batch sizes $b \geq n$, we prove a gap of $\Oh(b/n \cdot \log n)$. For any $b \in [n , n^3]$, we improve this to $\Oh(b/n + \log n)$ and show that it is tight for a family of processes. This implies the unexpected result that for e.g. $(1+\beta)$ with constant $\beta \in (0, 1]$, the gap is $\Theta(\log n)$ for all $b \in [n,n \log n]$.
\end{enumerate}
We also conduct experiments which support our theoretical results, and even hint at a superiority of less powerful processes like $(1+\beta)$ for large batch sizes.

\end{abstract}

\clearpage

\clearpage
\tableofcontents
~
\clearpage

\section{Introduction}

\textbf{Motivation.} We study the classical problem of allocating $m$ balls (jobs) into $n$ bins (servers). This  framework also known as balls-into-bins or balanced allocations~\cite{ABK99}~is a popular abstraction for various resource allocation and storage problems such as load balancing, scheduling or hashing (see surveys~\cite{MRS01,W17}). Following a long line of previous works, we consider randomized allocation schemes where for each ball we take a certain number of bin samples and then allocate the ball into one of these samples.

For the simplest allocation scheme, called \OneChoice, each 
of the $m$ balls is placed in a random bin chosen independently and uniformly. It is well-known that the maximum load is $\Theta( \log n / \log \log n)$ \Whp\footnote{In general, with high probability refers to probability of at least $1 - n^{-c}$ for some constant $c > 0$.} for $m=n$, and $m/n + \Theta( \sqrt{ (m/n) \log n})$ \Whp~for $m \gg n$. While this allocation scheme can be of course executed completely \emph{in parallel}, it results in a significantly large gap between the maximum load and average load if $m$ gets large. 

Azar, Broder, Karlin and Upfal~\cite{ABK99} (and implicitly Karp, Luby and Meyer auf der Heide~\cite{KLM96}) proved that if the $m$ balls are allocated \emph{sequentially}, and each ball is placed in the lesser loaded of $d \geq 2$ randomly chosen bins, then the maximum load drops to $\log_d \log n + \Oh(1)$ \Whp, if $m=n$. This dramatic improvement from $d=1$ (\OneChoice) to $d=2$ (\TwoChoice) is known as ``power of two choices'', and similar effects have been observed in other problems including routing, hashing and randomized rounding~\cite{MRS01}. V\"ocking~\cite{V03} proved that further improvements on the gap bound (which are more significant for larger $d$) are possible if one employs an asymmetric tie-breaking rule.

Berenbrink, Czumaj, Steger and V\"ocking~\cite{BCSV06} extended the analysis of~\cite{ABK99} (and~\cite{V03}) to the so-called \emph{heavily loaded case}, where $m \geq n$ can be arbitrarily large. In particular, for \TwoChoice an upper bound on the gap (the difference between the maximum and average load) of $ \log_2 \log n + \Oh(1)$ \Whp~was shown. A simpler but slightly weaker analysis was later given by Talwar and Wieder~\cite{TW14}.

The above studies for \DChoice, as well as many other works in balanced allocations, usually make the following assumptions: 
\begin{enumerate}
\item[1.)] All $m$ balls have to be allocated sequentially, with the load information being  updated immediately.
\item[2.)] All $m$ balls are of the same weight.
\item[3.)] All $m$ balls need to take $d \geq 2$ independent and uniform bin samples.
\end{enumerate}

While these assumptions are crucial in many of the mathematical analyses, they may be difficult to satisfy in practical applications. For example, in a highly parallel environment, the load information of a bin may not include the most recent allocations. Further, processing times of jobs (size of data items) may not be identical but rather follow a heterogeneous distribution, which leads to the so-called \emph{weighted balls-into-bins} setting. Finally, \TwoChoice (and \DChoice) assume that balls are able to sample two (or $d$) bins which are chosen independently and uniformly at random, in every round. It is natural to consider scenarios where the $d$ samples are correlated (e.g., through a network structure), or for some balls only one sample is available.

\textbf{Related Work.} In order to relax assumption $1.)$, Berenbrink, Czumaj, Englert, Friedetzky and Nagel~\cite{BCE12} proposed a model where balls are allocated in consecutive batches of size $b$. Each ball in a batch is allocated using \TwoChoice, but based on the load values of the bins prior to the batch. This means, that the decisions among the balls within the same batch do not depend on each other in any way and can therefore be executed in parallel. In~\cite{BCE12}, it was shown that for $b=n$ the gap is $\Oh(\log n)$ \Whp

A related setting with ball removals was studied in Berenbrink, Friedetzky, Kling, Mallmann-Trenn, Nagel and Wastell~\cite{BFKMNW18}, where at each batch $\lambda \cdot n$ arriving balls are allocated in parallel, and every non-empty bin removes one ball. The authors prove an upper bound of $\Oh(\frac{\log n}{1-\lambda})$ on the gap, when balls are allocated using \TwoChoice. As mentioned in~\cite[Lemma~3.5]{BFKMNW18}, their analysis can be modified to re-derive the main result in~\cite{BCE12} for the batched setting, using a simpler proof.

In~\cite{M00}, Mitzenmacher studied a similar model to the batched setting called \emph{bulletin board model with periodic updates}. However, this model assumes stochastic arrivals and removals of balls, and the paper also does not provide any rigorous and quantitative bounds on the gap. On a high level, this work, as well as a study by Dahlin~\cite{D00} raise the general question on how useful old load information is, and both works suggest that using a ``less aggressive'' strategy than \DChoice may lead to better performance in practice.

Other parallel allocation schemes which are typically based on a small number of communication rounds between bins and balls were studied in~\cite{ACM98,LW11,LPY19}. For example, Lenzen and Wattenhofer~\cite{LW11} proved that, for $m=n$, a maximum load of $2$ is possible using only $\log^{*} n+\Oh(1)$ communication rounds. Recently,~\cite{LPY19} also extended this direction to the heavily loaded case $m \geq n$, and proved that a maximum load of $m/n+\Oh(1)$ is possible using $\Oh( \log \log (m/n)+\log^* n)$ rounds. While the gap bounds in~\cite{LW11,LPY19} are stronger than in our setting, they require more communication and coordination.

Assumption 2.) that balls are unweighted is made in the vast majority of theoretical works in balanced allocations. One exception is the work of Talwar and Wieder~\cite{TW07}, who analyzed a wide class of weight distributions satisfying some mild conditions on its second and fourth moment. They proved that the gap remains independent of $m$, even though heavier and heavier weights may be encountered if $m$ gets large. For arbitrary weight distributions, results of~\cite{BFHM08} demonstrate that this setting is considerably harder than the unweighted setting, as many couplings and majorization results no longer hold. 

Concerning assumption 3.) on how bins are sampled, many allocation schemes with fewer or correlated bin samples have been analyzed. One key example is the $(1+\beta)$-process introduced by Peres, Wieder and Talwar~\cite{PTW15}, where each ball is allocated using \OneChoice with probability $1-\beta$ and otherwise is allocated using \TwoChoice. The authors proved that for any $\beta \in (0,1]$, the gap is only $\Oh(\log n/\beta + \log(1/\beta)/\beta)$ for any $m \geq n$. Hence, only a ``small'' fraction of \TwoChoice rounds are enough to inherit the property of \TwoChoice that the gap is independent of $m$. This result also applies to weighted balls into bins for a large class of weight distributions.

A nice application of the $(1+\beta)$-process is in the analysis of the so-called \emph{graphical balanced allocations}~\cite{PTW15}. In this model, first studied in~\cite{KP06} for $m=n$, we are given a graph where each bin corresponds to a vertex. For each ball, we pick a uniform edge in $G$ and place the ball in the lesser loaded endpoint of the edge. A reduction to the $(1+\beta)$-process implies that, if $G$ is a regular expander graph, then for any $m \geq n$ the gap is $\Oh(\log n)$ \Whp Very recently, Bansal and Feldheim~\cite{BF21} presented a more elaborate protocol based on multi-commodity-flows that achieves a poly-logarithmic gap for any bounded-degree regular graph. A natural extension of the graphical process was also studied for hypergraphs, see, e.g., \cite{G08,GMP20}. Other applications of the $(1+\beta)$-process in parallel computing include population protocols~\cite{AGR21} and distributed data structures~\cite{ABK18,N21}.

Related to $(1+\beta)$ process is the two-\Thinning process~\cite{FL20,IK04,ELZ86} with some load threshold $\ell$ which works in a two-stage procedure: First, take a uniform bin sample $i$. Secondly, if the load of bin $i$ is at most $\ell$ then allocate a ball into $i$, otherwise place a ball into another bin sample $j$ (without comparing its load with $i$). This process has received some attention recently, and several variations were studied in~\cite{FL20} for $m=n$ and~\cite{FGL21,LS21,LSS21} for $m \geq n$. \cite{LS21} investigated a variant of \Thinning called \Quantile, which uses \emph{relative} instead of \emph{absolute} loads. This means the ball is allocated in the first sample if its load is among the $(1-\delta) \cdot n$ lightest, for some quantile $\delta \in \{1/n, 2/n, \ldots, 1\}$, and otherwise the ball is allocated into a second bin sample. For both \Thinning and \Quantile, extensions exist which use more than two bin samples, and correspondingly, stronger gap bounds can be shown~\cite{FGL21,LS21}.

An even stronger class of adaptive sampling schemes was analyzed by Czumaj and Stemann~\cite{CS01}, where unlike \Thinning or \Quantile, the ball is always placed in the least loaded bin among all samples. However, their results hold only for $m=n$.

Another model relaxing the \emph{uniform sampling} assumption was introduced in Wieder~\cite{W07}, where the minimum and maximum probability for \emph{sampling} a bin may deviate from the uniform distribution by some factors $\alpha,\beta$. Wieder~\cite{W07} proved some tight trade-off between $\alpha,\beta$ and $d$, so that \DChoice still achieves a small gap for any $m \geq n$. A related model with heterogeneous bins capacities was studied in Berenbrink, Brinkmann, Friedetzky and Nagel~\cite{BBFN14}. The authors proved that \DChoice achieves a gap bound of $\log_{d} \log n+\Oh(1)$, matching the result in the classical setting.

\begin{table}
    \centering
    \renewcommand{\arraystretch}{1.75}
    \footnotesize{
    \begin{tabular}{cccccc}
        \textbf{Process} & \textbf{Graphical} & \textbf{Batch Size} & \textbf{Weights} & \textbf{Gap Bound} & \textbf{Reference} \\
        \hline
        \rowcolor{Gray} \TwoChoice & -- & $b =n$ & -- & $\Oh(\log n)$ & \cite[Thm~1]{BCE12} \\ \hline
        \rowcolor{Greenish} $\mathcal{C}_1,\mathcal{C}_2$ & -- & $b \geq n$ & random & $\Oh( \frac{b}{n} \cdot \log n )$ & Thm~\ref{thm:weak_gap_bound} \\ \hline
        \rowcolor{Greenish} $\mathcal{C}_1,\mathcal{C}_2$ & -- & $b \in [n,n^3]$ & random & $\Oh( \frac{b}{n} + \log n)$ & Thm~\ref{thm:strong_gap_bound} \\ \hline
                \rowcolor{Greenish} $(1+\beta), \ \beta \leq 1 - \Omega(1)$ & -- & $b \geq 1$ & -- & $\Omega(\log n)$ & Prop~\ref{pro:log_lower} \\ \hline
        \rowcolor{Greenish}  & &  &  &   & \\
        \rowcolor{Greenish}\multirow{-2}{*}{\makecell{\TwoChoice, \\ $(1+\beta), \beta = \Omega(1)$}} & \multirow{-2}{*}{--} & \multirow{-2}{*}{$b \geq n \log n$} & \multirow{-2}{*}{--} & \multirow{-2}{*}{$\Omega(\frac{b}{n})$} & \multirow{-2}{*}{Prop~\ref{pro:bn_lower}} \\ \hline
        \rowcolor{Gray} \TwoChoice & $d$-reg., conduct.~$\Phi$ & -- & -- & $\Oh(\frac{\log n}{\Phi})$ & \cite[Thm~3.2]{PTW15} \\ \hline
        \rowcolor{Greenish} \TwoChoice & $d$-reg., conduct.~$\Phi$ & -- & random & $\Oh(d \cdot \frac{\log n}{\Phi})$ & Thm~\ref{thm:unbatched_graphical} \\ \hline
        \rowcolor{Greenish} \TwoChoice & $d$-reg., conduct.~$\Phi$ & $b \geq n$ & random & $\Oh( \frac{d^2}{\Phi} \cdot \frac{b}{n} \cdot \log n)$ & Thm~\ref{thm:graphical} \\ \hline
        \rowcolor{Greenish} \TwoChoice & expander, $d=\Oh(1)$ & $b \in [n,n^3]$  & random & $\Oh( \frac{b}{n} + \log n)$ & Thm~\ref{thm:graphical} \\ \hline
        \rowcolor{Gray} $(1+\beta), \ \beta \leq 1 - \Omega(1)$ & -- & -- & -- & $\Omega(\frac{\log n}{\beta})$ & \cite[Sec~4]{PTW15} \\ \hline
        \rowcolor{Gray} $(1+\beta)$ & -- & -- & random & $\Oh(\frac{\log n}{\beta}+ \frac{\log(1/\beta)}{\beta})$ & \cite[Cor~2.12]{PTW15} \\ \hline
        \rowcolor{Greenish} $(1+\beta)$ & -- & -- & random & $\Oh(\frac{\log n}{\beta})$ & Thm~\ref{thm:OnePlusBetaGap} \\ \hline
    \end{tabular}
    }
    \caption{Overview of the gap bounds in previous works (rows in \hlgray{~Gray~}) and the gap bounds derived in this work (rows in \hlgreenish{~Green~}).
    For simplicity, gap bounds assume that in the conditions $\mathcal{C}_1,\mathcal{C}_2$, all three parameters $\epsilon,\delta,C$ are constants.
    As we show in \cref{pro:verify}, this is satisfied by several processes including \TwoChoice, $(1+\beta)$ for constant $\beta \in (0, 1]$ and $\Quantile(\delta)$ for constant $\delta \in (0, 1)$. The upper bounds on the gap hold for all values of $m$ (see \cref{lem:smoothing_argument}), while some of the lower bounds may only hold for certain $m$.}
    \label{tab:our_results}
\end{table}

\textbf{Our Results.}
In this work we revisit the  \emph{batched model} from~\cite{BCE12}, which allocates balls in batches of size $b$, but here we allow any value of $b \geq n$. Additionally, we consider a wider range of allocation processes, including not only \TwoChoice, but also $(1+\beta)$ or \Quantile. This relaxes the requirement of \TwoChoice of always taking two uniform bin samples at each round and allocating into the less loaded of the two (given the available load information). 

Our results hold for any process satisfying two natural conditions: $(i)$ there is a suitable bias away from allocating into the heavily loaded bins and $(ii)$ no single bin experiences are a ``too large'' bias. The second condition may seem a bit counter-intuitive at first, but it is crucial in the batched setting to prevent a lightly loaded bin from receiving too many allocations within the same batch. The precise definition of these conditions is given in Section~\ref{sec:class_of_processes}. %

Furthermore, our results are valid for the same class of weight distributions as considered in~\cite{PTW15}, which includes, for example, the geometric and the exponential distributions.

Our first result is that for any batch size $b \geq n$, after allocating any number of balls $m \geq n$, a gap bound of $\Oh( b/n \cdot \log n)$ holds \Whp For $b=n$, this matches the result of~\cite{BCE12} for the \TwoChoice process in the unweighted setting. 
Unlike the analysis in~\cite{BCE12}, which relies on some sophisticated Markov chain tools from~\cite{BCSV06} to prove a ``short memory behaviour'', the derivation of this gap bound $\Oh(b/n \cdot \log n)$ is based on a hyperbolic cosine potential function (a version of two exponential potential functions), and thus we believe it to be more elementary and self-contained. On a high level, this analysis shares some of the ideas from~\cite{BFKMNW18,PTW15} which both uses similar versions of exponential functions, but it seems difficult to apply these existing approaches directly to the general setting with weighted balls and any $b \geq n$.

We then proceed to a tighter bound and prove that for any $n \log n \leq b \leq n^{3}$ and any number of balls $m \geq n$, the gap is $\Oh(b/n+ \log n)$ \Whp This bound is derived through an interplay between different potential functions, in particular, we relate three hyperbolic cosine potential functions with different smoothing parameters. 

Next we turn to proving asymptotically tight lower bounds. We prove that for any $b \geq n$, there are processes falling into our framework that produce for certain values of $m$ a gap of $\Omega(b/n+\log n)$. These lower bounds are proven in the unweighted setting where all balls have weight one. For the \TwoChoice process, the lower bounds are tight for $b \geq n \log n$.

Combining our upper with lower bounds reveals an interesting behavior: For any $b \in [n,\Oh(n \log n)]$, the gap is $\Theta(\log n)$ \Whp, whereas for $b \geq n \log n$, the gap is $\Theta(b/n)$ \Whp In particular, the asymptotic gap bound does not change as $b$ moves from $n$ to $n \log n$. 

We further demonstrate the flexibility of our techniques by deriving results for the \emph{graphical allocation model} from~\cite{PTW15}, where bins are arranged as a graph and at each round a pair of bins is sampled by picking a random edge from the graph. One open question in~\cite[Section~4]{PTW15} was to derive results for graphical allocation with weights. In this work, we make progress towards that question by proving gap bounds that hold not only for weighted balls but also in the batched setting. For example, if the graph is a bounded-degree expander, then we recover the gap bound of $\Oh(\log n)$ from~\cite{PTW15} even if balls are weighted and are allocated in batches up to a size of $n \log n$. 
Finally, another consequence of our approach is a tight $\Oh(\log n/\beta)$ upper bound for the $(1+\beta)$ process for any $\beta \leq 1/2$.

Our results are summarized in~\cref{tab:our_results}.

\textbf{Organization.}
In \cref{sec:notation}, we present some standard notation for balanced allocations and define the processes and models used. In \cref{sec:general_hyperbolic_pot}, we generalize (and strengthen) the analysis of the hyperbolic cosine potential of~\cite{PTW15}. In \cref{sec:weak_bound}, we apply this analysis to obtain an $\Oh(\frac{b}{n} \log n)$ gap bound for a family of processes in the batched model with weighted balls. In \cref{sec:refined}, we improve this upper bound on the gap to $\Oh(\frac{b}{n} + \log n)$, for any $n \leq b \leq n^3$. In \cref{sec:graphical}, we demonstrate applications of our analysis to graphical allocation and the $(1+\beta)$ process. In \cref{sec:lower_bounds}, we show that our upper bound from \cref{sec:refined} is asymptotically tight, by providing lower bounds for a large family of processes.  In \cref{sec:experiments} we present some experimental results. Finally, in \cref{sec:conclusion}, we summarize the main results and point to some open problems. %

\section{Notation}\label{sec:notation}

\subsection{Basic Notation and Specific Processes} \label{sec:basic_notation}

We consider the allocation of $m$ balls into $n$ bins, which are labeled $[n]:=\{1,2,\ldots,n\}$. For the moment, the $m$ balls are unweighted (or equivalently, all balls have weight $1$). For any round $t \geq 0$, $x^{t}$ is the $n$-dimensional \emph{load vector}, where $x_i^{t}$ is the number of balls allocated into bin $i$ in the first $t$ allocations. In particular, $x_i^{0}=0$ for every $i \in [n]$. Finally, the \emph{gap} is defined as
\[
 \Gap(t) = \max_{i \in [n]} x_i^{t} - \frac{t}{n}.
\]
It will be also convenient to keep the load vector $x$ sorted. To this end, let $\tilde{x}^t:=x^t-\frac{t}{n}$. Then, relabel the bins such that $y^{t}$ is a permutation of $\tilde{x}^t$ and $y_1^{t} \geq y_2^{t} \geq \cdots \geq y_n^{t}$. Note that $\sum_{i \in [n]} y_i^t=0$ and $\Gap(t)=y_1^t$. We will call a bin $i \in [n]$ \emph{overloaded}, if $y_i \geq 0$ and \emph{underloaded} otherwise. Further, we say that a vector $v=(v_1,v_2,\ldots,v_n)$ \emph{majorizes} $u=(u_1,u_2,\ldots,u_n)$ if for all $1 \leq k \leq n$, the prefix sums satisfy: $\sum_{i=1}^k v_i \geq \sum_{i=1}^k u_i$.

Following~\cite{PTW15}, many allocation processes can be described by a time-invariant \emph{probability vector} $p_i$, $1 \leq i \leq n$, such that at each step $t \geq 0$, $p_i$ is the probability for allocating a ball into the $i$-th most heavily loaded bin (or equivalently, incrementing $y_i^t$ by one).

By $\mathfrak{F}^t$ we denote the filtration of the process until step $t$, which in particular reveals the load vector $x^t$.

We continue with a formal description of the \TwoChoice process.
\begin{framed}
\vspace{-.45em} \noindent
\underline{\TwoChoice Process:} \\
\textsf{Iteration:} For each $t \geq 0$, sample two bins $i_1$ and $i_2$ with replacement, independently and uniformly at random. Let $i$ be one bin with $x_{i}^{t} = \min\{ x_{i_1}^t,x_{i_2}^t\}$, breaking ties randomly. Then update:  
    \begin{equation*}
     x_{i}^{t+1} = x_{i}^{t} + 1.
 \end{equation*}\vspace{-1.5em}
\end{framed}

It is immediate that the probability vector of \TwoChoice is
\begin{equation*}
    p_{i} = \frac{2(i-1)}{n^2}, \qquad \mbox{ for all $i \in [n]$.}
\end{equation*}

Following~\cite{PTW15}, we recall the definition of $(1+\beta)$ which is a process interpolating between \OneChoice and \TwoChoice:
\begin{framed}
\vspace{-.45em} \noindent
\underline{($1+\beta$)-Process:}\\
\textsf{Parameter:} A mixing factor $\beta \in (0,1]$.\\
\textsf{Iteration:} For each $t \geq 0$, sample two bins $i_1$ and $i_2$ with replacement, independently and uniformly at random. Let $i$ be one bin with $x_{i}^{t} = \min\{ x_{i_1}^t,x_{i_2}^t\}$, breaking ties randomly. Then update:  
    \begin{equation*}
    \begin{cases}
     x_{i}^{t+1} = x_{i}^{t} + 1 & \mbox{with probability $\beta$}, \\
      x_{i_1}^{t+1} = x_{i_1}^{t} + 1 & \mbox{otherwise}.
   \end{cases}
 \end{equation*}\vspace{-1.em}
\end{framed}

In other words at each step, $(1+\beta)$-process allocates the ball following the \TwoChoice rule with probability $\beta$, and otherwise allocates the ball following the \OneChoice rule. Therefore, the probability vector is given by~\cite{PTW15}:
\begin{equation*}
    p_{i} =
    (1-\beta) \cdot \frac{1}{n} + \beta \cdot \frac{2(i-1)}{n^2}, \qquad \mbox{ for all $i \in [n]$.}
\end{equation*}

The next process is another relaxation of \TwoChoice.
\begin{framed}
\vspace{-.45em} \noindent
\underline{$\Quantile(\delta)$ Process:}\\
\textsf{Parameter:} A quantile $\delta \in \{1/n, 2/n, \ldots, 1 \}$.\\
\textsf{Iteration:} For each $t \geq 0$, sample two bins $i_1$ and $i_2$ with replacement, independently and uniformly at random, and update:  
    \begin{equation*}
    \begin{cases}
     x_{i_2}^{t+1} = x_{i_2}^{t} + 1 & \mbox{if $i_1$ is among the $\delta \cdot n$ most loaded bins}, \\
     x_{i_1}^{t+1} = x_{i_1}^{t} + 1 & \mbox{otherwise}.
   \end{cases}
 \end{equation*}\vspace{-1.em}
\end{framed}
Note that the $\Quantile(\delta)$ processes can be also implemented as a two-phase procedure: First probe the bin $i_1$ and place the ball there if $i_1$ is not among the $\delta \cdot n$ heaviest bins. Otherwise, take a second sample $i_2$ and place the ball there. Since we only need to know whether a bin's rank is above or below a value, the response by a bin can be encoded as a single bit. The probability vector of $\Quantile(\delta)$ is given by:
\begin{equation*}
    p_{i} =
    \begin{cases}
     \frac{\delta}{n} & \mbox{ if $1 \leq i \leq \delta \cdot n$}, \\
     \frac{1+\delta}{n} & \mbox{ if $\delta \cdot n < i \leq n$}.
    \end{cases}
\end{equation*}
Another, equivalent description of $\Quantile(\delta)$ is that we perform \TwoChoice, but only get to know whether a sampled bin's rank is below or above $\delta \cdot n$ and breaking ties randomly.

An example of the probability vectors of the three processes above can be found in \cref{fig:prob_vectors}.

Finally, we will also consider a graph-based version of balanced allocation, called \emph{graphical balanced allocation}~\cite{PTW15}. This process involves running \TwoChoice on a graph, where only bin pairs can be sampled which are connected by an edge. 
\begin{framed}
\vspace{-.45em} \noindent
\underline{$\Graphical(G)$}\\
\textsf{Parameter:} An undirected, connected, regular graph $G$.\\
\textsf{Iteration:} For each $t \geq 0$, sample an edge $e=\{i_1,i_2\} \in E$ uniformly at random. Let $i$ be one bin with $x_{i}^{t} = \min\{ x_{i_1}^t,x_{i_2}^t\}$, breaking ties randomly. Then update:  
    \begin{equation*}
     x_{i}^{t+1} = x_{i}^{t} + 1
 \end{equation*}\vspace{-1.5em}
\end{framed}
Note that unlike the other processes, the probability vector of \Graphical will generally not be time-invariant.

\begin{figure}[t]
\begin{center}
\begin{tikzpicture}[scale=0.7]
\begin{axis}[domain=0:10,
  samples=40,xlabel=$i$,ylabel=$p_i$,
  grid=both,xmin=0.5,xmax=10.5,xtick={1,2,3,4,5,6,7,8,9,10},ytick={0,0.05,0.1,0.15,0.2},ymin=0,ymax=0.22,
 legend pos=north west]
\addplot[domain=1:10,opacity=0.6,color=green,draw=green, mark=*] plot coordinates
    {
    (1,0.01)
    (2,0.03)
    (3,0.05)
    (4,0.07)
    (5,0.09)
    (6,0.11)
    (7,0.13)
    (8,0.15)
    (9,0.17)
    (10,0.19)
    };
     \addlegendentry{\TwoChoice};
     \addplot[domain=1:10,opacity=0.6,color=red,draw=red, mark=*] plot coordinates
    {
    (1,0.064)
    (2,0.072)
    (3,0.08)
    (4,0.088)
    (5,0.096)
    (6,0.104)
    (7,0.112)
    (8,0.12)
    (9,0.128)
    (10,0.136)
    };
     \addlegendentry{$(1+\beta)$, $\beta=0.4$};
     
    \addplot[domain=1:10,opacity=0.6,color=blue,draw=blue, mark=*] plot coordinates
    {
    (1,0.06)
    (2,0.06)
    (3,0.06)
    (4,0.06)
    (5,0.06)
    (6,0.06)
    (7,0.16)
    (8,0.16)
    (9,0.16)
    (10,0.16)
    };
     \addlegendentry{$\Quantile(0.6)$}; 
     
\end{axis}

\end{tikzpicture} \quad \quad \quad \begin{tikzpicture}[scale=0.7]

\begin{axis}[domain=0:10,
  samples=40,
  grid=both,xmin=0.5,xmax=10.5,ymin=0,ymax=1.03,xtick={1,2,3,4,5,6,7,8,9,10},xlabel=$i$,ylabel=$p_i$,legend pos=north west]
\addplot[domain=1:10,opacity=0.6,color=green,draw=green, mark=*] plot coordinates
    {
    (1,0.01)
    (2,0.04)
    (3,0.09)
    (4,0.16)
    (5,0.25)
    (6,0.36)
    (7,0.49)
    (8,0.64)
    (9,0.81)
    (10,1)
    };
     \addlegendentry{\TwoChoice};
     \addplot[domain=1:10,opacity=0.6,color=red,draw=red, mark=*] plot coordinates
    {
    (1,0.064)
    (2,0.136)
    (3,0.216)
    (4,0.304)
    (5,0.4)
    (6,0.504)
    (7,0.616)
    (8,0.736)
    (9,0.864)
    (10,1)
    };
     \addlegendentry{$(1+\beta)$, $\beta=0.4$};
     
    \addplot[domain=1:10,opacity=0.6,color=blue,draw=blue, mark=*] plot coordinates
    {
    (1,0.06)
    (2,0.12)
    (3,0.18)
    (4,0.24)
    (5,0.3)
    (6,0.36)
    (7,0.52)
    (8,0.68)
    (9,0.84)
    (10,1)
    };
     \addlegendentry{$\Quantile(0.6)$}; 
     
\end{axis}

\end{tikzpicture}
\end{center}
\caption{Illustration of the probability vector $(p_1,p_2,\ldots,p_{10})$ and cumulative probability distribution of \TwoChoice, $(1+\beta)$ with $\beta=0.4$ and $\Quantile(0.6)$, which is sandwiched between the other two processes.}\label{fig:prob_vectors}
\end{figure}
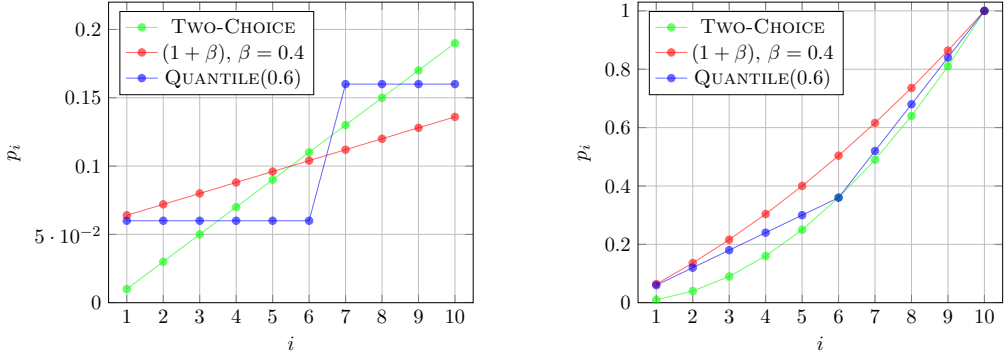

\subsection{Classes of Processes}\label{sec:class_of_processes}

We will now formulate general conditions based on the probability vector $p$ to which our analysis will apply:
\begin{itemize}\itemsep0pt
  \item \textbf{Condition $\mathcal{D}_0$}: $(p_i)_{i \in [n]}$ is a non-decreasing probability vector in $1 \leq i \leq n$.
  \item \textbf{Condition $\mathcal{D}_1$}: There exist constant $\delta \in (0, 1)$ and (not necessarily constant) $\eps \in (0, 1)$,
    \[
    p_{\delta n} \leq \frac{1 - \epsilon}{n}. 
    \]
     \item \textbf{Condition $\mathcal{D}_2$}: For some constant $C > 1$, $\max_{i \in [n]} p_i \leq \frac{C}{n}$. 
\end{itemize}

Note that condition $\mathcal{D}_1$ only provides an upper bound for allocating into the heavier bins. However, due to $\mathcal{D}_0$, this also implies a lower bound on the probability for allocating into the lighter bins (see \cref{obs:conditions}). Finally, we remark any \DChoice process satisfies $\mathcal{D}_2$ for $C = d$.

For the application to graphical allocation, we will relax these conditions slightly and drop the assumption that $(p_i)_{i \in [n]}$ is non-increasing in exchange for a stronger version of condition $\mathcal{D}_1$ that involves both prefix and suffix sums of $p$.
\begin{itemize}\itemsep0pt
  \item \textbf{Condition $\mathcal{C}_1$}: There exist constant $\delta \in (0, 1)$ and (not necessarily constant) $\eps \in (0, 1)$, such that for any $1 \leq k \leq \delta \cdot n$,
    \[
    \sum_{i=1}^{k} p_{i} \leq (1 - \epsilon) \cdot \frac{k}{n},
    \]
    and similarly for any $\delta \cdot n +1 \leq k \leq n$,
    \[
     \sum_{i=k}^{n} p_i \geq \left(1 + \epsilon \cdot \frac{\delta}{1-\delta} \right) \cdot \frac{n-k+1}{n}.
    \]
 \item \textbf{Condition $\mathcal{C}_2$}: For some constant $C > 1$, $\max_{i \in [n]} p_i \leq \frac{C}{n}$. 
\end{itemize}

\begin{restatable}{obs}{ObsConditions}\label{obs:conditions}
Conditions $\mathcal{D}_0$ and $\mathcal{D}_1$ with $\delta$ and $\epsilon$ imply condition $\mathcal{C}_1$ with the same $\delta$ and $\epsilon$.
\end{restatable}

\begin{proof}
Since $p_{\delta n} \leq \frac{1-\epsilon}{n}$ and $p_i$ is non-decreasing, it follows that $p_j \leq \frac{1-\epsilon}{n}$ for all $1 \leq j \leq \delta n$, and thus the prefix sum condition of $\mathcal{C}_1$ holds with equality. We can also conclude that
\[
 \sum_{i=1}^{\delta n} p_i \leq (1-\epsilon) \cdot \delta,
\]
and hence
\[
 \sum_{i=\delta n+1}^{n} p_i \geq 1 - (1 - \epsilon) \cdot \delta.
\]
Since $p_i$ is non-decreasing, it follows for any $\delta n +1 \leq k \leq n$,
\begin{align*}
 \sum_{i=k}^{n} p_i \geq \frac{n-k+1}{(1-\delta) n} \cdot \left(1 - (1 - \epsilon) \cdot \delta \right) = \frac{n-k+1}{n} \cdot \left(1 + \epsilon \cdot \frac{\delta}{1-\delta} \right).
\end{align*}
\end{proof}

Using this observation, it is easy to verify that \TwoChoice, $(1+\beta)$ and \Quantile satisfy the two conditions $\mathcal{C}_1$ and $\mathcal{C}_2$.
\begin{restatable}{pro}{ProVerify}\label{pro:verify}
For any $\beta \in (0,1]$, the $(1+\beta)$-process satisfies condition $\mathcal{C}_1$ with $\delta=\frac{1}{4}$ and $\epsilon=\frac{\beta}{2}$ and condition $\mathcal{C}_2$ with $C= 2$.
Further, for any constant $\delta \in (0,1)$, the $\Quantile(\delta)$ process satisfies condition $\mathcal{C}_1$ with $\delta$ and $\epsilon=1-\delta$, and condition $\mathcal{C}_2$ with $C = 2$.
\end{restatable}

\begin{proof}
For any $i \in [n]$,
\[
 p_i = (1-\beta) \cdot \frac{1}{n} + \beta \cdot \frac{2(i-1)}{n^2}.
\]
This shows that $p_i$ is increasing in $i \in [n]$ (condition $\mathcal{D}_0$), and thus also $\max_{i \in [n]} p_i \leq \frac{2}{n}$ (condition $\mathcal{C}_2$). Further, for $\delta=1/4$,
\[
 p_{\delta n} \leq (1-\beta) \cdot \frac{1}{n} + \beta \cdot \frac{1}{2n} = \left(1 - \frac{\beta}{2}\right) \cdot \frac{1}{n},
\]
proving that $\mathcal{D}_1$ holds with $\epsilon = \beta/2$. By \cref{obs:conditions},  $\mathcal{C}_1$ holds with the same $\epsilon$ and $\delta$.

For $\Quantile(\delta)$, it is obvious that condition $\mathcal{D}_0$ holds, as well as condition $\mathcal{C}_2$ with $C=2$. Further, for any $i \leq \delta \cdot n$, we have $p_i \leq \frac{\delta}{n}$, which means $\mathcal{D}_1$ holds with $\epsilon = 1 - \delta$.
\end{proof}

Note that for $\beta=1$, the $(1+\beta)$-process equals \TwoChoice, so the above statement also applies to \TwoChoice. Finally, since \TwoChoice satisfies $\mathcal{C}_1$, by majorisation also \DChoice for any $d > 2$ satisfies $\mathcal{C}_1$ with the same $\delta$ and $\epsilon$. Further,  \DChoice satisfies $\mathcal{C}_2$ with $C=d$ and thus:

\begin{pro}\label{pro:d_choice}
For any $d \geq 2$, \DChoice satisfies condition $\mathcal{C}_1$ with $\delta=\frac{1}{4}$, $\epsilon=\frac{1}{2}$ and condition $\mathcal{C}_2$ with $C=d$.
\end{pro}

\subsection{Batched Model and Weights}\label{sec:batched_model}

We will now extend the definitions of \cref{sec:basic_notation} and \cref{sec:class_of_processes} to \emph{weighted balls} into bins. To this end, let
 $w^t \geq 0$ be the weight of the $t$-th ball to be allocated ($t \geq 1$). By $W^{t}$ we denote the the total weights of all balls allocated after the first $t \geq 0$ allocations, so $W^t := \sum_{i=1}^n x_i^{t} = \sum_{s=1}^t w^s$. The normalized loads are $\tilde{x}_i^{t} := x_i^t - \frac{W^t}{n}$, and with $y_i^t$ being again the decreasingly sorted, normalized load vector, we have $\Gap(t)=y_1^t$. 

The weight of each ball will be drawn  independently from a fixed distribution $W$ over $[0,\infty)$. Following~\cite{PTW15}, we assume that the distribution $W$ satisfies:
\begin{itemize}
  \item $\ex{W} = 1$.
  \item $\ex{e^{\lambda W} } < \infty $ for some constant $\lambda > 0$.
\end{itemize}
It is clear that when $\ex{W} = \Theta(1)$, by scaling $W$, we can always achieve $\ex{W}=1$.
Specific examples of distributions satisfying above conditions (after scaling) are the geometric, exponential, binomial and Poisson distributions.

Similar to the arguments in~\cite{PTW15}, the above two assumptions can be used to prove that:
\begin{restatable}{lem}{LemBoundedWeightMoment}\label{lem:bounded_weight_moment}
There exists $S := S(\lambda) \geq \max(1, 1/\lambda)$, such that for any  $\alpha \in (0, \min(\lambda/2, 1))$ and any $\kappa \in [-1,1]$,
\[
\Ex{e^{\alpha \cdot \kappa \cdot W}} \leq 1 + \alpha \cdot \kappa + S \alpha^2 \cdot \kappa^2.
\]
\end{restatable}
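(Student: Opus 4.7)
The plan is to bound the moment generating function via a second-order Taylor expansion and then absorb the remainder into a constant depending only on $\lambda$. Specifically, I would start from the pointwise inequality
\[
 e^{y} - 1 - y \leq \tfrac{y^{2}}{2} \cdot e^{|y|}, \qquad y \in \R,
\]
which follows from Taylor's theorem (the remainder integral is at most $\tfrac{y^2}{2} e^{|y|}$). Applying this with $y = \alpha \kappa W$ and using that $W \geq 0$, $|\kappa| \leq 1$, and $\alpha > 0$ so that $|\alpha\kappa W| \leq \alpha W$, I get
\[
 e^{\alpha \kappa W} - 1 - \alpha \kappa W \,\leq\, \tfrac{\alpha^{2} \kappa^{2}}{2} \cdot W^{2} \cdot e^{\alpha W}.
\]

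Next I would take expectations and use $\Ex{W}=1$ to obtain
\[
 \Ex{e^{\alpha \kappa W}} \,\leq\, 1 + \alpha \kappa + \tfrac{\alpha^{2} \kappa^{2}}{2} \cdot \Ex{W^{2} e^{\alpha W}}.
\]
The remaining task is to bound $\Ex{W^{2} e^{\alpha W}}$ by a constant independent of $\alpha$. Since $\alpha < \lambda/2$, monotonicity gives $\Ex{W^{2} e^{\alpha W}} \leq \Ex{W^{2} e^{\lambda W / 2}}$. Because $x^{2} e^{-\lambda x / 2}$ is bounded on $[0,\infty)$ by some $c_{\lambda} < \infty$, it follows that $W^{2} e^{\lambda W/2} \leq c_{\lambda} \cdot e^{\lambda W}$, and therefore
\[
 \Ex{W^{2} e^{\alpha W}} \,\leq\, c_{\lambda} \cdot \Ex{e^{\lambda W}} \,<\, \infty,
\]
by the assumption on $W$. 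Finally, I would set
\[
 S \,:=\, \max\!\Bigl( 1, \tfrac{1}{\lambda}, \tfrac{c_{\lambda}}{2} \cdot \Ex{e^{\lambda W}} \Bigr),
\]
which depends only on $\lambda$ and the distribution $W$. Taking the maximum with $\max(1,1/\lambda)$ (which only enlarges the right-hand side of the target inequality) ensures the stated lower bound on $S$ required for later applications.

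The proof is essentially routine calculus; the only mild obstacle is the polynomial-versus-exponential domination argument needed to pass from $\Ex{e^{\lambda W}} < \infty$ to finiteness of $\Ex{W^{2} e^{\lambda W / 2}}$, and the bookkeeping to make $S$ depend only on $\lambda$ while also satisfying the requirement $S \geq \max(1, 1/\lambda)$. Both are handled by the explicit choice of $S$ above.
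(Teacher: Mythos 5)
Your proof is correct, and it takes a genuinely different route from the paper's. The paper works directly with the moment generating function $M(z) = \Ex{e^{zW}}$: it applies Taylor's theorem with Lagrange remainder to $M$, reducing the problem to bounding $M''(\xi) = \Ex{W^2 e^{\xi W}}$ for some $\xi \in [-\alpha,\alpha]$; it then bounds this via Cauchy--Schwarz as $\sqrt{\Ex{W^4}\Ex{e^{2\xi W}}}$, applies the AM--GM inequality, and invokes a separate auxiliary lemma to show $\Ex{W^4} < \infty$. You instead apply a pointwise Taylor bound $e^y - 1 - y \leq \tfrac{y^2}{2}e^{|y|}$ to the random variable $y = \alpha\kappa W$ before taking expectations, and then bound $\Ex{W^2 e^{\alpha W}}$ directly by the polynomial-versus-exponential domination $x^2 \leq c_\lambda e^{\lambda x/2}$ on $[0,\infty)$ together with $\alpha < \lambda/2$. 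Your route avoids both Cauchy--Schwarz and the fourth-moment lemma, so it is shorter and more elementary; the price is that it relies on $W \geq 0$ for the step $|\alpha\kappa W| \leq \alpha W$, which the MGF-level argument does not explicitly need (though both proofs use it elsewhere). Both yield an $S$ depending on $\lambda$ and on $\Ex{e^{\lambda W}}$; you state this dependence more honestly than the paper's shorthand $S(\lambda)$.
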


\begin{proof}
This proof closely follows the argument in~\cite[Lemma 2.1]{PTW15}. Let $M(z) = \Ex{ e^{z W}}$, then using Taylor's Theorem (mean value form remainder),  for any $z \in [-\alpha,\alpha]$ there exists $\xi \in [-\alpha, \alpha]$ such that
\[
M(z) = M(0) + M'(0) \cdot z + M''(\xi) \cdot \frac{1}{2} \cdot z^2 = 1 + z + M''(\xi) \cdot \frac{1}{2} \cdot z^2.
\]
By the assumptions on $\alpha$ and $\lambda$,
\begin{align*}
M''(\xi) &= \ex{W^2 e^{\xi W}} \\ &\stackrel{(a)}{\leq} \sqrt{\ex{W^4 } \cdot \ex{e^{2\xi W} }} \\ &\stackrel{(b)}{\leq} \frac{1}{2} \cdot \left(
 \ex{W^4} + \ex{e^{2\xi W} }
\right) \\
& \stackrel{(c)}{\leq} \frac{1}{2} \cdot 
\Big(
 \Big( \frac{8}{\lambda} \cdot \log\Big( \frac{8}{\lambda} \Big) \Big)^{4} + \ex{ e^{\lambda W}} + \ex{ e^{\lambda W}}
\Big).
\end{align*}
where $(a)$ uses the Cauchy-Schwartz inequality $| \Ex{ X \cdot Y}| \leq \sqrt{\Ex{X^2} \Ex{Y^2} }$ for random variables $X$ and $Y$, $(b)$ uses a mean inequality, and $(c)$ uses \cref{lem:s_bound}. Now defining 
\[
S:=2 \cdot \max\left\lbrace  \left(\frac{8}{\lambda} \cdot \log\Big( \frac{8}{\lambda} \Big) \right)^{4} , 2 \cdot \ex{ e^{\lambda W}}, 1/2 \right\rbrace,
\] and choosing $z:=\kappa \cdot \alpha$,
the lemma follows.
\end{proof}

We will now describe the allocation of weighted balls into bins using a batch size of $ b \geq n$. For the sake of concreteness, let us first describe the batched model if the allocation is done using \TwoChoice. For a given batch size consisting of $b$ consecutive balls, each ball of the batch performs the following. First, it samples two bins $i_1$ and $i_2$ and compares the load the two bins had at the beginning of the batch (let us denote the bin which has less load by $i$). Secondly, a weight is sampled from the distribution $W$. Then a weighted ball is added to bin $i$. Recall that since the load information is only updated at the beginning of the batch, all allocations of the $b$ balls within the same batch can be performed in parallel.

In the following, we will use a more general framework, where the process of sampling (one or more) bins and then deciding where to allocate the ball to is described by a probability vector $p$ over the $n$ bins (\cref{sec:basic_notation} and \cref{sec:class_of_processes}). Also for the analysis, it will be convenient to focus on the normalized and sorted load vector $y$, which is why the definition below is based on $y$ rather than the actual load vector $x$.

\begin{framed}
\vspace{-.45em} \noindent
\underline{Batched Allocation with Weights}\\
\textsf{Parameters:} Batch size $b \geq n$, probability vector $p$, weight distribution $W$.%
\\
\textsf{Iteration:} For each $t = 0 \cdot b, 1 \cdot b, 2 \cdot b, \ldots$:
\begin{enumerate}\itemsep0pt
    \item Sample $b$ bins $i_1,i_2,\ldots,i_b$ from $[n]$ following $p$.
    \item Sample $b$ weights $w^{t+1},w^{t+2},\ldots,w^{t+b}$ from $W$.
    \item Update for each $i \in [n]$, 
    \[
    z_{i}^{t+b}=y_{i}^{t} + \sum_{j=1}^b w^{t+j} \cdot \mathbf{1}_{i_j=i} - \frac{1}{n} \cdot \sum_{j=1}^b w^{t+j}.
    \]
    \item Let $y^{t+b}$ be the vector $z^{t+b}$, sorted decreasingly.
\end{enumerate}
\vspace{-0.4cm}
\end{framed}

We also look at the version of the processes that performs random tie-breaking between bins of the same load. For $b = 1$, this makes no observable difference to the process, but for multiple steps, this effectively averages out the probability over (possibly) multiple bins that have the same load. This would, for instance, correspond to \TwoChoice, randomly deciding between the two bins if they have the same load. In particular, if $p$ is the original probability vector, then the one with random tie-breaking is $\tilde{p}(y^t)$ (for $t$ being the beginning of the batch), where
\begin{equation} \label{eq:averaging_pi}
\tilde{p}_i(y^t) := \frac{1}{|\{ j \in [n] : y_j^t = y_i^t \}|} \cdot \sum_{j \in [n] : y_j^t = y_i^t} p_j, \quad \text{for }i \in [n].
\end{equation}

\begin{framed}
\vspace{-.45em} \noindent
\underline{Batched Allocation with Weights and Random Tie-Breaking}\\
\textsf{Parameters:} Batch size $b \geq n$, probability vector $p$, weight distribution $W$.%
\\
\textsf{Iteration:} For each $t = 0 \cdot b, 1 \cdot b, 2 \cdot b, \ldots$:
\begin{enumerate}\itemsep0pt
    \item Let $\tilde{p} := \tilde{p}(y^t)$ be the probability vector accounting for random tie-breaking.
    \item Sample $b$ bins $i_1,i_2,\ldots,i_b$ from $[n]$ following $\tilde{p}$.
    \item Sample $b$ weights $w^{t+1},w^{t+2},\ldots,w^{t+b}$ from $W$.
    \item Update for each $i \in [n]$, 
    \[
    z_{i}^{t+b}=y_{i}^{t} + \sum_{j=1}^b w^{t+j} \cdot \mathbf{1}_{i_j=i} - \frac{1}{n} \cdot \sum_{j=1}^b w^{t+j}.
    \]
    \item Let $y^{t+b}$ be the vector $z^{t+b}$, sorted decreasingly.
\end{enumerate}
\vspace{-0.4cm}
\end{framed}

Following~\cite{BCE12}, our goal will be to bound the gap at the end of a batch, i.e., $m$ will be a multiple of $b$.

Next we prove the following simple lemma, that high probability gap bounds at the end of batches imply high probability gap bounds at all steps in between, for batch sizes $b = \poly(n)$. Thus, \cref{thm:weak_gap_bound} and \cref{thm:strong_gap_bound} only prove gap bounds at the end of batches.

\begin{lem}[Smoothing argument] \label{lem:smoothing_argument}
Consider any allocation process in the weighted batched setting with $b \geq n$ and a weight distribution satisfying \cref{lem:bounded_weight_moment} for some constants $\lambda > 0$ and $S := S(\lambda) \geq 1$. If for some $m$ being a multiple of $b$, some (not necessarily constant) $c > 0$ and constant $\kappa > 0$,
\[
\Pro{y_1^m \leq c} \geq 1 - n^{-\kappa},
\]
then for any $t \in [m - b, m]$,
\[
\Pro{y_1^t \leq c + \frac{2 \ln(S)}{\lambda} \cdot \frac{b}{n}} \geq 1 - 2n^{-\kappa}.
\]
Similarly, if for some (not necessarily constant) $c > 0$,
\[
\Pro{-y_n^m \leq c} \geq 1 - n^{-\kappa},
\]
then there exists a constant $\kappa := \kappa(S) > 0$, such that for any $t \in [m, m + b]$,
\[
\Pro{-y_n^t \leq c + \frac{2 \ln(S)}{\lambda} \cdot \frac{b}{n}} \geq 1 - 2n^{-\kappa}.
\]
\end{lem}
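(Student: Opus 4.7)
The plan is to reduce the claim to (i) a deterministic monotonicity inequality comparing $y_1^t$ with $y_1^m$ (respectively $-y_n^t$ with $-y_n^m$), and (ii) a Chernoff-style concentration on the total weight processed in the at most $b$ steps separating $t$ and $m$. Both ingredients are quite light; the lemma is essentially a packaging result that says ``inside a batch nothing too violent happens.''

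For the first statement, let $i^\ast$ be a bin attaining $y_1^t = x_{i^\ast}^t - W^t/n$. Since individual bin loads are monotone nondecreasing in time, $x_{i^\ast}^m \geq x_{i^\ast}^t$, so
$$y_1^m \;\geq\; x_{i^\ast}^m - \tfrac{W^m}{n} \;\geq\; x_{i^\ast}^t - \tfrac{W^m}{n} \;=\; y_1^t - \tfrac{W^m - W^t}{n},$$
which rearranges to $y_1^t \leq y_1^m + (W^m - W^t)/n$. An analogous argument applied to a bin attaining $y_n^t$ for $t \in [m, m+b]$ gives the symmetric inequality $-y_n^t \leq -y_n^m + (W^t - W^m)/n$. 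These are deterministic and use nothing beyond the monotonicity of $x_i^\bullet$ and $W^\bullet$.

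For the concentration step, both $W^m - W^t$ and $W^t - W^m$ are sums of at most $b$ i.i.d.\ copies of $W$. Picking a fixed $\alpha \in (0, \min(\lambda/2, 1))$ small enough that $1 + S\alpha < \tfrac{2\ln S}{\lambda}$ (permissible because the explicit $S$ constructed in the proof of \cref{lem:bounded_weight_moment} satisfies $S \geq 4 e^\lambda$ by Jensen, so $\tfrac{2\ln S}{\lambda} > 2$), invoking \cref{lem:bounded_weight_moment} termwise, and applying Markov's inequality to $e^{\alpha \cdot (\text{sum})}$ yields
$$\Pr\!\left[\,\sum_{s} w^s \,\geq\, \tfrac{2\ln S}{\lambda}\cdot b\,\right] \;\leq\; \exp\!\Big(b \cdot \alpha \cdot \big(1 + S\alpha - \tfrac{2\ln S}{\lambda}\big)\Big) \;=\; e^{-\Omega(b)} \;\leq\; e^{-\Omega(n)},$$
which, since $b \geq n$, is smaller than $n^{-\kappa}$ for any fixed $\kappa > 0$ and sufficiently large $n$. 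A union bound with the assumed failure event $\{y_1^m > c\}$ (probability at most $n^{-\kappa}$) closes the first part with the claimed $2n^{-\kappa}$, and the second part follows identically via the symmetric inequality.

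The main obstacle, which is really just bookkeeping, is choosing $\alpha$ so that the negative term $\tfrac{2\ln S}{\lambda}$ dominates $1+S\alpha$ in the Chernoff exponent; one has to check that the specific $S$ produced by the proof of \cref{lem:bounded_weight_moment} leaves enough room to do so uniformly in $\lambda$. The conceptual content of the proof is entirely the monotonicity inequality $y_1^t \leq y_1^m + (W^m - W^t)/n$, which formalizes that within a single batch the normalized maximum load can only drift by the normalized total processed weight.
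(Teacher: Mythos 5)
Your proof is correct and follows essentially the same route as the paper: a deterministic monotonicity inequality reducing $y_1^t$ (resp. $-y_n^t$) to $y_1^m$ (resp. $-y_n^m$) plus the normalized increment of total weight across the batch, followed by a Chernoff bound on that increment and a union bound. The only cosmetic difference is that the paper invokes its prepackaged Chernoff bound (\cref{lem:chernoff}) directly rather than re-deriving the exponential moment bound from \cref{lem:bounded_weight_moment} as you do; your monotonicity inequality $y_1^t \leq y_1^m + (W^m - W^t)/n$ makes explicit the step the paper phrases more informally.
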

\begin{proof}
Applying \cref{lem:chernoff} for the $k := b \geq n$ weights $(w^{m - b + j})_{j=1}^b$,
\[
\Pro{\sum_{j = 1}^b w^{m - b + j} \leq \frac{2 \ln(S)}{\lambda} \cdot b } \geq 1 - e^{-\Omega(b)}.
\]
Hence, \Whp~the mean load does not increase by more than $\frac{2 \ln(S)}{\lambda} \cdot \frac{b}{n}$. Therefore by the union bound, for any $t \in [m - b, m]$, any bin load can increase by at most that amount, 
\[
\Pro{y_1^t \leq c + \frac{2 \ln(S)}{\lambda} \cdot \frac{b}{n}} \geq 1 - e^{-\Omega(b)} - n^{-\kappa} \geq 1 - 2n^{-\kappa},
\]
Similarly, by the union bound, for any $t \in [m, m + b]$, any bin load can decrease by at most $\frac{2 \ln(S)}{\lambda} \cdot \frac{b}{n}$,
\[
\Pro{-y_n^t \leq c + \frac{2 \ln(S)}{\lambda} \cdot \frac{b}{n}} \geq 1 - e^{-\Omega(b)} - n^{-\kappa} \geq 1 - 2n^{-\kappa}.
\]
which concludes the claim.
\end{proof}

\section{Analysis of the Hyperbolic Cosine Potential} \label{sec:general_hyperbolic_pot}

In this section we generalize~\cite[Theorem 2.10]{PTW15}. This generalization allows us to apply it to multi-step changes in \cref{sec:weak_bound}, handle general quantile conditions (arbitrary constant $\delta > 0$ instead of $\delta = 1/3$) and obtain tighter bounds of $\Oh(n)$ on the expectation of the potential, which we make use of in \cref{sec:refined}. Further, using this generalization, we obtain bounds on graphical allocation with weights and batches and a tighter upper bound for the $(1+\beta)$ process for very small $\beta$ (\cref{sec:graphical}). The \textit{hyperbolic cosine potential} is defined as
\begin{align}
\Gamma^t := \Phi^t + \Psi^t := \sum_{i = 1}^n e^{\alpha y_i^t} + \sum_{i = 1}^n e^{-\alpha y_i^t}, \label{eq:hyperbolic}
\end{align}
for $\alpha > 0$. We also decompose $\Gamma^t$ across bins as follows, and define for any bin $i \in [n]$:
\[
 \Gamma_i^t := \Phi_i^t + \Psi_i^t = e^{\alpha y_i^t} + e^{-\alpha y_i^t}.
\]
Further, we use the following shorthands to denote the changes in the potentials $\Delta\Phi_i^t := \Phi_i^{t+1} - \Phi_i^t$, $\Delta\Psi_i^t := \Psi_i^{t+1} - \Psi_i^{t}$ and $\Delta\Gamma_i^t := \Gamma_i^{t+1} - \Gamma_i^{t}$. 

The next result holds for any probability vector $p$ satisfying condition $\mathcal{C}_1$ and any load vector $x$. As, we show in \cref{cor:main_ptw}, this implies upper bounds on the expected change of the $\Gamma$ potential, under certain conditions.
\begin{thm} \label{thm:main_ptw}
Consider any probability vector $p$ satisfying condition $\mathcal{C}_1$ for constant $\delta \in (0, 1)$ and $\eps > 0$, and any load vector $x$ with $\Phi :=\Phi(x)$, $\Psi :=\Psi(x)$ and $\Gamma :=\Gamma(x)$. Further for some $K > 0$ define,
\[
\Delta\overline{\Phi} := \sum_{i=1}^n \Delta\overline{\Phi}_i = \sum_{i = 1}^n\Phi_i \cdot \Big(\Big(p_i - \frac{1}{n}\Big) \cdot \alpha + K \cdot \frac{\alpha^2}{n}\Big),
\]
and
\[
\Delta\overline{\Psi} := \sum_{i=1}^n \Delta\overline{\Psi}_i = \sum_{i = 1}^n \Psi_i \cdot \Big(\Big(\frac{1}{n} - p_i\Big) \cdot \alpha + K \cdot \frac{\alpha^2}{n}\Big).
\]
Then, there exists a constant $c := c(\delta) > 0$, such that for any $0 < \alpha < \min(1, \frac{\eps\delta}{8K})$,
\[
\Delta\overline{\Gamma} := \Delta\overline{\Phi} + \Delta\overline{\Psi} \leq -\frac{\eps\delta}{8} \cdot \frac{\alpha}{n} \cdot \Gamma + c \cdot \eps \cdot \alpha.
\]
\end{thm}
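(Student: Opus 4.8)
The plan is to split the bins into the "overloaded" block $i \le \delta n$ and the "underloaded" block $i > \delta n$, and to bound $\Delta\overline{\Phi}$ and $\Delta\overline{\Psi}$ separately, exploiting condition $\mathcal{C}_1$ on each block. Write $\Delta\overline{\Phi} = \alpha \sum_i \Phi_i (p_i - 1/n) + K\frac{\alpha^2}{n}\Phi$. The first term is a "rearrangement"-type sum: since $\Phi_i = e^{\alpha y_i}$ is non-increasing in $i$ (the $y_i$ are sorted decreasingly) and $\sum_i (p_i - 1/n) = 0$, I want to show $\sum_i \Phi_i(p_i - 1/n) \le -\frac{\eps\delta}{c'}\cdot\frac{\alpha}{n}\Phi \cdot \frac{1}{\alpha}$ — more precisely a negative drift proportional to $\Phi$ minus a constant. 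The clean way: use Abel summation / the prefix-sum form of $\mathcal{C}_1$. Since $\sum_{i=1}^k p_i \le (1-\eps)\frac{k}{n}$ for $k \le \delta n$, and $\Phi_i$ is decreasing, the weighted prefix sums are "pulled down" on the heavy side; a summation-by-parts argument converts $\sum_{i=1}^{\delta n}\Phi_i(p_i-1/n)$ into something $\le -\eps \cdot (\text{something comparable to } \sum_{i \le \delta n}\Phi_i)$, up to boundary terms at $i = \delta n$ which are $\Oh(1)$ since $|y_{\delta n}|$ is small relative to... actually $y_{\delta n}$ need not be small, so I must be careful — the boundary term is $\Phi_{\delta n} \cdot \Oh(1)$, which I will absorb either into the $-\frac{\eps\delta}{8}\frac{\alpha}{n}\Gamma$ term or, if $\Phi_{\delta n}$ is large, note it is then comparable to the negative drift from the underloaded side.

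For the underloaded block $i > \delta n$ I use the suffix-sum part of $\mathcal{C}_1$: $\sum_{i=k}^n p_i \ge (1 + \eps\frac{\delta}{1-\delta})\frac{n-k+1}{n}$, i.e. these bins collectively receive *more* than their fair share $1/n$ each, which is good for $\Psi$ (it pushes underloaded bins up, shrinking $e^{-\alpha y_i}$) but is the "bad" direction for $\Phi$ on those bins. The point is that on the underloaded side $\Phi_i = e^{\alpha y_i} \le 1$ for $y_i \le 0$, so $\sum_{i > \delta n}\Phi_i(p_i - 1/n) \le \sum_{i>\delta n}|p_i - 1/n| \cdot 1 \le \Oh(1)$ using $\mathcal{C}_2$ (each $p_i \le C/n$, so $|p_i - 1/n| \le (C+1)/n$, summing gives $\Oh(1)$) — this is exactly the $+c\eps\alpha$ slack (after multiplying by $\alpha$). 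Symmetrically, $\Delta\overline{\Psi}$: on the underloaded side use the suffix-sum condition to get negative drift $\le -\frac{\eps\delta}{c'}\frac{\alpha}{n}\Psi^{\text{under}}$ via Abel summation, and on the overloaded side bound $\Psi_i = e^{-\alpha y_i} \le 1$ crudely, contributing another $\Oh(\alpha)$ error. Then the quadratic terms $K\frac{\alpha^2}{n}(\Phi+\Psi) = K\frac{\alpha^2}{n}\Gamma$ are dominated by half of the linear negative drift precisely because $\alpha < \frac{\eps\delta}{8K}$, leaving $-\frac{\eps\delta}{8}\frac{\alpha}{n}\Gamma + c\eps\alpha$.

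I should be slightly more careful about one subtlety: the negative drift I extract from the overloaded side is proportional to $\sum_{i\le\delta n}\Phi_i$, not to all of $\Phi$; but $\sum_{i > \delta n}\Phi_i \le (1-\delta)n \cdot 1 = \Oh(n)$ since those terms are $\le 1$, so $\Phi \le \sum_{i\le\delta n}\Phi_i + \Oh(n)$, and converting "$-c_1\frac{\alpha}{n}\sum_{i\le\delta n}\Phi_i$" into "$-c_2\frac{\alpha}{n}\Phi + \Oh(\alpha)$" costs only another additive $\Oh(\alpha)$ term. The same $\Oh(n)$-bound trick handles $\Psi$ versus $\sum_{i>\delta n}\Psi_i$. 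Collecting all the additive errors, each is $\Oh(\eps\alpha)$ (the $\eps$ factor appears because the "bad-side" sums $\sum|p_i-1/n|$ can themselves be bounded by $\Oh(\eps)$ when one tracks constants, or one simply absorbs into a generous constant $c(\delta)$), giving the claimed bound.

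The main obstacle I anticipate is the Abel-summation step on the overloaded block and handling its boundary term at $i = \delta n$ cleanly: one needs the prefix-sum form of $\mathcal{C}_1$ (which is exactly why the paper introduced $\mathcal{C}_1$ rather than the pointwise $\mathcal{D}_1$), and one must argue that the telescoping leftover $\Phi_{\delta n}\cdot(\text{prefix deficit at }\delta n)$ is either negligible or favorably signed. A convenient device is to compare against the "fair" vector $1/n$ and write $\sum_{i\le\delta n}(\Phi_i - \Phi_{\delta n})(p_i - 1/n)$, where now every factor $\Phi_i - \Phi_{\delta n} \ge 0$ is non-increasing and vanishes at the endpoint, so plain Abel summation with the prefix bound gives a clean negative term with no boundary leftover, and the discarded piece $\Phi_{\delta n}\sum_{i\le\delta n}(p_i-1/n) \le 0$ only helps. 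Everything else is the routine bookkeeping of collecting $\Oh(\alpha)$ error terms and invoking $\alpha < \eps\delta/(8K)$ to swallow the quadratic term.
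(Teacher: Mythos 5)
Your proposal has a genuine gap: you repeatedly conflate ``index $i > \delta n$'' with ``underloaded bin $y_i \leq 0$.'' These coincide only when there are at most $\delta n$ overloaded bins. When more than $\delta n$ bins are overloaded — which is entirely possible and is the hard case — the bins with indices $\delta n < i \leq |\{j : y_j \geq 0\}|$ satisfy $y_i > 0$, so $\Phi_i = e^{\alpha y_i}$ can be arbitrarily large (e.g.\ take half the bins at $+L$ and half at $-L$ for huge $L$). Your claimed bounds ``$\sum_{i > \delta n}\Phi_i(p_i - 1/n) \leq \sum_{i > \delta n}|p_i - 1/n| \cdot 1$'' and ``$\sum_{i > \delta n}\Phi_i \leq (1-\delta)n$'' both assert $\Phi_i \leq 1$ for $i > \delta n$ and are therefore false. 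Worse, condition $\mathcal{C}_1$ only tells you the \emph{suffix} sums of $p$ over $i > \delta n$ exceed the fair share, so these same bins receive $p_i$ that may be as large as $C/n$: the sign of $\Phi_i(p_i - 1/n)$ is adverse exactly where $\Phi_i$ is uncontrolled.

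Your Abel-summation computation on the prefix $i \leq \delta n$ is correct and indeed yields a clean $-\frac{\eps}{n}\sum_{i \leq \delta n}\Phi_i$ with no boundary leftover; that part of your argument matches what the paper extracts for the ``good'' overloaded bins $\mathcal{G}_+$. What is missing is how to cancel the positive contribution of the overloaded bins beyond index $\delta n$ (the paper's $\mathcal{B}_+$). You gesture at ``if $\Phi_{\delta n}$ is large, note it is then comparable to the negative drift from the underloaded side,'' but this is precisely the nontrivial step, and it cannot be a one-liner: it requires a case split on $|\mathcal{B}_+|$ and, when $|\mathcal{B}_+|$ is large, a quantitative use of the conservation identity $\sum_i y_i = 0$ to show that the underloaded bins must then carry enough negative mass that $\sum_{i \in \mathcal{G}_-}\Psi_i$ dominates the $\Phi$-excess of $\mathcal{B}_+$ (this is the paper's Case B.2 and needs the $z_2$-threshold subcase). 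Without that machinery, the additive ``error'' you are trying to absorb is not $\Oh(\eps\alpha)$ but can scale with $\Gamma$ itself, and the proof does not close.
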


Before presenting the proof, we begin with an outline of the key observations in the proof. Let $\Delta\overline{\Gamma}_i := \Delta\overline{\Phi}_i + \Delta\overline{\Psi}_i$. %

\begin{enumerate}
    \item It suffices to analyze the potential for a process with probability vector,
\begin{align}
q_i := \begin{cases}
 \frac{1 - \eps}{n} & \text{if } i \leq \delta n, \\
 \frac{1 + \tilde{\eps}}{n} & \text{otherwise},
\end{cases}
\end{align}
where $\tilde{\eps} := \eps \cdot \frac{\delta}{1 - \delta}$, as this maximizes the terms $\Delta\overline{\Phi}$ and $\Delta\overline{\Psi}$.

\item For any bin $i \in [n]$, there is one dominant term in $\Gamma_i$: for overloaded bins it is $\Phi_i$ (and $\Psi_i \leq 1$) and for underloaded bins it is $\Psi_i$ (and $\Phi_i \leq 1$). The change of the smaller term is absorbed by the change of the dominant and the additive term, i.e., $c \cdot \eps \cdot \alpha$. 

\item It suffices to show that \begin{align*}
\sum_{i = 1}^n & \left( \Phi_i \cdot \Big( p_i - \frac{1}{n}\Big) \cdot \alpha + \Psi_i \cdot \Big(  \frac{1}{n} - p_i \Big) \cdot \alpha\right) \leq - \frac{\eps \delta}{4} \cdot \frac{\alpha}{n} \cdot \Gamma + c \cdot \eps \cdot \alpha,
\end{align*}
as half of the decrease term, i.e., $-\frac{\eps \delta}{8} \cdot \frac{\alpha}{n} \cdot \Gamma$ will counteract the increase term $K \cdot \alpha^2 \cdot \Gamma$ for sufficiently small $\alpha$. So, the main focus is on the coefficients of $\alpha$.

\item Any overloaded bin $i \in [n]$ with $i \leq \delta n$, satisfies $p_i = \frac{1-\eps}{n}$ and so $\Delta\overline{\Phi}_i \leq -\Phi_i \cdot \frac{\alpha\eps}{n} + \Oh(\alpha^2)$. We call these the set $\mathcal{G}_+$ of \textit{good overloaded bins}. The rest of the overloaded bins are the \textit{bad overloaded bins} $\mathcal{B}_+$ and these still satisfy $\Delta\overline{\Phi}_i \leq + \Phi_i \cdot \frac{\alpha\tilde{\eps}}{n} + \Oh(\alpha^2)$. 

Similarly, \textit{good underloaded bins} $\mathcal{G}_-$ with $i > \delta n$, satisfy $\Delta\overline{\Psi}_i \leq -\Psi_i \cdot \frac{\alpha\tilde{\eps}}{n} + \Oh(\alpha^2)$ and \textit{bad underloaded bins} $\mathcal{B}_-$ satisfy $\Delta\overline{\Psi}_i \leq +\Psi_i \cdot \frac{\alpha\eps}{n} + \Oh(\alpha^2)$.

\item We can either have $\mathcal{B}_+ \neq \emptyset$ or $\mathcal{B}_- \neq \emptyset$ (\cref{fig:general_case}). 

\begin{figure}[H]
    \centering
    \includegraphics[scale=0.6]{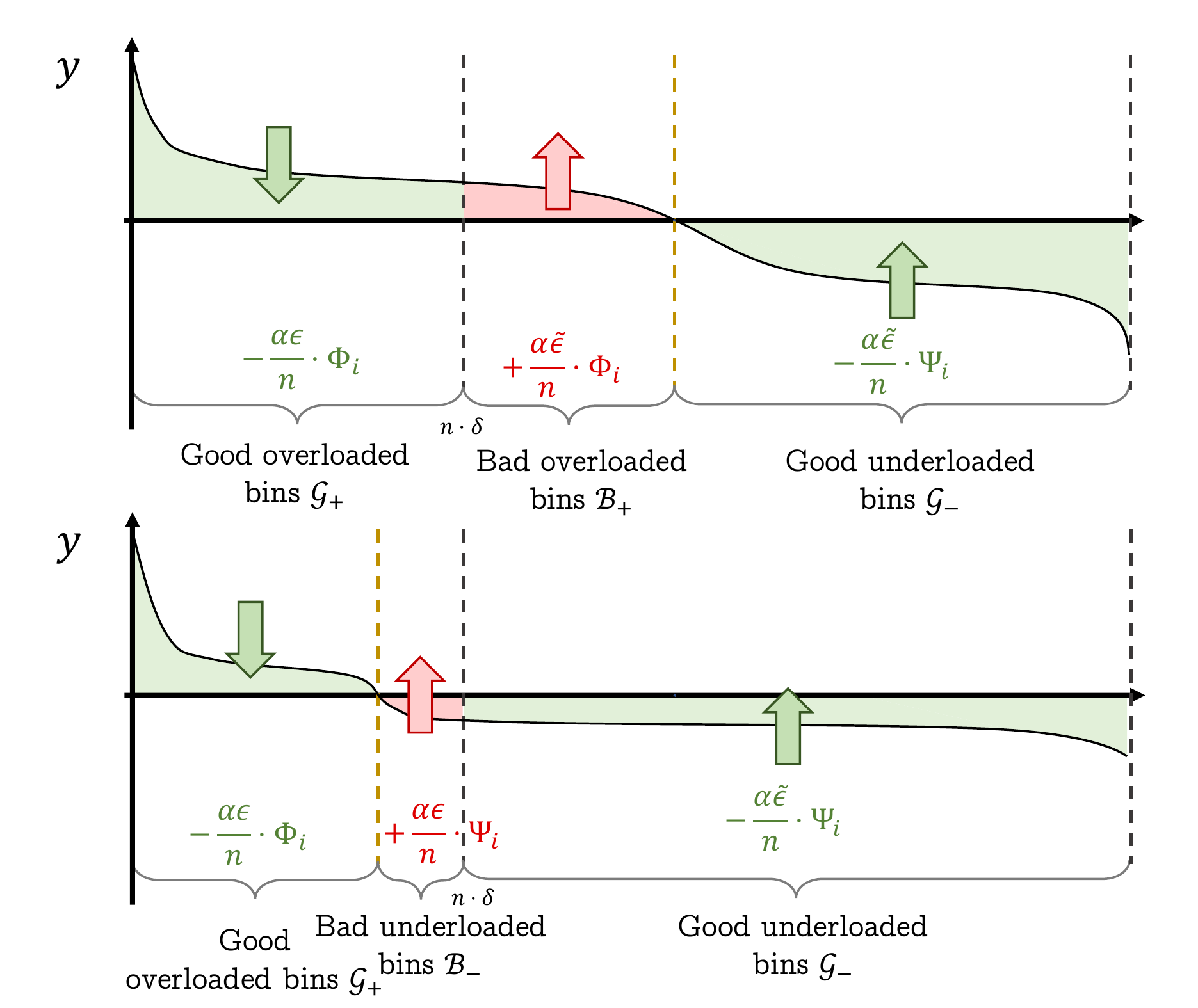}
    \caption{The two cases of bad bins in a configuration and their dominating terms in $\Delta\overline{\Gamma}$ for each of the set of bins.}
    \label{fig:general_case}
\end{figure}

The handling of one case is symmetric to the other due to the symmetric nature of $\Delta \overline{\Phi}$ and $\Delta \overline{\Psi}$ (with $\delta$ being replaced by $1-\delta$). So, from here on we only consider cases with $\mathcal{B}_+ \neq \emptyset$ (and $\mathcal{B}_- = \emptyset$).

\item \textbf{Case A:} When the number of bad overloaded bins is small (i.e., $|\mathcal{B}_+| \leq \frac{n}{2} \cdot (1 - \delta)$), the positive contribution of the bins in $\mathcal{B}_+$ is counteracted by the negative contribution of the bins in $\mathcal{G}_+$ (\cref{fig:case_a}). This is shown by analyzing the worst-case, where all bad bins are equal to $y_{\delta n}$. All underloaded bins are good and so on aggregate we get a decrease.

\begin{figure}[H]
    \centering
    \includegraphics[scale=0.6]{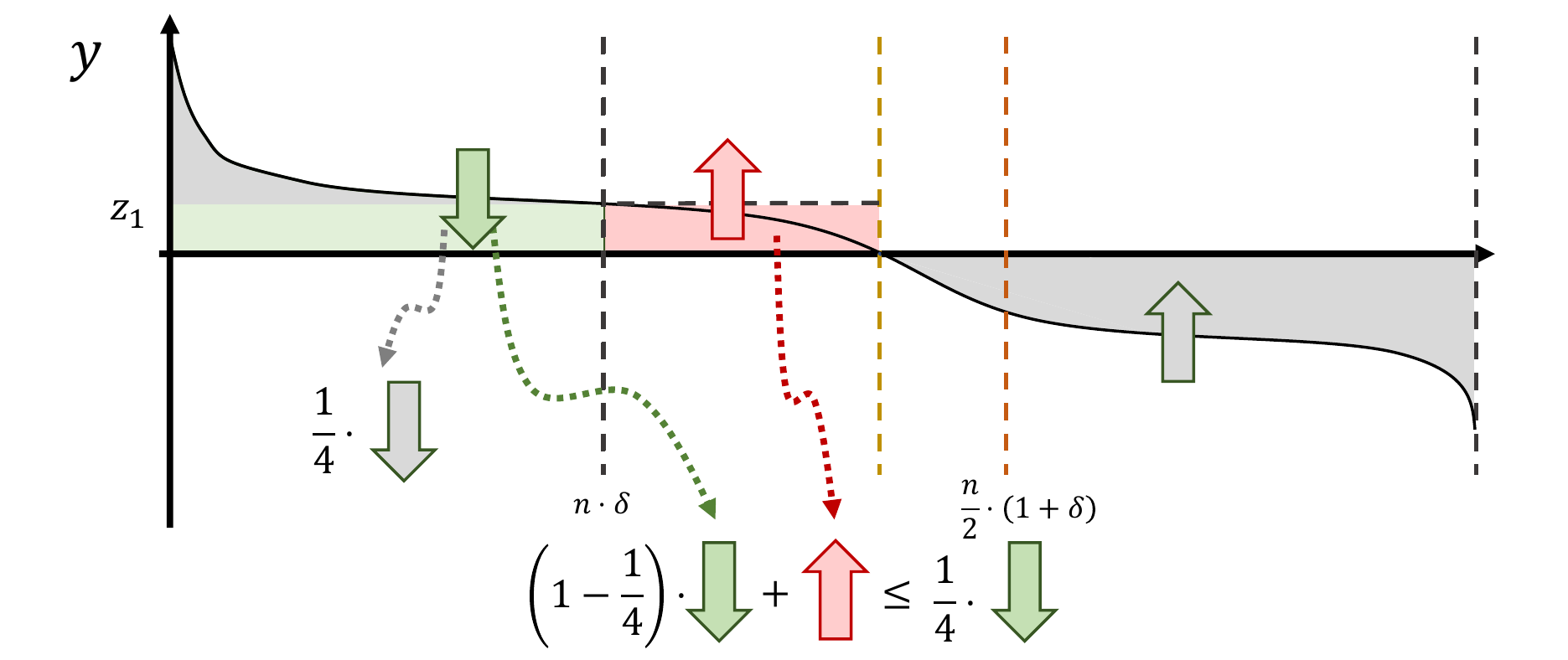}
    \caption{Case A: The positive dominant term in the contribution of bins in $\mathcal{B}_+$ is counteracted by a fraction of the negative contribution term of the good bins $\mathcal{G}_+$.}
    \label{fig:case_a}
\end{figure}

\item \textbf{Case B:} Consider the case when $|\mathcal{B}_+| > \frac{n}{2} \cdot (1 - \delta)$. The positive contribution of the first $\frac{n}{2} \cdot (1 - \delta)$ of the bins $\mathcal{B}_+$, call them $\mathcal{B}_1$, is counteracted by the negative contribution of the bins in $\mathcal{G}_+$ as in Case A. The positive contribution of the remaining bad bins $\mathcal{B}_2$ is counteracted by a fraction of the negative contribution of the bins in $\mathcal{G}_-$. This is because the number of holes in the bins of $\mathcal{G}_-$ are significantly more than the number of bins in $\mathcal{B}_2$. Hence, again on aggregate we get a decrease (\cref{fig:case_b_2}).

\begin{figure}[H]
    \centering
    \includegraphics[scale=0.6]{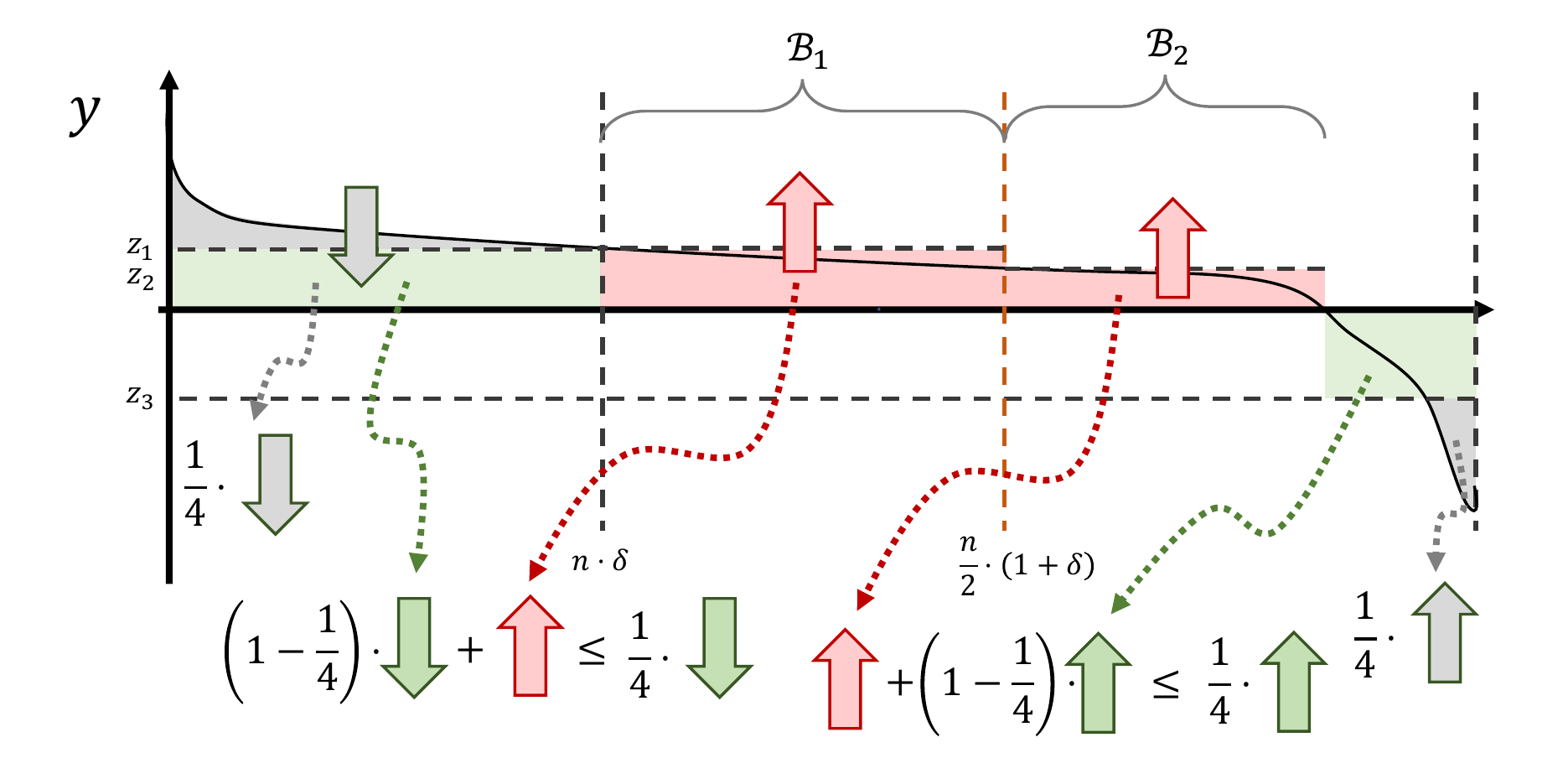}
    \caption{Case B: The dominant change of the bins in $\mathcal{B}_1$ is counteracted by a fraction of the decrease of the bins in $\mathcal{G}_+$ as in Case A. The dominant change of the bins in $\mathcal{B}_2$ is counteracted by a fraction of the decrease of the bins in $\mathcal{G}_-$, when $z_2$ is sufficiently large.}
    \label{fig:case_b_2}
\end{figure}
\end{enumerate}

\begin{proof}
Fix a labeling of the bins so that they are sorted non-increasingly according to their load in $x$. Let $p$ be the probability vector satisfying condition $\mathcal{C}_1$ for some $\epsilon \in (0,1)$ and $\delta \in (0,1)$. Then define another probability vector,
\begin{align}
q_i := \begin{cases}
 \frac{1 - \eps}{n} & \text{if } i \leq \delta n, \\
 \frac{1 + \tilde{\eps}}{n} & \text{otherwise},
\end{cases} \label{eq:probvector_maximized}
\end{align}
where $\tilde{\eps} := \eps \cdot \frac{\delta}{1 - \delta}$. Thanks to the definition of $\tilde{\epsilon}$, it is clear that this is a probability vector. Further, for any $1 \leq k \leq \delta n$,
\[
 \sum_{i=1}^k p_i \leq \sum_{i=1}^k q_i, 
\]
and any $ \delta \cdot n + 1 \leq k \leq n$,
\[
 \sum_{i=k}^{n} p_i \geq \sum_{i=k}^{n} q_i.
\]
This implies that $p$ is majorized by $q$.
Since $\Phi_i$ and $\Psi_i$ are non-increasing in $i \in [n]$, using \cref{lem:quasilem2}, the terms 
\[
\Delta\overline{\Phi} = \sum_{i = 1}^n \Phi_i \cdot \Big(\Big(p_i - \frac{1}{n}\Big) \cdot \alpha + K \cdot \frac{\alpha^2}{n}\Big),
\]
and
\[
\Delta\overline{\Psi} = \sum_{i = 1}^n \Psi_i \cdot \Big(\Big(\frac{1}{n} - p_i\Big) \cdot \alpha + K \cdot \frac{\alpha^2}{n}\Big)
\]
are larger for $q$ than for $p$.
Hence, from now on, we will be working with $p_i=q_i$ for all $i \in [n]$.

Recall, that we partition overloaded bins $i$ with $y_i \geq 0$ into \textit{good overloaded bins} $\mathcal{G}_+$ with $p_i = \frac{1 - \eps}{n}$ and into \textit{bad overloaded bins} $\mathcal{B}_+$ with $p_i = \frac{1 + \tilde{\eps}}{n}$. These are called good bins, because any bin $i \in \mathcal{G}_+$ satisfies $\Delta\overline{\Phi}_i \leq \Phi_i \cdot \left( - \frac{\alpha\eps}{n} + K \cdot \frac{\alpha^2}{n} \right)$ and since $\Psi_i \leq 1$ for overloaded bins, this will imply the drop condition for $\Gamma_i$. %

\textbf{Case A [$1 \leq |\mathcal{B}_+| \leq \frac{n}{2} \cdot (1 - \delta)$]:} Intuitively, in this case the contribution of the bad bins is counteracted by the contribution of the good overloaded bins (\cref{fig:case_a}). 
To formalize this, let $z_1 := y_{\delta n}$ (by assumption on $|\mathcal{B}_{+}|$, we know that $z_1 \geq 0$). Then for any $i \in \mathcal{G}_+$, $y_i \geq z_1$ and for any $i \in \mathcal{B}_+$, $y_i \leq z_1$. So,
\begin{align}
 \sum_{i \in \mathcal{B}_+} \Delta\overline{\Phi}_i
 & \leq \sum_{i \in \mathcal{B}_+} \Phi_i \cdot \Big( \frac{\alpha\tilde{\eps}}{n} + K \cdot \frac{\alpha^2}{n}\Big) 
 \notag \\ &\leq \sum_{i \in \mathcal{B}_+} e^{\alpha z_1} \cdot \frac{\alpha\tilde{\eps}}{n} + \sum_{i \in \mathcal{B}_+} \Phi_i \cdot K \cdot \frac{\alpha^2}{n} \notag \\
 & \leq \frac{n}{2} \cdot (1 - \delta) \cdot e^{\alpha z_1} \cdot \frac{\alpha\tilde{\eps}}{n} + \sum_{i \in \mathcal{B}_+} \Phi_i \cdot K \cdot \frac{\alpha^2}{n} \notag \\
 & =  e^{\alpha z_1} \cdot \frac{\alpha\eps\delta}{2} + \sum_{i \in \mathcal{B}_+} \Phi_i \cdot K \cdot \frac{\alpha^2}{n}, \label{eq:b_plus_change}
\end{align}
where we have used in the last inequality that  $|\mathcal{B}_+| \geq 1$ implies $|\mathcal{G}_+|=\delta \cdot n$. For bins in $\mathcal{G}_+$,
\begin{align}
\sum_{i \in \mathcal{G}_+} \Delta\overline{\Phi}_i  & \leq \sum_{i \in \mathcal{G}_+} \Phi_i \cdot \Big( - \frac{\alpha\eps}{n} + K \cdot \frac{\alpha^2}{n} \Big) \notag
 \\ &= -\sum_{i \in \mathcal{G}_+} \Phi_i \cdot \frac{\alpha\eps}{4n} -\sum_{i \in \mathcal{G}_+} \Phi_i \cdot \frac{3\alpha\eps}{4n} + \sum_{i \in \mathcal{G}_+} \Phi_i \cdot K \cdot \frac{\alpha^2}{n} \notag \\
 & \leq -\sum_{i \in \mathcal{G}_+} \Phi_i \cdot \frac{\alpha\eps}{4n} -\sum_{i \in \mathcal{G}_+} e^{\alpha z_1} \cdot \frac{3\alpha\eps}{4n} + \sum_{i \in \mathcal{G}_+} \Phi_i \cdot K \cdot \frac{\alpha^2}{n} \notag \\
 & = -\sum_{i \in \mathcal{G}_+} \Phi_i \cdot \frac{\alpha\eps}{4n} - e^{\alpha z_1} \cdot \frac{3\alpha\eps\delta}{4} + \sum_{i \in \mathcal{G}_+} \Phi_i \cdot K \cdot \frac{\alpha^2}{n}. \label{eq:g_plus_change}
\end{align}
Hence, combining \cref{eq:b_plus_change} and \cref{eq:g_plus_change}, the contribution for $\Phi$ of overloaded bins is given by 
\begin{align*}
\sum_{i : y_i \geq 0} \Delta\overline{\Phi}_i & \leq -\sum_{i \in \mathcal{G}_+} \Phi_i \cdot \frac{\alpha\eps}{4n} - e^{\alpha z_1} \cdot \frac{\alpha\eps\delta}{4} + \sum_{i : y_i \geq 0} \Phi_i \cdot K \cdot \frac{\alpha^2}{n}  \\
 & \leq -\sum_{i \in \mathcal{G}_+} \Phi_i \cdot \frac{\alpha\eps}{4n} - \sum_{i \in \mathcal{B}_+} \Phi_i \cdot \frac{\alpha\eps\delta}{2(1-\delta)n} + \sum_{i : y_i \geq 0} \Phi_i \cdot K \cdot \frac{\alpha^2}{n}  \\
 & \leq \sum_{i : y_i \geq 0} \Phi_i \cdot \left( - \frac{\alpha\eps \delta}{4n} +  K \cdot \frac{\alpha^2}{n} \right).
\end{align*}
So, using that $\Psi_i \leq 1$ for overloaded bins,
\begin{align}
\sum_{i : y_i \geq 0} \Delta\overline{\Gamma}_i & = \sum_{i : y_i \geq 0} \Delta\overline{\Phi}_i + \sum_{i : y_i \geq 0} \Delta\overline{\Psi}_i 
  \notag \\
 & \leq \sum_{i : y_i \geq 0} \Phi_i \cdot \Big( - \frac{\alpha \eps \delta}{4n} + K \cdot \frac{\alpha^2}{n}\Big) + \sum_{i : y_i \geq 0} \Psi_i \cdot \Big( \frac{\alpha \tilde{\eps}}{n} + K \cdot \frac{\alpha^2}{n}\Big) \notag \\
 & \leq \sum_{i : y_i \geq 0} \Gamma_i \cdot \Big( - \frac{\alpha \eps \delta}{4n} + K \cdot \frac{\alpha^2}{n}\Big) + \sum_{i : y_i \geq 0} \frac{2\alpha}{n} \cdot \max\Big(\tilde{\eps}, \frac{\eps\delta}{4}\Big). \label{eq:case_a_overloaded}
\end{align}
Since in this case all underloaded bins are good, i.e., for any $i \in [n]$ with $y_i <0$, we have $p_i = \frac{1+\tilde{\epsilon}}{n}$, we have
\begin{align}
\sum_{i : y_i < 0} \Delta\overline{\Psi}_i \leq \sum_{i : y_i < 0} \Psi_i \cdot \Big( - \frac{\alpha \tilde{\eps}}{n} + K \cdot \frac{\alpha^2}{n}\Big). \label{eq:case_a_underloaded_prep}
\end{align}
Combining the contribution across all underloaded bins,
\begin{align}
\sum_{i : y_i < 0} \Delta\overline{\Gamma}_i
 & = \sum_{i : y_i < 0}  \Delta\overline{\Phi}_i + \sum_{i : y_i < 0} \Delta\overline{\Psi}_i
  \notag \\
 & \leq \sum_{i : y_i < 0}  \Phi_i \cdot \Big( \frac{\alpha \tilde{\eps}}{n} + K \cdot \frac{\alpha^2}{n}\Big) + \sum_{i : y_i < 0} \Psi_i \cdot \Big( - \frac{\alpha \tilde{\eps}}{n} + K \cdot \frac{\alpha^2}{n}\Big) \notag \\
 & \leq \sum_{i : y_i < 0} \Gamma_i \cdot \Big( - \frac{\alpha \tilde{\eps}}{n} + K \cdot \frac{\alpha^2}{n}\Big) + \sum_{i : y_i < 0} \frac{2\alpha \tilde{\eps}}{n}, \label{eq:case_a_underloaded}
\end{align}
where in the first inequality we used \cref{eq:case_a_underloaded_prep} and the precondition of the theorem, while in the last inequality we used that $\Phi_i \leq 1$ for underloaded bins.

Combining \cref{eq:case_a_overloaded} and \cref{eq:case_a_underloaded},
\begin{align}
\Delta\overline{\Gamma} & = \sum_{i : y_i \geq 0} \Delta\overline{\Gamma}_i + \sum_{i : y_i < 0} \Delta\overline{\Gamma}_i \notag \\
 & \leq \sum_{i : y_i \geq 0} \Gamma_i \cdot \Big( - \frac{\alpha \eps \delta}{4n} + K \cdot \frac{\alpha^2}{n}\Big) + \sum_{i : y_i \geq 0} \frac{2\alpha}{n} \cdot \max\Big(\tilde{\eps}, \frac{\eps\delta}{4}\Big) \notag \\
 & \quad \quad + \sum_{i : y_i < 0} \Gamma_i \cdot \Big( - \frac{\alpha \tilde{\eps}}{n} + K \cdot \frac{\alpha^2}{n}\Big) + \sum_{i : y_i < 0} \frac{2\alpha \tilde{\eps}}{n} \notag \\
 & \leq \sum_{i = 1}^n \Gamma_i \cdot \Big( - \frac{\alpha \eps \delta}{4n} + K \cdot \frac{\alpha^2}{n}\Big) + \sum_{i = 1}^n \frac{2\alpha}{n} \cdot \max\Big(\tilde{\eps}, \frac{\eps\delta}{4}\Big) \notag \\
 & \leq - \Gamma \cdot \frac{\alpha \eps \delta}{8n} + 2\alpha \cdot \max\Big(\tilde{\eps}, \frac{\eps\delta}{4}\Big), \label{eq:case_a_gamma}
\end{align}
using in the last line that $\alpha \leq \frac{\eps\delta}{8K}$.

\textbf{Case B [$|\mathcal{B}_+| > \frac{n}{2} \cdot (1 - \delta)$]:} We partition $\mathcal{B}_+$ into $\mathcal{B}_1 := \mathcal{B}_+ \cap \{i \in [n]: i \leq \frac{n}{2} \cdot (1+\delta) \}$ and $\mathcal{B}_2 := \mathcal{B}_+ \setminus \mathcal{B}_1$. The positive contribution $\Delta\overline{\Phi}_i$ for bins $i \in \mathcal{B}_1$ will be counteracted by that of the bins in $\mathcal{G}_+$ as in Case A. For that of bins in $\mathcal{B}_2$ we consider two cases based on $z_2 := y_{\frac{n}{2} \cdot (1 + \delta)} > 0$, the load of the first bin in $\mathcal{B}_2$. Similarly to \cref{eq:b_plus_change},
\begin{align}
 \sum_{i \in \mathcal{B}_2} \Delta\overline{\Phi}_i
 & \leq  e^{\alpha z_2} \cdot \frac{\alpha\eps\delta}{2} + \sum_{i \in \mathcal{B}_2} \Phi_i \cdot K \cdot \frac{\alpha^2}{n}.\label{eq:b2_contribution}
\end{align}

\textbf{Case B.1 [$z_2 \leq \frac{1}{\alpha}  \cdot \frac{1 - \delta}{2 \delta} \cdot \ln(8/3)$]:} In this case, we will show that the contribution of the bad bins can be absorbed by the additive term. In particular, the contribution of the bins in $\mathcal{B}_2$ is 
\begin{align} 
\sum_{i \in \mathcal{B}_2} \Delta\overline{\Gamma}_i
 & \leq \sum_{i \in \mathcal{B}_2} 2 \cdot e^{\alpha z_2} \cdot \Big( \frac{\alpha \tilde{\eps}}{n} + K \cdot \frac{\alpha^2}{n} \Big) \notag \\
 & \leq \sum_{i \in \mathcal{B}_2} 4 \cdot e^{ \frac{1 - \delta}{2 \delta} \cdot \ln(8/3)} \cdot \frac{\alpha \tilde{\eps}}{n} \notag \\ & < 4 \cdot e^{ \frac{1 - \delta}{2 \delta} \cdot \ln(8/3)} \cdot \alpha \tilde{\eps}.\notag 
\end{align}
Hence, counteracting the positive contribution of the bins in $\mathcal{B}_1$ using that of the bins in $\mathcal{G}_+$ as in Case A (since $|\mathcal{B}_1| \leq \frac{n}{2} \cdot (1-\delta)$) as in \cref{eq:case_a_gamma}, we have
\[
\Delta\overline{\Gamma}  \leq -\Gamma \cdot \frac{\alpha\eps \delta}{8n} + \max\Big(\frac{\eps\delta}{4}, \tilde{\eps}, 2 \cdot e^{ \frac{1 - \delta}{2 \delta} \cdot \ln(8/3)} \cdot \tilde{\eps} \Big) \cdot 2\alpha.
\]

\textbf{Case B.2 [$z_2 > \frac{1}{\alpha}  \cdot \frac{1 - \delta}{2 \delta} \cdot \ln(8/3)$]:} In this case, it means that there are substantially more holes in the underloaded bins than balls in the overloaded bins of $\mathcal{B}_1$. Hence, as we will prove below, the negative contribution $\Delta\overline{\Psi}$ for bins in $\mathcal{G}_-$ will counteract the positive contribution of $\Delta\overline{\Phi}$ for $\mathcal{B}_1$ (\cref{fig:case_b_2}),

\begin{align}
\sum_{i \in \mathcal{G}_-} \Delta\overline{\Psi}_i  & \leq \sum_{i \in \mathcal{G}_-} \Psi_i \cdot \Big( - \frac{\alpha \tilde{\eps}}{n} + K \cdot \frac{\alpha^2}{n}\Big) \notag \\
 & \leq \sum_{i \in \mathcal{G}_-} \Psi_i \cdot \Big( - \frac{\alpha \tilde{\eps}}{4n} + K \cdot \frac{\alpha^2}{n}\Big) - \sum_{i \in \mathcal{G}_-} \Psi_i \cdot \frac{3\alpha \eps \delta}{4 \cdot (1 - \delta) \cdot n}. \label{eq:case_b2_psi_i}
\end{align}
The term $\sum_{i \in \mathcal{G}_-} \Psi_i$ is minimized when all underloaded bins are equal to the same load $-z_3 < 0$, i.e. $\sum_{i \in \mathcal{G}_-} \Psi_i \geq |\mathcal{G}_-| \cdot e^{\alpha z_3}$. Note that $z_3 \geq \frac{z_2 \cdot (|\mathcal{B}_1| + |\mathcal{G}_+|)}{|\mathcal{G}_-|} \geq \frac{z_2 \cdot \frac{n}{2} \cdot (1+\delta)}{|\mathcal{G}_-|}$ and that the function $f(z) = z \cdot e^{k/z}$ is decreasing for $0 \leq z \leq k$ (\cref{lem:decreasing_fn}). Hence, for $k = z_2 \cdot \frac{n}{2} \cdot (1+ \delta)$, the maximum size $|\mathcal{G}_-| = \frac{n}{2} \cdot (1-\delta) \leq k$, minimizes the term $\sum_{i \in \mathcal{G}_-} \Psi_i$. 
We lower bound $z_3$ as follows,
\[
z_3 \geq \frac{z_2 \cdot \frac{n}{2} \cdot (1 + \delta)}{\frac{n}{2} \cdot (1 - \delta)} = z_2 + z_2 \cdot \frac{2\delta}{1 - \delta} \geq z_2 + \frac{1}{\alpha} \cdot \ln(8/3),
\]
using the lower bound on $z_2$. Hence, 
\begin{align*}
\sum_{i \in \mathcal{G}_-} \Psi_i &\geq \frac{n}{2} \cdot (1-\delta) \cdot e^{\alpha \cdot (z_2 + \frac{1}{\alpha} \cdot \ln(8/3)) } \cdot \frac{3\alpha \eps \delta}{4 \cdot (1 - \delta) \cdot n} \\ &= e^{\alpha z_2} \cdot e^{\alpha \cdot \frac{1}{\alpha} \cdot \ln(8/3)} \cdot \frac{3\alpha \eps \delta}{8} \\ &= e^{\alpha z_2} \cdot (\alpha \eps \delta).
\end{align*}
Applying this to \cref{eq:case_b2_psi_i},
\begin{align}
\sum_{i \in \mathcal{G}_-} \Delta\overline{\Psi}_i
 & \leq \sum_{i \in \mathcal{G}_-} \Psi_i \cdot \Big( - \frac{\alpha \tilde{\eps}}{4n} + K \cdot \frac{\alpha^2}{n}\Big) - e^{\alpha z_2} \cdot (\alpha \eps \delta). \label{eq:good_underloaded}
\end{align}
Aggregating \cref{eq:b2_contribution} and \cref{eq:good_underloaded}, the contribution of underloaded bins to $\Delta\overline{\Phi}$ is
\begin{align*}
\sum_{i \in \mathcal{G}_-} \Delta\overline{\Psi}_i + \sum_{i \in \mathcal{B}_2} \Delta\overline{\Phi}_i & \leq \sum_{i \in \mathcal{G}_-} \Psi_i \cdot \Big( - \frac{\alpha \tilde{\eps}}{4n} + K \cdot \frac{\alpha^2}{n}\Big) - e^{\alpha z_2} \cdot (\alpha \eps \delta) \\ &\qquad + e^{\alpha z_2} \cdot \frac{\alpha\eps\delta}{2} + \sum_{i \in \mathcal{B}_2} \Phi_i \cdot K \cdot \frac{\alpha^2}{n} \\
 & = \sum_{i \in \mathcal{G}_-} \Psi_i \cdot \Big( - \frac{\alpha \tilde{\eps}}{4n} + K \cdot \frac{\alpha^2}{n}\Big) - e^{\alpha z_2} \cdot \frac{\alpha\eps\delta}{2} + \sum_{i \in \mathcal{B}_2} \Phi_i \cdot K \cdot \frac{\alpha^2}{n} \\
 & \leq \sum_{i \in \mathcal{G}_-} \Psi_i \cdot \Big( - \frac{\alpha \tilde{\eps}}{4n} + K \cdot \frac{\alpha^2}{n}\Big) - \sum_{i \in \mathcal{B}_2} \Phi_i \cdot \frac{\alpha\eps\delta}{(1-\delta) \cdot n} + \sum_{i \in \mathcal{B}_2} \Phi_i \cdot K \cdot \frac{\alpha^2}{n}.
\end{align*}
Hence, 
\begin{align} 
\lefteqn{ \sum_{i \in \mathcal{G}_-} \Delta\overline{\Gamma}_i  + \sum_{i \in \mathcal{B}_2} \Delta\overline{\Gamma}_i} \notag \\
 & \leq - \sum_{i \in \mathcal{G}_-} \Psi_i \cdot \frac{\alpha \tilde{\eps}}{4n} + \sum_{i \in \mathcal{G}_-} \Phi_i \cdot \frac{\alpha \tilde{\eps}}{n}- \sum_{i \in \mathcal{B}_2} \Phi_i \cdot \frac{\alpha\eps\delta}{(1-\delta) \cdot n} + \sum_{i \in \mathcal{B}_2} \Psi_i \cdot \frac{\alpha \eps}{n} + \sum_{i \in \mathcal{G}_- \cup \mathcal{B}_2} \Gamma_i \cdot K \cdot \frac{\alpha^2}{n} \notag \\
 & \leq \sum_{i \in \mathcal{G}_- \cup \mathcal{B}_2} \Gamma_i \cdot \Big( - \frac{\alpha\eps\delta}{4n} + K \cdot \frac{\alpha^2}{n} \Big) + \sum_{i \in \mathcal{G}_- \cup \mathcal{B}_2} \frac{2\alpha}{n} \cdot \max\Big(\frac{\eps\delta}{4}, \tilde{\eps} \Big). \label{eq:case_b2_gamma_i}
\end{align}
Aggregating similarly, to Case A, for $\mathcal{G}_+$ and $\mathcal{B}_1$,
\begin{align}
\lefteqn{ \sum_{i \in \mathcal{G}_+} \Delta\overline{\Gamma}_i + \sum_{i \in \mathcal{B}_1} \Delta\overline{\Gamma}_i} \notag \\ &\leq \sum_{i \in \mathcal{G}_+ \cup \mathcal{B}_1} \Gamma_i \cdot \Big( - \frac{\alpha\eps\delta}{4n} + K \cdot \frac{\alpha^2}{n} \Big) + \sum_{i \in \mathcal{G}_+ \cup \mathcal{B}_1} \frac{2\alpha}{n} \cdot \max\Big(\tilde{\eps}, \frac{\eps\delta}{4}\Big). \label{eq:case_b2_gamma_ii}
\end{align}
Hence, combining \cref{eq:case_b2_gamma_i} and \cref{eq:case_b2_gamma_ii}, 
\begin{align*}
\Delta\overline{\Gamma} 
 \leq -\Gamma \cdot \frac{\alpha\eps \delta}{8n} + \max\Big(\frac{\eps\delta}{4}, \tilde{\eps} \Big) \cdot 2\alpha.
\end{align*}

\textbf{Case C, D:} These are symmetric to Case A and Case B, but interchanging $\Phi$ with $\Psi$,  $\delta$ with $1-\delta$ and negating the normalized load vector and flipping the load vector.

Combining the four cases, we get that
\[
\Delta\overline{\Gamma} 
 \leq -\Gamma \cdot \frac{\alpha\eps \delta}{8n} + c \cdot \eps \cdot \alpha,
\]
where $c := 2 \cdot \max\Big(\frac{\delta}{4}, \frac{\delta}{1-\delta}, 2 \cdot e^{ \frac{1 - \delta}{2 \delta} \cdot \ln(8/3)} \cdot \frac{\delta}{1-\delta}, 2 \cdot e^{ \frac{\delta}{2 (1 - \delta)} \cdot \ln(8/3)}\Big)$, using that $\tilde{\eps} := \eps \cdot \frac{\delta}{1 - \delta} $.
\end{proof}

By scaling the quantities $\Delta\overline{\Phi}$ and $\Delta\overline{\Psi}$ in \cref{thm:main_ptw} by some $\kappa > 0$, we obtain:

\begin{restatable}{cor}{CorMainPtw} \label{cor:main_ptw}
Consider any allocation process with probability vector $p$ satisfying conditions $\mathcal{C}_1$ for constant $\delta \in (0, 1)$ and $\eps > 0$. Further assume that it satisfies for some $K > 0$ and some $\kappa > 0$, for any $t \geq 0$,
\[
\sum_{i = 1}^n \Ex{\Delta\Phi_i^{t+1} \mid \mathfrak{F}^t} \leq \sum_{i = 1}^n \Phi_i^t \cdot \Big(\Big(p_i - \frac{1}{n}\Big) \cdot \kappa \cdot \alpha + K \cdot \kappa \cdot \frac{\alpha^2}{n}\Big),
\]
and
\[
\sum_{i = 1}^n \Ex{\Delta\Psi_i^{t+1} \mid \mathfrak{F}^t} \leq  \sum_{i = 1}^n \Psi_i^t \cdot \Big(\Big(\frac{1}{n} - p_i\Big) \cdot \kappa \cdot \alpha + K \cdot \kappa \cdot \frac{\alpha^2}{n}\Big).
\]
Then, there exists a constant $c := c(\delta) > 0$, such that for $0 < \alpha < \min(1, \frac{\eps\delta}{8K})$
\[
\Ex{\Delta\Gamma^{t+1} \mid \mathfrak{F}^t} \leq -\frac{\eps\delta}{8} \cdot \kappa \cdot \frac{\alpha}{n} \cdot \Gamma^t + c \cdot \kappa \cdot \eps \cdot \alpha,
\]
and
\[
\Ex{\Gamma^t} \leq \frac{8c}{\delta} \cdot n.
\]
\end{restatable}

\begin{proof}
Applying \cref{thm:main_ptw} for the current load vector $x^t$ and for \[
\Delta\overline{\Phi} := \sum_{i = 1}^n \Phi_i^t \cdot \Big(\Big(p_i - \frac{1}{n}\Big) \cdot \alpha + K  \cdot \frac{\alpha^2}{n}\Big) \quad \text{ and } \quad \Delta\overline{\Psi} := \sum_{i = 1}^n \Psi_i^t \cdot \Big(\Big(\frac{1}{n} - p_i\Big) \cdot \alpha + K \cdot \frac{\alpha^2}{n}\Big),
\]
we get
\begin{align} \label{eq:application_res}
\Delta\overline{\Phi} + \Delta\overline{\Psi} %
\leq -\frac{\eps\delta}{8} \cdot \frac{\alpha}{n} \cdot \Gamma^t + c \cdot \eps \cdot \alpha.
\end{align}
By the assumptions, 
\begin{align} \label{eq:assumption}
\Ex{\Delta\Gamma^{t+1} \mid \mathfrak{F}^t} = \Ex{\Delta\Phi^{t+1} \mid \mathfrak{F}^t} + \Ex{\Delta\Psi^{t+1} \mid \mathfrak{F}^t} \leq \kappa \cdot (\Delta\overline{\Phi} + \Delta\overline{\Psi}).
\end{align}
Hence, combining \cref{eq:application_res} and  \cref{eq:assumption}, we get
\begin{align*}
\Ex{\Delta\Gamma^{t+1} \mid \mathfrak{F}^t} 
& \leq -\frac{\eps\delta}{8} \cdot \kappa \cdot \frac{\alpha}{n} \cdot \Gamma^t + c \cdot \kappa \cdot \eps \cdot \alpha.
\end{align*}
Now, we will show by induction that for any $t \geq 0$, $\Ex{\Gamma^t} \leq \frac{8c}{\delta} \cdot n$. Assume true for $t$, then  
\begin{align*}
\Ex{\Gamma^{t+1}} 
 & = \Ex{\Ex{\Gamma^{t+1} \mid \mathfrak{F}^t}} \\ &\leq \Ex{\Gamma^t \cdot \Big( 1 -\frac{\eps\delta}{8} \cdot \kappa \cdot \frac{\alpha}{n}\Big)} + c \cdot \kappa \cdot \eps \cdot \alpha \\
 & \leq \frac{8c}{\delta} \cdot n \cdot \Big( 1 -\frac{\eps\delta}{8} \cdot \kappa \cdot \frac{\alpha}{n}\Big) + c \cdot \kappa \cdot \eps \cdot \alpha \\ &= \frac{8c}{\delta} \cdot n - c \cdot \kappa \cdot \eps \cdot \alpha + c \cdot \kappa \cdot \eps \cdot \alpha = \frac{8c}{\delta} \cdot n. \qedhere
\end{align*}
\end{proof}

\section{A Simple Upper Bound} \label{sec:weak_bound}

In this section we derive an upper bound of $\Oh( b/n \cdot \log n)$ for the weighted batched setting. This upper bound is tight for $b=\Theta(n)$, as shown in \cref{sec:lower_bounds}. We will make use of the hyperbolic cosine potential as defined in \cref{eq:hyperbolic}. This will also serve as the base case for the tighter analysis in \cref{sec:refined}.

The main goal is to derive the preconditions of \cref{cor:main_ptw} and apply it for $\kappa := b$ over the batches (not individual time steps).

\begin{lem} \label{lem:batching_pot_changes}
Consider the weighted batched setting with batch size $b \geq n$, for a process with probability vector $p$ satisfying condition $\mathcal{C}_2$ for some $C > 1$ and the weight distribution satisfying \cref{lem:bounded_weight_moment} for some $S \geq 1$. Then for any $0 <\alpha \leq \frac{n}{2CSb}$, for any $t \geq 0$ being a multiple of $b$,
\begin{align}
\sum_{i = 1}^n \ex{\Phi_i^{t+b} \mid \mathfrak{F}^t} \leq \sum_{i = 1}^n \Phi_i^t \cdot \left(1 + \Big(p_i -\frac{1}{n}\Big) \cdot \alpha \cdot b + 5\cdot C^2 \cdot S^2 \cdot \frac{b}{n} \cdot \frac{\alpha^2}{n} \cdot b \right),\label{eq:one}
\end{align}
and 
\begin{align}
\sum_{i = 1}^n \ex{\Psi_i^{t+b} \mid \mathfrak{F}^t} \leq \sum_{i = 1}^n \Psi_i^t \cdot \left(1 + \Big(\frac{1}{n} - p_i\Big) \cdot \alpha \cdot b + 5\cdot C^2 \cdot S^2 \cdot \frac{b}{n} \cdot \frac{\alpha^2}{n} \cdot b \right). \label{eq:two}
\end{align}
\end{lem}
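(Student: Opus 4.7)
The plan is to exploit the independence of the $b$ allocations within a single batch to factorise the joint moment generating function, apply \cref{lem:bounded_weight_moment} per step, and then raise the per-step bound to the $b$-th power. The key preliminary observation is that the sum $\sum_i e^{\alpha y_i^{t+b}} = \sum_i e^{\alpha z_i^{t+b}}$ is permutation-invariant, so substituting the batch update $z_i^{t+b} - y_i^t = \sum_{j=1}^{b} w^{t+j}(\mathbf{1}_{i_j=i} - 1/n)$ and using that the pairs $(i_j, w^{t+j})$ are independent across $j = 1,\dots,b$ conditional on $\mathfrak{F}^t$, I get
\[
\sum_{i=1}^n \Ex{\Phi_i^{t+b} \mid \mathfrak{F}^t} = \sum_{i=1}^n \Phi_i^t \cdot \prod_{j=1}^b \Ex{e^{\alpha w^{t+j}(\mathbf{1}_{i_j=i} - 1/n)}},
\]
reducing the task to bounding a single, identical one-step factor.

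To bound that factor, I would condition on whether $i_j = i$, obtaining $p_i \cdot \Ex{e^{\alpha w(1 - 1/n)}} + (1 - p_i) \cdot \Ex{e^{-\alpha w/n}}$, and apply \cref{lem:bounded_weight_moment} with $\kappa = 1 - 1/n$ and $\kappa = -1/n$ (both in $[-1,1]$). The constants combine to $1$; the linear-in-$\alpha$ coefficient collapses exactly to $p_i - 1/n$; and the quadratic-in-$\alpha$ coefficient is at most $S \cdot (p_i(1-1/n)^2 + (1-p_i)/n^2) \leq S(p_i + 1/n^2) \leq 2CS/n$ using condition $\mathcal{C}_2$ and $n \geq 1$. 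The hypothesis $\alpha \in (0,\min(\lambda/2,1))$ of \cref{lem:bounded_weight_moment} holds since $\alpha \leq n/(2CSb) \leq 1/(2CS) \leq \lambda/2$, using $b \geq n$ and $S \geq 1/\lambda$. Thus each of the $b$ identical factors is at most $1 + \alpha(p_i - 1/n) + 2CS\alpha^2/n$.

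To raise this to the $b$-th power, I would combine $1 + x \leq e^x$ with the elementary bound $e^y \leq 1 + y + y^2$ for $|y| \leq 1$. The hypothesis $\alpha \leq n/(2CSb)$ was calibrated precisely so that, writing $x := \alpha(p_i - 1/n) + 2CS\alpha^2/n$, one has $|bx| \leq 2C\alpha b/n \leq 1/S \leq 1$, so the estimate gives $(1+x)^b \leq 1 + bx + b^2x^2$. The linear part $bx$ produces the target $\alpha(p_i - 1/n) b$ together with a leftover cross-term $2CS\alpha^2 b/n$, and $b^2 x^2 \leq 4 C^2 \alpha^2 b^2/n^2$; using $b \geq n$ and $C,S \geq 1$ both leftover contributions are absorbed into $5 C^2 S^2 \alpha^2 b^2/n^2$, which is exactly the stated error. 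Multiplying by $\Phi_i^t$ and summing over $i$ yields \eqref{eq:one}, and \eqref{eq:two} follows by the identical argument with $\alpha$ replaced by $-\alpha$: the MGF bound is symmetric in the sign of $\kappa$, the linear coefficient flips from $p_i - 1/n$ to $1/n - p_i$, and the quadratic estimate is unchanged.

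The main technical obstacle is keeping the quadratic error of order $\alpha^2 b^2/n^2$ after raising to the $b$-th power: without the precise calibration $\alpha \leq n/(2CSb)$, higher-order terms in the binomial expansion would dominate, and in particular the cross-term $2CS\alpha^2 b/n$ would not be absorbable into the pure $\alpha^2 b^2/n^2$ bound (which is why the hypothesis $b \geq n$ matters). Everything else reduces to arithmetic bookkeeping of constants depending only on $C$ and $S$.
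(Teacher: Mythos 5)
Your proof is correct and follows essentially the same strategy as the paper: factorize the conditional expectation across the $b$ independent allocations of the batch, apply \cref{lem:bounded_weight_moment} per step to obtain the one-step factor $1 + (p_i - \frac{1}{n})\alpha + \Oh(\frac{CS\alpha^2}{n})$, and then raise to the $b$-th power via $1+x \leq e^x \leq 1+x+x^2$ with the calibration $\alpha \leq \frac{n}{2CSb}$ ensuring $|bx| \leq 1$. The only cosmetic differences are that you invoke permutation-invariance of $\sum_i e^{\alpha y_i}$ explicitly (the paper works per-bin with an indicator vector $Z \in \{0,1\}^b$, which implicitly relies on the same fact), and your quadratic term bound $b^2 x^2 \leq 4C^2\alpha^2 b^2/n^2$ is marginally tighter than the paper's $4C^2S^2\alpha^2 b^2/n^2$ before both are slackened to the same $5C^2S^2\alpha^2 b^2/n^2$.
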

\begin{proof}
Consider an arbitrary bin $i \in [n]$. Let $Z \in \{0,1 \}^b$ be the indicator vector, where $Z_j$ indicates whether the $j$-th ball was allocated to bin $i$. The expected change for the overload potential $\Phi_i^t$, is given by
\begin{align}
& \ex{\Phi_i^{t+b} \mid \mathfrak{F}^t}  = \Phi_i^t \cdot \sum_{z \in \{0,1 \}^b} \Pro{Z = z} \cdot \Ex{\left. e^{\alpha \sum_{j = 1}^n (z_j w^{t+j} - \frac{w^{t+j}}{n})} \, \right\vert \, \mathfrak{F}^t, Z = z} \notag .
 \end{align}
 In the following, let us upper bound the factor of $\Phi_i^t$:
 \begin{align}
 \lefteqn{\sum_{z \in \{0,1 \}^b} \Pro{Z = z} \cdot \Ex{\left. e^{\alpha \sum_{j = 1}^n (z_j w^{t+j} - \frac{w^{t+j}}{n})} \, \right\vert \, \mathfrak{F}^t, Z = z} } \notag \\
 &\stackrel{(a)}{=} \!\!\! \sum_{z \in \{0,1 \}^b} \prod_{j = 1}^b (p_i)^{z_j}  (1 - p_i)^{1 - z_j}  (\ex{e^{\alpha W (1 - \frac{1}{n})}})^{z_j}  (\ex{e^{-\alpha \frac{W}{n}}})^{1- z_j} \notag \\
 &\stackrel{(b)}{\leq}\!\!\! \sum_{z \in \{0,1 \}^b} \prod_{j = 1}^b \left(p_i \cdot \Big(1 + \alpha \cdot \Big(1 - \frac{1}{n}\Big) + S\alpha^2 \Big)\Big)^{z_j}  \cdot \Big((1 - p_i) \cdot \Big(1 - \frac{\alpha}{n} + S\cdot \frac{\alpha^2}{n^2}\Big) \Big)^{1 - z_j} \right) \notag \\
 & \stackrel{(c)}{=} \left( p_i \cdot \Big(1 + \alpha \cdot \Big(1 - \frac{1}{n}\Big) + S\alpha^2 \Big) + (1 - p_i) \cdot \Big(1 - \frac{\alpha}{n} + S\cdot \frac{\alpha^2}{n^2}\Big) \right)^b \notag \\
 & = \left( 1 + \Big(p_i - \frac{1}{n}\Big) \cdot \alpha + p_i \cdot S\alpha^2 + (1-p_i) \cdot \frac{S\alpha^2}{n^2} \right)^b \notag \\
 & \leq \left( 1 + \Big(p_i - \frac{1}{n}\Big) \cdot \alpha + 2 \cdot p_i \cdot S\alpha^2 \right)^b, \label{eq:phi_batched_i}
\end{align}
using in $(a)$ that the weights are independent given $\mathfrak{F}^t$, in $(b)$ the \cref{lem:bounded_weight_moment} and in $(c)$ the binomial theorem. 
Let us define 
\[
y:=\Big(p_i - \frac{1}{n}\Big) \cdot \alpha + 2 \cdot p_i \cdot S\alpha^2.
\]
We first claim that $y \cdot b \leq 1$, which holds indeed since
\begin{align*}
    y \cdot b &= \Big(p_i -\frac{1}{n}\Big) \cdot \alpha \cdot b + 2 \cdot p_i \cdot S\alpha^2 \cdot b \\ &\leq \frac{C}{n} \cdot  \alpha \cdot b + 2 \cdot \frac{C}{n} \cdot S \alpha^2 \cdot b \leq 2CS \cdot \alpha \cdot \frac{b}{n} \leq 1,
\end{align*}
where in the last inequality we used $\alpha \leq \frac{n}{2CSb}$.

Then,
\begin{align*}
\ex{\Phi_i^{t+b} \mid \mathfrak{F}^t} 
 & \stackrel{(a)}{\leq} \Phi_i^t \cdot e^{y \cdot b}\\
 & \stackrel{(b)}{\leq} \Phi_i^t \cdot \left( 1 + y \cdot b + y^2 \cdot b^2 \right) \\
 & = \Phi_i^t \cdot \left(1 + \Big(p_i -\frac{1}{n}\Big) \cdot \alpha \cdot b + 2 \cdot p_i \cdot S \alpha^2 \cdot b + \Big(\Big(p_i -\frac{1}{n}\Big) \cdot \alpha \cdot b + 2 \cdot p_i \cdot S \alpha^2 \cdot b \Big)^2 \right),
 \end{align*}
 using in $(a)$ that $1 + y \leq e^y$ for any $y$, and in $(b)$ that $e^y \leq 1 + y + y^2$ for $y \leq 1.75$. Since $p_i \leq \frac{C}{n}$ for all $i \in [n]$, we conclude
 \begin{align*}
 \ex{\Phi_i^{t+b} \mid \mathfrak{F}^t} 
 & \leq \Phi_i^t \cdot \left(1 + \Big(p_i -\frac{1}{n}\Big) \cdot \alpha \cdot b + 2 \cdot \frac{CS}{n}\cdot \alpha^2 \cdot b + \left(\frac{2CS}{n} \cdot \alpha \cdot b \right)^2 \right) \\
 & \leq \Phi_i^t \cdot \left(1 + \Big(p_i -\frac{1}{n}\Big) \cdot \alpha \cdot b + 5\cdot C^2\cdot S^2 \cdot \frac{b}{n} \cdot \frac{\alpha^2}{n} \cdot b \right).
\end{align*}
Similarly, for the underloaded potential $\Psi^t$, for any bin $i \in [n]$,
\begin{align*}
\ex{\Psi_i^{t+b} \mid \mathfrak{F}^t} & = \Psi_i^t \cdot \sum_{z \in \{0,1 \}^b} \Pro{Z = z} \cdot \Ex{\left. e^{-\alpha \sum_{j = 1}^n (z_j w^{t+j} - \frac{w^{t+j}}{n})} \, \right\vert \, \mathfrak{F}^t, Z = z}. 
 \end{align*}
 As before, we will upper bound the factor of $\Psi_i^t$:
 \begin{align}
& \sum_{z \in \{0,1 \}^b} \Pro{Z = z} \cdot \Ex{\left. e^{-\alpha \sum_{j = 1}^n (z_j w^{t+j} - \frac{w^{t+j}}{n})} \, \right\vert \, \mathfrak{F}^t, Z = z} \\
 & \stackrel{(a)}{=} \!\!\! \sum_{z \in \{0,1 \}^b} \prod_{j = 1}^b (p_i)^{z_j}  (1 - p_i)^{1 - z_j}  (\ex{e^{-\alpha W \cdot (1 - \frac{1}{n})}})^{z_j}  (\ex{e^{\alpha \frac{W}{n}}})^{1- z_j} \notag \\
& \stackrel{(b)}{\leq}  \!\!\! \sum_{z \in \{0,1 \}^b} \prod_{j = 1}^b  p_i \cdot \Big(1 - \alpha \cdot \Big(1 - \frac{1}{n}\Big) + S\alpha^2 \Big)\Big)^{z_j} \cdot \Big((1 - p_i) \cdot \Big(1 + \frac{\alpha}{n} + S\cdot \frac{\alpha^2}{n^2}\Big) \Big)^{1 - z_j} \notag \\
 & \stackrel{(c)}{=} \left( p_i \cdot \Big(1 - \alpha \cdot \Big(1 - \frac{1}{n}\Big) + S\alpha^2 \Big) + (1 - p_i) \cdot \Big(1 + \frac{\alpha}{n} + S\cdot \frac{\alpha^2}{n^2}\Big) \right)^b \notag \\
 & =  \left( 1 + \Big(\frac{1}{n}- p_i\Big) \cdot \alpha + p_i \cdot S\alpha^2 + (1-p_i) \cdot \frac{\alpha^2 S}{n^2} \right)^b \notag \\
 & \leq \left( 1 + \Big( \frac{1}{n} - p_i\Big) \cdot \alpha + 2 \cdot p_i \cdot S\alpha^2 \right)^b \label{eq:psi_batched_i},
\end{align}
using in $(a)$ that the weights $W$ are independent given $\mathfrak{F}^t$, in $(b)$ \cref{lem:bounded_weight_moment} and in $(c)$ the binomial theorem. So,
\begin{align*}
\lefteqn{ \ex{\Psi_i^{t+b} \mid \mathfrak{F}^t} } \\
 & \stackrel{(a)}{\leq} \Psi_i^t \cdot e^{(\frac{1}{n} - p_i) \cdot \alpha \cdot b + 2 \cdot p_i \cdot S\alpha^2 \cdot b}\\
 & \stackrel{(b)}{\leq} \Psi_i^t \cdot \left(1 + \Big(\frac{1}{n} - p_i\Big) \cdot \alpha \cdot b + 2 \cdot p_i \cdot S \alpha^2 \cdot b + \Big(\Big(\frac{1}{n} - p_i\Big) \cdot \alpha \cdot b + 2 \cdot p_i \cdot S \alpha^2 \cdot b \Big)^2 \right) \\
 & \stackrel{(c)}{\leq} \Psi_i^t \cdot \left(1 + \Big(\frac{1}{n} - p_i\Big) \cdot \alpha \cdot b + 2 \cdot \frac{CS}{n}\cdot \alpha^2 \cdot b + \left(\frac{2CS}{n} \cdot \alpha \cdot b \right)^2 \right) \\
 & \leq \Psi_i^t \cdot \left(1 + \Big(\frac{1}{n} - p_i\Big) \cdot \alpha \cdot b + 5\cdot C^2\cdot S^2 \cdot \frac{b}{n} \cdot \frac{\alpha^2}{n} \cdot b \right),
\end{align*}
using in $(a)$ that $1 + y \leq e^y$ for any $y$, in $(b)$ that $e^y \leq 1 + y + y^2$ for $y \leq 1.75$ and that $(\frac{1}{n} - p_i) \cdot \alpha \cdot b + 2 \cdot p_i \cdot S\alpha^2 \cdot b \leq \frac{C}{n} \cdot  \alpha \cdot b + 2 \cdot \frac{C}{n} \cdot S \alpha^2 \cdot b \leq 2CS \cdot \alpha \cdot \frac{b}{n} \leq 1$, since $\alpha \leq \frac{n}{2CSb}$.
\end{proof}

We are now ready to apply \cref{cor:main_ptw} for $\kappa = b$.

\begin{thm} \label{thm:weak_gap_bound}
Consider any process $p$ satisfying conditions  $\mathcal{C}_1$ for constant $\delta \in (0, 1)$ and (not necessarily constant) $\eps \in (0,1)$ as well as condition $\mathcal{C}_2$ for some constant $C > 1$. Further, consider the batched setting with any $b \geq n$ and a weight distribution satisfying \cref{lem:bounded_weight_moment} with constant $S \geq 1$.
Then there exists a constant $k := k(\delta, C, S) > 0$, such that for any $m \geq 0$ being a multiple of $b$,
\[
\Pro{\max_{i \in [n]} |y_i^m| \leq k \cdot \frac{1}{\epsilon} \cdot \frac{b}{n} \cdot \log n } \geq 1 - n^{-2}.
\]
\end{thm}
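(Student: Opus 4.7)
The strategy is to apply Corollary~\ref{cor:main_ptw} at the granularity of batches, with $\kappa = b$, and then translate the resulting expectation bound on $\Gamma^t$ into a high probability gap bound via Markov's inequality. Concretely, I would set $\alpha := c_1 \cdot \eps \cdot (n/b)$ for a sufficiently small constant $c_1 = c_1(\delta, C, S) > 0$, chosen so that simultaneously $\alpha \leq n/(2CSb)$ (the regime of Lemma~\ref{lem:batching_pot_changes}) and $\alpha \leq \eps\delta/(8K)$ for $K := 5 C^2 S^2 \cdot b/n$ (the regime of Corollary~\ref{cor:main_ptw}). Both constraints are of the form $\alpha \lesssim \eps n/b$, so a single constant choice of $c_1$ works.

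With this $\alpha$, Lemma~\ref{lem:batching_pot_changes} gives exactly the hypotheses of Corollary~\ref{cor:main_ptw} with $\kappa = b$ and $K = 5 C^2 S^2 \cdot b/n$, applied to the chain indexed by multiples of $b$ (where the load vector evolves from $y^t$ to $y^{t+b}$ in one ``step''). The corollary then yields, for every $t \geq 0$ being a multiple of $b$,
\[
\Ex{\Gamma^t} \;\leq\; \frac{8c}{\delta} \cdot n \;=\; \Oh(n).
\]
By Markov's inequality, $\Pro{\Gamma^m \geq n^{3}} \leq n^{-2}$, so with probability at least $1 - n^{-2}$ we have $e^{\alpha |y_i^m|} \leq \Gamma^m \leq n^3$ for every $i \in [n]$, hence
\[
\max_{i \in [n]} |y_i^m| \;\leq\; \frac{3 \ln n}{\alpha} \;=\; \frac{3 \ln n}{c_1 \eps \cdot n/b} \;=\; \Oh\!\left(\frac{1}{\eps} \cdot \frac{b}{n} \cdot \log n\right),
\]
which is the desired conclusion for a suitable constant $k = k(\delta, C, S)$.

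\textbf{Anticipated difficulties.} The routine calculations are in Lemma~\ref{lem:batching_pot_changes}, which has already been proved, so the remaining work is mostly bookkeeping. The one place that requires some care is verifying that the single choice $\alpha = \Theta(\eps n/b)$ fits simultaneously inside the two constraint windows: the batched-step approximation $(1+y)^b \leq 1 + yb + (yb)^2$ from Lemma~\ref{lem:batching_pot_changes} needs $2CS\alpha b/n \leq 1$, while the drop in Corollary~\ref{cor:main_ptw} requires $\alpha \leq \eps\delta/(8K)$ with $K$ itself of order $b/n$; the cancellation of the $b/n$ factor between $\alpha$ and $K$ is what lets the smoothing term $-\frac{\eps\delta}{8} \cdot \kappa \cdot \frac{\alpha}{n} \cdot \Gamma^t$ dominate the error term $K \kappa \alpha^2/n$, and it is exactly this cancellation that forces $\alpha$ to scale as $n/b$ rather than as a constant. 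Once this is in place, the factor $b/n$ in the final gap bound appears naturally as $1/\alpha$.
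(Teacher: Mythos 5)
Your proposal is correct and follows essentially the same route as the paper's proof: invoke Lemma~\ref{lem:batching_pot_changes} to verify the hypotheses of Corollary~\ref{cor:main_ptw} with $\kappa = b$ and $K = 5C^2S^2 b/n$, choose $\alpha = \Theta(\eps n/b)$ so both constraints ($\alpha \leq n/(2CSb)$ and $\alpha \leq \eps\delta/(8K)$) hold simultaneously, deduce $\Ex{\Gamma^m} = \Oh(n)$, and finish with Markov. The one small slip is in the Markov step: from $\Ex{\Gamma^m} \leq \frac{8c}{\delta}n$ you get $\Pro{\Gamma^m \geq n^3} \leq \frac{8c}{\delta}n^{-2}$, which need not be $\leq n^{-2}$; the paper avoids this by using the threshold $\frac{8c}{\delta}n^3$, which only adds the harmless constant $\log(8c/\delta)$ inside the final bound.
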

\begin{rem}
The same upper bound as in \cref{thm:weak_gap_bound} holds also for processes with a time-dependent probability vector $p^t$, as long as for all $t$ being multiplies of $b$, the probability vector $p^t$ satisfies $\mathcal{C}_1$ and $\mathcal{C}_2$ for the same $\epsilon, \delta$ and $C$.
\end{rem}
\begin{rem}
The same upper bound as in \cref{thm:weak_gap_bound} holds also for processes with random tie-breaking and a probability vector $p$ satisfying the preconditions of \cref{lem:batching_pot_changes}. The reason for this is that $(i)$ averaging probabilities in \cref{eq:averaging_pi} can only reduce the maximum entry, i.e. $\max_{i \in [n]} \tilde{p}_i(x^t) \leq \max_{i \in [n]} p_i$, so it satisfies $\mathcal{C}_1$ and $(ii)$ moving probability between bins $i, j$ with $x_i = x_j$ (and thus $\Phi_i^t = \Phi_j^t$ and $\Psi_i^t = \Psi_j^t$), implies that the aggregate upper bounds in \eqref{eq:one} and \eqref{eq:two} remain the same.
\end{rem}
\begin{proof}[Proof of \cref{thm:weak_gap_bound}]
By \cref{lem:batching_pot_changes}, the preconditions of \cref{cor:main_ptw} are satisfied for $K := 5 \cdot C^2 \cdot S^2 \cdot \frac{b}{n}$, $\kappa := b$ and $\alpha := \frac{\eps\delta}{8K}$. Hence, there exists a constant $c := c(\delta) > 0$ such that for any step $m \geq 0$ which is a multiple of $b$,
\[
\Ex{\Gamma^m} \leq \frac{8c}{\delta} \cdot n.
\]
Hence, by Markov's inequality
\[
\Pro{\Gamma^m \leq \frac{8c}{\delta} \cdot n^3} \geq 1 - n^{-2}.
\]
To prove the claim, note that when $\{ \Gamma^m \leq \frac{8c}{\delta} \cdot n^3 \}$ holds, then also,
\[
\max_{i \in [n]} |y_i^m| \leq \frac{1}{\alpha} \cdot \left( \log \Big( \frac{8c}{\delta}\Big) + 3 \cdot \log n \right) \leq 4 \cdot \frac{8 \cdot 5 \cdot C^2 \cdot S^2}{\eps \delta} \cdot \frac{b}{n} \cdot \log n. \qedhere
\]
\end{proof}

\section{An Improved Upper Bound for Batch Sizes \texorpdfstring{$n \leq b \leq n^3$}{n <= b <= n\^3}} \label{sec:refined}

In this section, we will prove an improved (and tight) upper bound of $\Oh(b/n + \log n)$ on the gap for the weighted batched setting with batch size $n \leq b \leq n^3$. We will be assuming processes satisfying conditions  $\mathcal{C}_1$ for constant $\delta > 0$ and constant $\epsilon > 0$ and $\mathcal{C}_2$ with constant $C>1$.

\begin{restatable}{thm}{StrongGapBound}\label{thm:strong_gap_bound}
Consider any process $p$ satisfying conditions  $\mathcal{C}_1$ for constant $\delta \in (0, 1)$ and constant $\eps \in (0,1)$ as well as $\mathcal{C}_2$ for some constant $C > 1$. Further, consider the batched setting with any $n \leq b \leq n^3$ and a weight distribution satisfying \cref{lem:bounded_weight_moment} with constant $S \geq \max(1, 1/\lambda)$.
Then, there is a constant $\kappa := \kappa(\delta,\eps, C, S) > 0$, such that for any $m \geq 0$ being a multiple of $b$,
\[
\Pro{ y_1^m \leq \kappa \cdot \left(\frac{b}{n} + \log n \right) } \geq 1 - n^{-2}.
\]
\end{restatable}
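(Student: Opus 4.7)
The plan is to refine the analysis of \cref{sec:weak_bound} by working with a hyperbolic cosine potential $\Gamma_{\alpha}$ of larger smoothing parameter than the $\alpha_1=\Theta(n/b)$ that is forced in \cref{lem:batching_pot_changes}. To match the target gap $\Oh(b/n+\log n)$, the exponent we ultimately need is $\alpha^{*}=\Theta\!\bigl(1/(b/n+\log n)\bigr)$, i.e.\ $\alpha^{*}=\Theta(1)$ when $b\le n\log n$, and $\alpha^{*}=\Theta(n\log n/b)$ otherwise. Since $\alpha^{*}\gg \alpha_{1}$, the linearisation $(1+y)^{b}\le 1+yb+(yb)^{2}$ that was used in the proof of \cref{lem:batching_pot_changes} breaks; this is exactly the obstacle that the stronger bound has to overcome. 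The idea is to bridge $\alpha_{1}$ and $\alpha^{*}$ through a third, intermediate parameter $\alpha_{2}$, which is the origin of the ``three hyperbolic cosine potential functions'' alluded to in the introduction.

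First, I would apply \cref{thm:weak_gap_bound} together with a union bound over the at most $\poly(n)$ batch endpoints $s\le m$ and condition on the event $\mathcal{E}$ that $\Gap(s)\le c_{1}\cdot(b/n)\log n$ holds at every such endpoint; by choosing constants suitably, $\Pro{\mathcal{E}}\ge 1-n^{-3}$. Under $\mathcal{E}$, the individual terms $\Phi_{\alpha_{1},i}^{t}$ and $\Psi_{\alpha_{1},i}^{t}$ are polynomially bounded, and the load vector is concentrated enough that we can control the high moments of a single batch increment.

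The core technical step is to prove a per-batch drop inequality for $\Gamma_{\alpha^{*}}$ conditional on $\mathcal{E}$. For each bin $i$, the increment in a batch equals $\sum_{j=1}^{b}(w^{t+j}\mathbf{1}_{i_{j}=i}-w^{t+j}/n)$; given $\mathfrak{F}^{t}$, the count $B_{i}\sim\mathrm{Bin}(b,p_{i})$ is sharply concentrated around $bp_{i}\le Cb/n$ by condition $\mathcal{C}_{2}$. I would split expectations according to whether $B_{i}\le K\cdot b/n$ for a large constant $K$. On the typical event, a direct evaluation of the Binomial MGF (instead of the coarse $(1+y)^{b}\le e^{yb}$) yields a change of the shape required by \cref{cor:main_ptw} with $\kappa = b$, and a moderate quadratic error $K'\cdot\alpha^{*2}\cdot(b/n)\cdot\Phi_{\alpha^{*},i}^{t}$. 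On the atypical event, a Chernoff bound makes the contribution exponentially small, but the resulting term involves $\exp(\alpha^{*}\cdot Kb/n)\cdot\Phi_{\alpha^{*},i}^{t}$; this is absorbed into the intermediate potential $\Gamma_{\alpha_{2}}$ via the pointwise comparison $\Phi_{\alpha^{*},i}\le \Phi_{\alpha_{2},i}^{\alpha^{*}/\alpha_{2}}$, which in turn is controlled by $\Gamma_{\alpha_{1}}=\Oh(n)$ from the weak bound. Choosing $\alpha_{2}$ so that $\alpha_{2}b/n$ is a small constant makes the two error contributions simultaneously of lower order than the desired decrease $-c\alpha^{*}/n\cdot\Gamma_{\alpha^{*}}$.

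Once the drop inequality for $\Gamma_{\alpha^{*}}$ is in hand, an inductive argument identical to the one in \cref{cor:main_ptw} yields $\Ex{\Gamma_{\alpha^{*}}^{m}\mathbf{1}_{\mathcal{E}}}=\Oh(n)$; Markov's inequality gives $\Gamma_{\alpha^{*}}^{m}\le n^{3}$ with probability $1-n^{-2}-\Pro{\overline{\mathcal{E}}}\ge 1-n^{-2}$, and hence $y_{1}^{m}\le \alpha^{*-1}\log\Gamma_{\alpha^{*}}^{m}=\Oh(b/n+\log n)$. The main obstacle will be the balancing act in the drop inequality: the three parameters $\alpha_{1}<\alpha_{2}<\alpha^{*}$ must be chosen so that the truncation error on the atypical event and the quadratic error on the typical event are simultaneously subsumed by $(\epsilon\delta/8)\cdot(\alpha^{*}/n)\cdot\Gamma_{\alpha^{*}}^{t}$, which is what ultimately forces the $b\le n^{3}$ restriction in the hypothesis.
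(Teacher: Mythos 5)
There is a fatal gap in the core technical step. You claim that restricting to the typical event $B_i\le K\,b/n$ and evaluating the binomial MGF directly yields a per-batch change of the form $1+(p_i-\tfrac1n)\alpha^{*}b+K'(\alpha^{*})^2\,\tfrac{b}{n}$ with a ``moderate'' (constant) $K'$, which is what \cref{cor:main_ptw} with $\kappa=b$ would require in order to let $\alpha^{*}$ be as large as $\Theta(1)$. This is false for any bin with $p_i>1/n$. For such a bin the \emph{typical} increment in one batch is already $\Theta\!\big(bp_i-b/n\big)=\Theta(b/n)$ normalized load, so the typical value of $e^{\alpha^{*}\Delta y_i}$ is $e^{\Theta(\alpha^{*}b/n)}$. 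Taking $b=n\log n$ and $\alpha^{*}=\Theta(1)$, this is $n^{\Theta(1)}$ — a polynomial multiplicative blow-up of $\Phi_{\alpha^{*},i}$ in a single batch — whereas your claimed bound is $O(\log n)$. No choice of truncation threshold $K$ repairs this: the dominant mass of the tilted binomial lies at $B_i\approx bp_ie^{\alpha^{*}}$, which is still $\Theta(b/n)$, so either the threshold includes it (and the truncated MGF remains $n^{\Theta(1)}$) or it excludes it (and the ``atypical'' event has probability bounded away from $1$, so that contribution dominates). The hyperbolic cosine potential necessarily charges these bins to $\Phi$, and the absorption argument in \cref{thm:main_ptw} only works when the per-step error really is quadratic, i.e., when $\alpha^{*}=O(n/b)$ — which is exactly the weak-bound regime you are trying to escape. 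Your intermediate $\Gamma_{\alpha_2}$ does not save this either: the pointwise bound $\Phi_{\alpha^{*},i}\le\Phi_{\alpha_2,i}^{\alpha^{*}/\alpha_2}$ raises a quantity controlled only in expectation (via $\Gamma_{\alpha_1}=O(n)$) to a large power, and this cannot be bounded by $\Ex{\Gamma_{\alpha_1}}$ alone.

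The paper gets around precisely this obstruction by \emph{not} using a hyperbolic cosine potential at the large smoothing scale. After establishing (\cref{lem:gamma_linear_whp}) that $\tilde{\Gamma}$ with the \emph{small} parameter $\tilde\alpha=\Theta(\min(1/\log n,n/b))$ stays $O(n)$ over $\log^3 n$ consecutive batch endpoints — which is proved by a bounded-differences concentration argument with bad events (\cref{lem:kutlin_3_3}) rather than by your proposed union bound over batch endpoints, which would not handle arbitrary $m$ — it introduces the one-sided, \emph{truncated} potential $\Lambda^t=\sum_{i:y_i^t\ge k}e^{\gamma(y_i^t-k)}$ with threshold $k=\Theta(1/\tilde\alpha)=\Theta(b/n+\log n)$ and parameter $\gamma=\min\!\big(\tfrac{\eps}{4CS},\tfrac{n\log n}{b}\big)$. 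Conditionally on $\tilde\Gamma=O(n)$, at most $\delta n$ bins have $y_i\ge k$, so every counted bin has $p_i\le(1-\eps)/n$; for such bins the per-batch MGF is $<1$ even with the larger $\gamma$ (\cref{lem:lambda_drops}), and the only positive term is the $n\cdot e^{C\gamma b/n}=\poly(n)$ contribution of bins newly crossing the threshold. The bins with $p_i>1/n$ that break your argument simply never appear in $\Lambda$: they sit below $k$. This ``truncate in load space and only count the heavily overloaded bins, which are all biased downward'' idea is the missing ingredient, and it is qualitatively different from truncating the binomial count as you propose.
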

\begin{rem}
The same gap bound holds also for processes with a time-dependent probability vector $p^t$, as long as for all $t$ being a multiple of $b$, the probability vector $p^t$ satisfies $\mathcal{C}_1$ and $\mathcal{C}_2$ for the same $\epsilon, \delta$ and $C$.
\end{rem}

There are two key steps in the proof:

\textbf{Step 1:} Similar to the analysis in~\cite[Theorem 5.3]{LS21}, we will be using two instances of the $\Gamma$ potential defined in \cref{sec:weak_bound} for $\alpha := \frac{\eps\delta}{40 \cdot C^2 \cdot S^2} \cdot \min\big(\frac{1}{\log n}, \frac{n}{b} \big)$. The second instance $\tilde{\Gamma}$ has a smaller smoothing factor $\tilde{\alpha} := \frac{\alpha}{8 \cdot 30}$, %
\[
\tilde{\Gamma}^t := \sum_{i = 1}^n \left( e^{\tilde{\alpha} y_i^t} + e^{-\tilde{\alpha} y_i^t} \right).
\]
So, in particular $\tilde{\Gamma}^t \leq \Gamma^t$ holds. Note that by varying $b \in [n, n \log n]$, both smoothing factors do not change, but this will not affect the upper bound, as we shall see below. %

We will show that \Whp~$\tilde{\Gamma} = \Oh(n)$ for $\log^3 n$ batches. 

\begin{restatable}{lem}{GammaLinearWhp} \label{lem:gamma_linear_whp}
Let $\tilde{c} := 2 \cdot \frac{8c}{\delta}$ where $c := c(\delta) > 0$ is the constant from \cref{cor:main_ptw}. Then, for any $t \geq 0$ being a multiple of $b$,
\[
\Pro{ \bigcap_{j \in [0, \log^3 n]} \left\lbrace \tilde{\Gamma}^{t + j \cdot b} \leq \tilde{c} \cdot n \right\rbrace } \geq 1 - n^{-3}.
\]
\end{restatable}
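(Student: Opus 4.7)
The plan is to apply \cref{cor:main_ptw} to the smaller-smoothing potential $\tilde{\Gamma}$ (with $\tilde{\alpha} = \alpha/240$). First I would verify the preconditions of \cref{cor:main_ptw}: invoking \cref{lem:batching_pot_changes} with $K := 5 C^2 S^2 \cdot b/n$ and $\kappa := b$ provides the required expected per-batch change bounds, and since $\tilde{\alpha}$ is a factor of $240$ smaller than $\alpha \leq \frac{\eps \delta}{40 C^2 S^2} \cdot n/b$, the conditions $\tilde{\alpha} \leq \min(1, \frac{\eps\delta}{8K})$ and $\tilde{\alpha} \leq \frac{n}{2CSb}$ hold with room to spare (even for $b$ as large as $n^3$). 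The corollary then yields, for every $s$ being a multiple of $b$: (i) $\Ex{\tilde{\Gamma}^s} \leq \frac{8c}{\delta} \cdot n = \tilde{c} n /2$, and (ii) a conditional drift $\Ex{\tilde{\Gamma}^{s+b} \mid \mathfrak{F}^s} \leq (1-\mu) \cdot \tilde{\Gamma}^s + A$, where $\mu = \frac{\eps\delta}{8} \cdot \frac{b\tilde{\alpha}}{n}$ and $A = c \cdot b \eps \tilde{\alpha}$ satisfy $A/\mu = \tilde{c} n / 2$.

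Rewriting (ii) in terms of $N^s := \tilde{\Gamma}^s - \tilde{c}n/2$ gives $\Ex{N^{s+b} \mid \mathfrak{F}^s} \leq (1-\mu) \cdot N^s$, so the positive part of $N^s$ behaves as a supermartingale whose excursions contract multiplicatively in expectation. This is the self-correcting drift that one will exploit: every time $\tilde{\Gamma}^s$ pushes above $\tilde{c}n/2$, the process is pulled back toward that level.

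The core difficulty is that these expectation bounds alone only give constant-probability deviation bounds via Markov ($\Pr[\tilde{\Gamma}^s > \tilde{c} n] \leq 1/2$), which is far from the $1 - n^{-3}$ claim over $\log^3 n$ checkpoints. To upgrade, I would establish a one-step concentration inequality: conditional on $\mathfrak{F}^s$, the deviation $\tilde{\Gamma}^{s+b} - \Ex{\tilde{\Gamma}^{s+b} \mid \mathfrak{F}^s}$ has sub-exponential tails at scale $\Theta(n)$. The batch decomposes into $b$ independent ball placements, and each ball's contribution to each bin's $\tilde{\Phi}_i, \tilde{\Psi}_i$ factor is of the form $e^{\pm \tilde{\alpha} w}$ or $e^{\pm \tilde{\alpha} w/n}$; by \cref{lem:bounded_weight_moment} these moment generating functions are tightly controlled, and because $\tilde{\alpha}$ is a factor $240$ smaller than $\alpha$, the per-ball variance proxy is small enough that a Freedman/Bernstein martingale-difference computation should yield $\Pr[\tilde{\Gamma}^{s+b} - \Ex{\cdot \mid \mathfrak{F}^s} > n/2] \leq n^{-5}$.

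With this in hand, the conclusion follows by induction on $j$: given $\tilde{\Gamma}^{t+jb} \leq \tilde{c}n$, the drift shrinks the conditional mean to at most $\tilde{c}n/2$, and the one-step concentration gives $\tilde{\Gamma}^{t+(j+1)b} \leq \tilde{c}n$ with probability $\geq 1 - n^{-5}$; a union bound over the $\log^3 n$ checkpoints absorbs the error. The $j = 0$ base case is subtle since Markov on $\Ex{\tilde{\Gamma}^t}$ is too weak, and I would handle it by bounding $\Ex{(\tilde{\Gamma}^t)^k}$ for $k = \Theta(\log n)$ via iterating the drift inequality together with the one-step variance bound, then applying a moment Markov inequality. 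The main obstacle is precisely this concentration step: a naive Azuma bound gives per-batch deviations of order $\sqrt{n}$, which summed over $\log^3 n$ batches would overshoot the tight $\Theta(n)$ slack between $\Ex{\tilde{\Gamma}^s}$ and $\tilde{c}n$. The argument must leverage both the multiplicative contraction from the drift and the unusually small smoothing $\tilde{\alpha}$ to keep deviations at the correct scale.
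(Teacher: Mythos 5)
Your outline — drift from \cref{cor:main_ptw}, a per-batch concentration inequality, then induction with a union bound over $\log^3 n$ checkpoints — matches the paper's structure. The paper also uses \cref{lem:batching_pot_changes} and \cref{cor:main_ptw} for $\tilde{\alpha}$ to get the drift (its \cref{lem:large_gamma_exponential_drop}), a per-batch concentration bound, and \cref{lem:gamma_1_poly_n_implies_gamma_2_linear_whp} to find a starting checkpoint where $\tilde{\Gamma} \leq \tilde{c}n$. However, there are two substantive problems with your execution.

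First, the claim that ``given $\tilde{\Gamma}^{t+jb} \leq \tilde{c}n$, the drift shrinks the conditional mean to at most $\tilde{c}n/2$'' is false. Your own recurrence $\Ex{N^{s+b}\mid\mathfrak{F}^s}\leq(1-\mu)N^s$ with $\mu=\frac{\eps\delta}{8}\cdot\frac{b\tilde{\alpha}}{n}=\Theta(1/\log n)$ contracts the excess $N^s = \tilde{\Gamma}^s-\tilde{c}n/2$ by only a $1-\Theta(1/\log n)$ factor per batch. So conditional on $\tilde{\Gamma}^s\leq\tilde{c}n$, the next mean is at most $\tilde{c}n-\Theta(n/\log n)$ (see \cref{lem:large_gamma_exponential_drop}~$(iv)$, which gives $\tilde{c}n-n/\log^2 n$), \emph{not} $\tilde{c}n/2$. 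The concentration you must prove is therefore at scale $\Theta(n/\log^2 n)$, a factor of $\log^2 n$ tighter than the $n/2$ you are aiming for. This is still achievable (the paper does it), but your reasoning about the available slack is wrong, and the related remark about ``$\Theta(n)$ slack'' in your final paragraph is not correct either.

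Second, and more fundamentally, the Freedman/Bernstein computation cannot be carried out as described because the per-ball change in $\tilde{\Gamma}$ is not bounded by anything useful unconditionally: a single ball landing in a bin with $y_i^t$ large changes $\tilde{\Gamma}$ by $\Theta(\tilde{\Phi}_i^t\cdot\tilde{\alpha}w)$, and $\tilde{\Phi}_i^t$ is exponential in $y_i^t$. You need to first condition on $\Gamma^s=\mathrm{poly}(n)$ for all $s$ in the batch (this is what forces $\tilde{\Phi}_i^s\leq n^{1/8}$ via \cref{lem:gamma_1_poly_implies}) and on the weights being $O(\log n)$, and then bound the impact of the complementary ``bad'' event separately. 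The paper packages exactly this trade-off into Kutin's bounded-differences inequality with a bad event (\cref{lem:kutlin_3_3}); any Freedman-style argument needs the same structure, and your proposal never sets it up. Your remark about the $\tilde{\alpha}/\alpha=1/240$ factor is the right intuition for \emph{why} the conditional per-ball jump is small, but it must be made conditional. The same omission affects your base case: the paper avoids high-moment bounds entirely by looking back $\log^3 n$ batches and using the multiplicative contraction above $\tilde{c}n$ to find a good checkpoint (\cref{lem:gamma_1_poly_n_implies_gamma_2_linear_whp}); your sketch of bounding $\ex{(\tilde{\Gamma}^t)^k}$ for $k=\Theta(\log n)$ is not developed, and would again require the bad-event machinery to control the moments.
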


We prove this by conditioning on $\Gamma^t = \poly(n)$ which implies that $\Delta\tilde{\Gamma}^{t+1} = \Oh(\frac{n}{b} \cdot n^{1/4})$ (\cref{lem:gamma_1_poly_implies}~$(ii)$). This in turn allows us to apply a bounded difference inequality (\cref{lem:kutlin_3_3}) to prove concentration for $\tilde{\Gamma}$. %
The complete proof is given in \cref{sec:gamma_linear_whp}.

\textbf{Step 2:} We start by exploiting that conditioning on $\{ \tilde{\Gamma}^{t + j \cdot b} \leq \tilde{c} \cdot n \}$, the number of bins with load at least $k := \frac{1}{\tilde{\alpha}} \cdot \log(\tilde{c} / \delta) = \Theta(\max(b/n, \log n))$ is at most $\delta n$. We define the following potential function which only takes bins into account that are overloaded by at least $k$ balls:
\[
\Lambda^t := \sum_{i : y_i^t \geq k} \Lambda_i^t \cdot e^{\gamma \cdot (y_i^t - k)},
\]
where $\gamma := \min\big(\frac{\eps}{4CS}, \frac{n \log n}{b} \big)$. This means that when $\{ \tilde{\Gamma}^{t_0 + j \cdot b} \leq \tilde{c} \cdot n \}$ holds, the probability of allocating to one of these bins is $p_i \leq \frac{1-\eps}{n}$, because of condition $\mathcal{C}_2$. Hence, the potential drops in expectation (\cref{lem:lambda_drops}) and this means that \Whp~$\Lambda^m = \poly(n)$, implying an $\Oh(k + \gamma^{-1} \cdot \log n) = \Oh(b/n + \log n)$ gap.

\subsection{Step 1: \texorpdfstring{$\tilde{\Gamma}$}{tilde{Gamma}} is linear w.h.p.} \label{sec:gamma_linear_whp}

In this subsection, we will prove \cref{lem:gamma_linear_whp}. In \cref{sec:step_1_preliminaries}, we prove some properties of the $\Gamma$ and $\tilde{\Gamma}$ potential and in \cref{sec:gamma_linear_whp_complete} we combine these to show that \Whp~$\tilde{\Gamma}^t = \Oh(n)$ for $\log^3 n$ batches.

\subsubsection{Preliminaries} \label{sec:step_1_preliminaries}

For constant $\lambda > 0$ as defined in \cref{sec:batched_model}, we define the following event, for any round $t \geq 0$\[
\mathcal{H}^t := \left\lbrace  w^t \leq \frac{15}{\lambda} \cdot \log n \right\rbrace,
\]
which means that the weight of the ball sampled in round $t$ is $\Oh(\log n)$.

\begin{lem} \label{lem:many_h_i}
For any $b \leq n^{3}$ and for any $t \geq 0$,
\[
\Pro{\bigcap_{s \in [t, t + 2b \log^3 n]} \mathcal{H}^s} \geq 1 - n^{-10}
\]
\end{lem}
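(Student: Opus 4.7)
The plan is to prove this via a Markov-type tail bound on a single weight combined with a union bound over the $\Oh(b \log^3 n)$ rounds in the window.

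First, I would bound the probability that a single weight exceeds $\frac{15}{\lambda}\log n$. Since by hypothesis $\ex{e^{\lambda W}} =: M_\lambda$ is a finite constant, Markov's inequality applied to the non-negative random variable $e^{\lambda W}$ gives
\[
\Pr\!\left[ W > \tfrac{15}{\lambda} \log n \right] = \Pr\!\left[ e^{\lambda W} > n^{15} \right] \leq \frac{M_\lambda}{n^{15}}.
\]
Since the weights $w^{s}$ are i.i.d.\ samples from $W$, the same estimate holds for each $s$, i.e.\ $\Pr[(\mathcal{H}^s)^c] \leq M_\lambda \cdot n^{-15}$.

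Next, I would take a union bound over all $s \in [t, t+2b\log^3 n]$. The number of rounds in this window is at most $2b \log^3 n + 1$, and using $b \leq n^3$ this is at most $2 n^3 \log^3 n + 1$. Hence
\[
\Pr\!\left[ \bigcup_{s \in [t, t+2b\log^3 n]} (\mathcal{H}^s)^c \right] \leq (2 n^3 \log^3 n + 1) \cdot \frac{M_\lambda}{n^{15}} \leq n^{-10}
\]
for all sufficiently large $n$, since $M_\lambda$ is a constant and $\log^3 n = n^{o(1)}$. Taking complements yields the claim.

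There is essentially no obstacle here: the whole argument is a routine tail estimate enabled by the moment generating function hypothesis on $W$, with the constant $15$ in the definition of $\mathcal{H}^t$ chosen precisely so that Markov at threshold $e^{15\log n}$ beats the $\Oh(b\log^3 n) = \Oh(n^3 \log^3 n)$ union bound by a polynomial factor. The only thing to double-check is the dependence on the constant $\lambda$ from \cref{sec:batched_model}, but since $\lambda$ is fixed (not depending on $n$), $M_\lambda$ is absorbed into the $n^{-10}$ bound for large $n$.
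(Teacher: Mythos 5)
Your proof is correct and follows essentially the same route as the paper: a Markov bound on $e^{\lambda W}$ to control the tail of a single weight, followed by a union bound over the $\Oh(b\log^3 n)\leq \Oh(n^3\log^3 n)$ rounds. The paper simply routes the single-weight tail estimate through its auxiliary Lemma~\ref{lem:chernoff} (which is itself proved by the same Markov argument you wrote out directly), obtaining $n^{-14}$ per round rather than your $M_\lambda n^{-15}$; both versions comfortably beat the polynomial union bound.
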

\begin{proof}
Since $w^t$ is sampled according to $W$ with $\ex{ e^{\lambda W}} < \infty$, by \cref{lem:chernoff}
\[
\Pro{w^t \geq \frac{15}{\lambda} \cdot \log n} \leq n^{-14}.
\]
By taking the union bound over the interval $[t, t + 2b \log^3 n]$ and since $b \leq n^3$ we get the conclusion.
\end{proof}

We will now show that when $\Gamma^t = \poly(n)$ and $\mathcal{H}^t$ holds, then $\Delta\tilde{\Gamma}^{t+1}$ is small.

\begin{lem} \label{lem:gamma_1_poly_implies}
Let  $\tilde{c} := \tilde{c}(\delta) > 0$ be the constant defined in \cref{lem:gamma_linear_whp}. For any $t \geq 0$, where $\Gamma^{t} \leq 2\tilde{c} \cdot n^{26}$ and $\mathcal{H}^t$ holds, then $(i)$ $\tilde{\Gamma}^t \leq n^{5/4}$ and $(ii)$
$
 | \tilde{\Gamma}^{t+1} - \tilde{\Gamma}^{t} | \leq \frac{n}{b} \cdot n^{1/4}
$. Further, let $\tilde{x}^t$ be the load vector obtained by moving the $t$-th ball of the load vector $x^t$ to some other bin, then $(iii)$ when $\mathcal{H}^t$ holds holds, $\Gamma^t(\tilde{x}^t) \leq 2 \cdot \Gamma^t(x^t)$.
\end{lem}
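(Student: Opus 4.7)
The plan is to prove (i), (ii), (iii) as separate calculations, each using the relations $\tilde{\alpha} = \alpha/240$ and $\alpha \log n \leq \eps\delta/(40 C^2 S^2)$, the latter being immediate from $\alpha = \frac{\eps\delta}{40 C^2 S^2} \min(1/\log n, n/b)$.

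For (i), I would observe that each single exponential term in $\Gamma^t$ is bounded above by $\Gamma^t \leq 2\tilde{c} n^{26}$, so $e^{\alpha|y_i^t|} \leq 2\tilde{c} n^{26}$. Raising this to the power $1/240$ yields $e^{\tilde{\alpha}|y_i^t|} \leq (2\tilde{c} n^{26})^{1/240}$, hence $\tilde{\Gamma}_i^t = e^{\tilde\alpha y_i^t} + e^{-\tilde\alpha y_i^t} \leq 2(2\tilde{c})^{1/240} n^{26/240}$ for every bin $i$. Summing over the $n$ bins yields $\tilde{\Gamma}^t = O(n^{1 + 26/240})$, and since $1 + 26/240 < 5/4$ this is at most $n^{5/4}$ for all sufficiently large $n$.

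For (iii), the key quantitative fact I would establish is $\alpha w^t \leq \ln 2$, which forces $e^{\alpha w^t} \leq 2$. Using $\mathcal{H}^t$ to bound $w^t \leq 15 \log n/\lambda$, then combining $\alpha \log n \leq \eps\delta/(40 C^2 S^2)$ with the assumption $S \geq 1/\lambda$, one obtains $\alpha w^t \leq 15\eps\delta/(40 C^2 S) \leq 3/8 < \ln 2$. Moving the $t$-th ball of weight $w^t$ from some bin $i$ to another bin $i'$ leaves the mean load invariant and changes only $y_i$ and $y_{i'}$, so the only two terms that can grow are $\Phi_{i'}$ and $\Psi_i$, each by a factor of $e^{\alpha w^t} \leq 2$, while $\Psi_{i'}$ and $\Phi_i$ can only shrink. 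Therefore $\Gamma(\tilde{x}^t) - \Gamma(x^t) \leq \Phi_{i'}(x^t) + \Psi_i(x^t) \leq \Gamma(x^t)$, which yields the factor of $2$.

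For (ii), placing a single ball of weight $w^t$ in some bin $j$ shifts $y_j$ by $w^t(1-1/n)$ and each other $y_i$ by $-w^t/n$. Using the elementary inequality $1 - e^{-x} \leq e^x - 1$ to merge the $\Phi$ and $\Psi$ contributions per bin, one gets
\[
|\tilde{\Gamma}^{t+1} - \tilde{\Gamma}^t| \leq \tilde{\Gamma}_j^t \cdot (e^{\tilde\alpha w^t} - 1) + \tilde{\Gamma}^t \cdot (e^{\tilde\alpha w^t/n} - 1).
\]
Since $\tilde\alpha \log n$ is a small constant, both exponents are tiny, so $e^x - 1 \leq 2x$ applies and the bound becomes $2\tilde\alpha w^t \cdot (\tilde{\Gamma}_j^t + \tilde{\Gamma}^t/n)$. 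Part (i)'s per-bin estimate gives $\tilde{\Gamma}_j^t$ and $\tilde{\Gamma}^t/n$ both of size $O(n^{26/240})$, so $|\Delta\tilde{\Gamma}^{t+1}| \leq O(\tilde\alpha \log n) \cdot n^{26/240}$. Now $\tilde\alpha \log n = \Theta(1)$ when $b \leq n\log n$ and $\Theta(n\log n/b)$ when $b > n\log n$; because $26/240 < 1/4$ with polynomial slack $n^{1/4 - 26/240} = n^{1/8}$ large enough to absorb a logarithmic factor, in both regimes the right-hand side is eventually at most $(n/b) \cdot n^{1/4}$. The only real subtlety is this regime-tracking in (ii); (i) and (iii) are just direct manipulations.
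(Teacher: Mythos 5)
Your proposal is correct and follows essentially the same strategy as the paper: per-bin estimate $\tilde{\Gamma}_i^t \lesssim n^{26/240}$ via exponent scaling by $\tilde\alpha/\alpha = 1/240$, then for (ii) a decomposition into the change at bin $j$ (shift by $\approx w^t$) plus the change at all other bins (shift by $-w^t/n$), and for (iii) the observation that each affected $\Gamma_i$ inflates by at most $e^{\alpha w^t} \leq 2$ since $\alpha w^t \leq \frac{15\eps\delta}{40C^2 S} \leq 3/8$. Two small differences in presentation, neither material: in (ii) you use $1 - e^{-x} \leq e^x - 1$ to collapse the overloaded/underloaded bin cases into one bound rather than splitting them as the paper does, and you make the two regimes $b \lessgtr n\log n$ explicit rather than using the paper's slightly lossier uniform bound $\tilde\alpha \leq \frac{\eps\delta}{40C^2S^2}\cdot\frac{n}{b}$. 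In (iii) your version is actually a touch sharper (only $\Phi_{i'}$ and $\Psi_i$ can grow, so $\Gamma(\tilde x^t) - \Gamma(x^t) \leq \Phi_{i'}(x^t) + \Psi_i(x^t) \leq \Gamma(x^t)$) but reaches the same factor of $2$. One minor arithmetic slip: $1/4 - 26/240 = 34/240 = 17/120$, not $1/8 = 30/240$; this does not affect the argument since the polynomial slack $n^{17/120}$ still dominates the logarithmic factor.
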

\begin{proof}
For any bin $i \in [n]$,
\begin{align*}
\Gamma^{t} \leq 2\tilde{c} \cdot n^{26} &\Rightarrow e^{\alpha\cdot y_i^{t}} + e^{-\alpha \cdot y_i^{t}} \leq \tilde{c} \cdot n^{26} \\ &\Rightarrow 
y_i^t \leq \frac{27}{\alpha} \log n \, \wedge \,
 -y_i^t \leq \frac{27}{\alpha} \log n,
\end{align*}
where in the second implication we used $\log (2\tilde{c}) + \frac{26}{\alpha} \log n \leq \frac{27}{\alpha} \log n$, for sufficiently large $n$.

This implies that
\begin{equation} \label{eq:gamma_i_bound}
\tilde{\Gamma}_i^t \leq e^{\tilde{\alpha} y_i^t} + e^{-\tilde{\alpha} y_i^t} \leq 2 \cdot e^{\tilde{\alpha} \cdot \frac{27}{\alpha} \log n} \leq 2 \cdot n^{1/8},
\end{equation}
using that $\tilde{\alpha} := \frac{\alpha}{8 \cdot 30}$. Hence, by aggregating, we get the first claim $\Gamma^t = \sum_{i = 1}^n \Gamma_i^t \leq 2 \cdot n \cdot n^{1/8} \leq n^{5/4}$.

We now proceed to the second statement. Consider the change for the bin $j \in [n]$ where the ball was allocated. 
Since $\tilde{\alpha} < \frac{1}{40\cdot S \cdot \log n}$ and $S > \frac{1}{\lambda}$, we have $\tilde{\alpha} \cdot \frac{15}{\lambda} \cdot \log n \leq 1$ and so by a Taylor estimate, $e^{\tilde{\alpha} \cdot \frac{15}{\lambda} \cdot \log n} \leq 1 + 2 \cdot \tilde{\alpha} \cdot \frac{15}{\lambda} \cdot \log n$. If $j \in [n]$ is an overloaded bin, then
\begin{align*}
|\Delta\tilde{\Gamma}_j^t| & \leq \tilde{\Gamma}_j^t \cdot e^{\tilde{\alpha} \cdot \frac{15}{\lambda} \cdot \log n}- \tilde{\Gamma}_j^t \leq \tilde{\Gamma}_j^t \cdot \Big( 1 + \tilde{\alpha} \cdot \frac{30}{\lambda} \cdot \log n \Big)- \tilde{\Gamma}_j^t \\
& = \tilde{\Gamma}_j^t \cdot \tilde{\alpha} \cdot \frac{30}{\lambda} \cdot \log n \leq \frac{n}{b} \cdot n^{1/8} \cdot \log n,
\end{align*}
using \cref{eq:gamma_i_bound} and $\tilde{\alpha} \leq \frac{\eps\delta}{40 \cdot C^2 \cdot S^2} \cdot \frac{n}{b}$.
Similarly, if $j$ is underloaded, then
\begin{align*}
|\Delta\tilde{\Gamma}_j^t| & \leq \tilde{\Gamma}_j^t - \tilde{\Gamma}_j^t \cdot e^{-\tilde{\alpha} \cdot \frac{15}{\lambda} \cdot \log n} \leq \tilde{\Gamma}_j^t - \tilde{\Gamma}_j^t \cdot \Big( 1 - \tilde{\alpha} \cdot \frac{30}{\lambda} \cdot \log n\Big) \\
 & = \tilde{\Gamma}_j^t \cdot \tilde{\alpha} \cdot \frac{30}{\lambda} \cdot \log n \leq \frac{n}{b} \cdot n^{1/8} \cdot \log n.
\end{align*}

The rest of the bins' contributions change due to the change in the average load. In particular, for any overloaded bin $i \in [n] \setminus \{ j \}$, %
\begin{align*}
|\Delta\tilde{\Gamma}_i^t| 
 & \leq \tilde{\Gamma}_i^t \cdot e^{ \tilde{\alpha} \cdot \frac{15}{\lambda} \cdot \frac{\log n}{n}}- \tilde{\Gamma}_i^t \leq \tilde{\Gamma}_i^t \cdot \Big( 1 + 2 \cdot \tilde{\alpha} \cdot \frac{15}{\lambda} \cdot \frac{\log n}{n}\Big)- \tilde{\Gamma}_i^t \\
 & = \tilde{\Gamma}_i^t \cdot \tilde{\alpha} \cdot \frac{30}{\lambda} \cdot \frac{\log n}{n} \leq \frac{1}{b} \cdot \log n \cdot n^{1/8}.
\end{align*}
Similarly, for an underloaded bin $i \in [n] \setminus \{ j \}$,
\begin{align*}
|\Delta\tilde{\Gamma}_i^t| 
 & \leq \tilde{\Gamma}_i^t - \tilde{\Gamma}_i^t \cdot e^{-\tilde{\alpha} \cdot \frac{15}{\lambda} \cdot \frac{\log n}{n}} \leq \tilde{\Gamma}_i^t - \tilde{\Gamma}_i^t \cdot \Big( 1 - 2 \cdot \tilde{\alpha} \cdot \frac{15}{\lambda} \cdot \frac{\log n}{n}\Big) \\
 & = \tilde{\Gamma}_i^t \cdot \tilde{\alpha} \cdot \frac{30}{\lambda} \cdot \frac{\log n}{n} \leq \frac{1}{b} \cdot \log n \cdot n^{1/8}.
\end{align*}
Hence, aggregating over all bins\[
|\Delta\tilde{\Gamma}^{t+1}| \leq |\Delta\Gamma_j^{t+1}| + \sum_{i \in [n] \setminus \{ j \}} |\Delta\Gamma_i^{t+1}| \leq 2 \cdot \frac{n}{b} \cdot n^{1/8} \cdot \log n + n \cdot \frac{1}{b} \cdot \log n \cdot n^{1/8} \leq \frac{n}{b} \cdot n^{1/4},
\]
for sufficiently large $n$.

For statement $(iii)$, let $i, j \in [n]$ be the differing bins between $x^t$ and $\tilde{x}^t$. Then since $\mathcal{H}^t$ holds, $w^t \leq \frac{15}{\lambda} \cdot \log n$, so 
\[
\Gamma_i(\tilde{x}^t) \leq e^{\alpha w^t} \cdot \Gamma_i^t(x^t) \leq 2 \cdot \Gamma_i^t(x^t),
\]
since $\alpha < \frac{1}{40 \cdot S \cdot \log n}$ and $S > 1/\lambda$. Similarly, for $j$,
\[
\Gamma_j(\tilde{x}^t) \leq e^{\alpha w^t} \cdot \Gamma_j^t(x^t) \leq 2 \cdot \Gamma_j^t(x^t),
\]
Hence, 
\[
\Gamma^t(\tilde{x}^t) = \sum_{k = 1}^n \Gamma_k^t(\tilde{x}^t) \leq \sum_{k = 1}^n 2 \cdot \Gamma_k^t(x^t) = 2 \cdot \Gamma^t(x^t).
\]
\end{proof}

Next, we will show that $\ex{\tilde{\Gamma}} = \Oh(n)$ and that when $\tilde{\Gamma}$ is sufficiently large, it drops in expectation over the next batch.

\begin{lem}
\label{lem:large_gamma_exponential_drop}
Let $\tilde{c} := 2 \cdot \frac{8c}{\delta}$ where $c := c(\delta) > 0$ is the constant from \cref{cor:main_ptw}. Then, for any step $t\geq 0$ being a multiple of $b$, \[
(i) \quad \ex{\tilde{\Gamma}^t} \leq \frac{\tilde{c}}{2} \cdot n,
\quad \text{ and } \quad (ii) \quad \ex{\Gamma^t} \leq \frac{\tilde{c}}{2} \cdot n.
\]
Further, there exists a constant $\tilde{c}_1 := \tilde{c}_1(\eps, \delta) > 0$ such that \[
(iii) \quad \Ex{\tilde{\Gamma}^{t+b} \,\, \Big\vert\,\, \mathfrak{F}^{t},\tilde{\Gamma}^t \geq \tilde{c} \cdot n} \leq 
\Big(1-\frac{\tilde{c}_1}{\log n}\Big) \cdot \tilde{\Gamma}^{t},
\]
and 
\[
(iv) \quad \Ex{\tilde{\Gamma}^{t+b} \,\,\Big\vert\,\, \mathfrak{F}^{t},\tilde{\Gamma}^t \leq \tilde{c} \cdot n} \leq 
\tilde{c} \cdot n - \frac{n}{\log^2 n}.
\]
\end{lem}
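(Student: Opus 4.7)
The plan is to derive all four parts from a single per-batch recurrence extracted from \cref{cor:main_ptw}. By the choice of $\alpha$ and $\tilde{\alpha} = \alpha/240$ in the opening of \cref{sec:refined}, both satisfy $\alpha,\tilde{\alpha} \leq \frac{\eps\delta n}{40 C^2 S^2 b} = \frac{\eps\delta}{8K}$ with $K := 5 C^2 S^2 b/n$, and both are $\leq n/(2CSb)$, so \cref{lem:batching_pot_changes} supplies the premise of \cref{cor:main_ptw} with $\kappa := b$. The corollary then yields, at every $t$ that is a multiple of $b$,
\[
\Ex{\tilde{\Gamma}^{t+b} \mid \mathfrak{F}^t} \leq \Big(1 - \tfrac{\eps\delta}{8}\cdot \tfrac{b\tilde{\alpha}}{n}\Big)\tilde{\Gamma}^t + c\,\eps\, b\,\tilde{\alpha},
\]
and the analogous bound for $\Gamma$ with $\alpha$ replacing $\tilde{\alpha}$. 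Parts~(i) and~(ii) are then immediate from the second conclusion of \cref{cor:main_ptw}, since by definition of $\tilde{c}$ we have $\tilde{c}/2 = 8c/\delta$.

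For part~(iii), I would use the hypothesis $\tilde{\Gamma}^t \geq \tilde{c} n$ to absorb the additive term into half of the drop term. A short computation, using $\tilde{c}\delta/16 = c$, gives $\tfrac{\eps\delta}{16}\cdot \tfrac{b\tilde{\alpha}}{n}\cdot \tilde{\Gamma}^t \geq \tfrac{\eps\delta}{16}\cdot b\tilde{\alpha}\cdot \tilde{c} = c\,\eps\, b\,\tilde{\alpha}$, so the recurrence collapses to $\Ex{\tilde{\Gamma}^{t+b} \mid \mathfrak{F}^t} \leq (1 - \tfrac{\eps\delta}{16}\cdot \tfrac{b\tilde{\alpha}}{n})\tilde{\Gamma}^t$. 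It then suffices to show $b\tilde{\alpha}/n = \Omega(1/\log n)$ uniformly in $b \geq n$: since $\tilde{\alpha} = \tfrac{\eps\delta}{9600 C^2 S^2}\cdot \min(1/\log n, n/b)$, we have $b\tilde{\alpha}/n \geq \tfrac{\eps\delta}{9600 C^2 S^2}\cdot \min(b/(n\log n), 1) \geq \tfrac{\eps\delta}{9600 C^2 S^2 \log n}$, using $b \geq n$. This gives (iii) with $\tilde{c}_1 := \tilde{c}_1(\eps,\delta,C,S) > 0$.

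For part~(iv), observe that the affine map $x \mapsto (1 - \tfrac{\eps\delta}{8}\tfrac{b\tilde{\alpha}}{n})\, x + c\eps b\tilde{\alpha}$ has slope in $(0,1)$ (the coefficient is positive because $b\tilde{\alpha}/n$ is small), so its maximum over $\tilde{\Gamma}^t \in [0, \tilde{c} n]$ is attained at $\tilde{\Gamma}^t = \tilde{c} n$. Evaluating there and using $\tilde{c}\delta/8 = 2c$ gives
\[
\Ex{\tilde{\Gamma}^{t+b} \mid \mathfrak{F}^t} \leq \tilde{c} n + b\tilde{\alpha}\eps(c - \tilde{c}\delta/8) = \tilde{c} n - c\eps b\tilde{\alpha},
\]
and the same estimate $b\tilde{\alpha}/n = \Omega(1/\log n)$ from (iii) yields $c\eps b\tilde{\alpha} = \Omega(n/\log n) \geq n/\log^2 n$ for $n$ large enough, as required. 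The only real subtlety is the uniform $\Omega(1/\log n)$ lower bound on $b\tilde{\alpha}/n$ across both regimes of the $\min$ defining $\tilde{\alpha}$: this is what forces the exponential drop in (iii) to scale like $1/\log n$ rather than the weaker $n/b$, and it is what leaves the comfortable $n/\log n$ slack in (iv); everything else is routine bookkeeping of the constants $c$, $\tilde{c}$, $\tilde{c}_1$.
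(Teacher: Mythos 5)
Your proof is correct and follows essentially the same route as the paper's: both derive the per-batch recurrence $\Ex{\tilde{\Gamma}^{t+b}\mid \mathfrak{F}^t}\leq (1-\frac{\eps\delta}{8}\frac{b\tilde{\alpha}}{n})\tilde{\Gamma}^t + c\eps b\tilde{\alpha}$ from Lemma~\ref{lem:batching_pot_changes} and Corollary~\ref{cor:main_ptw}, get (i) and (ii) from the corollary's expectation bound, absorb the additive term using half the drop in (iii), and evaluate at the endpoint $\tilde{c}n$ and use $b\tilde{\alpha}/n=\Omega(1/\log n)$ in (iv). Your presentation is marginally more explicit about the arithmetic identities $\tilde{c}\delta/16=c$ and $\tilde{c}\delta/8=2c$, but there is no substantive difference.
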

\begin{proof}
The first two statements follow immediately by \cref{lem:batching_pot_changes} and \cref{cor:main_ptw}, by setting $\tilde{c} := 16 c/\delta$, since $c := c(\delta) > 0$.

Also, using \cref{lem:batching_pot_changes} and \cref{cor:main_ptw} for $\tilde{\alpha}$, we get that for any $t \geq 0$,
\begin{equation} \label{eq:tilde_gamma_drop}
\Ex{\tilde{\Gamma}^{t+b} \mid \mathfrak{F}^t} \leq \tilde{\Gamma}^t \cdot \Big(1 - \frac{\eps\delta}{8} \cdot \frac{b}{n} \cdot \tilde{\alpha}\Big) + c \cdot b \cdot \eps \cdot \tilde{\alpha}.
\end{equation}

Let $\tilde{c}_3 := \frac{1}{2} \cdot \frac{\eps\delta}{8} \cdot \frac{b}{n} \cdot \tilde{\alpha} \geq \tilde{c}_1/\log n$, for some constant $\tilde{c}_1 > 0$ since $\tilde{\alpha} = \Theta(\min(n/b, 1/\log n))$ and $\eps$ is constant. When $\tilde{\Gamma}^{t} \geq \tilde{c} \cdot n$, then \cref{eq:tilde_gamma_drop} yields,
\begin{align*}
\Ex{\tilde{\Gamma}^{t+b} \,\Big\vert\, \mathfrak{F}^t, \tilde{\Gamma}^{t} \geq \tilde{c} \cdot n} & \leq \tilde{\Gamma}^{t} \cdot \Big(1 - 2 \cdot \tilde{c}_3 \Big) + c \cdot b \cdot \eps \cdot \tilde{\alpha} \\
 & \leq
\tilde{\Gamma}^{t} - \tilde{c}_3 \cdot \tilde{\Gamma}^{t} + \Big(c \cdot b \cdot \eps \cdot \tilde{\alpha}- \tilde{c}_3 \cdot \tilde{\Gamma}^{t}\Big)
 \\  & \leq
\tilde{\Gamma}^{t} - \tilde{c}_3 \cdot \tilde{\Gamma}^{t} + \Big(c \cdot b \cdot \eps \cdot \tilde{\alpha}- \frac{1}{2} \cdot \frac{\eps\delta}{8} \cdot \frac{b}{n} \cdot \tilde{\alpha} \cdot \frac{16 c}{\delta} \cdot n \Big)
 \\ &\leq
\Big(1-\frac{\tilde{c}_1}{\log n} \Big) \cdot \tilde{\Gamma}^{t}.
\end{align*}
Similarly, when $\Gamma^t < \tilde{c} \cdot n$, \cref{eq:tilde_gamma_drop} yields,
\begin{align*}
\Ex{\tilde{\Gamma}^{t+b} \,\Big\vert\, \mathfrak{F}^t, \tilde{\Gamma}^{t} < \tilde{c} \cdot n} & \leq \tilde{c} \cdot n \cdot \Big(1 - 2 \cdot \tilde{c}_3\Big) + c \cdot b \cdot \eps \cdot \tilde{\alpha} \\
 & = \tilde{c} \cdot n - \tilde{c} \cdot \tilde{c}_3 \cdot n + \Big(c \cdot b \cdot \eps \cdot \tilde{\alpha}- \tilde{c} \cdot \tilde{c}_3 \cdot n \Big)
 \\ &\leq
\tilde{c} \cdot n - \frac{\tilde{c} \cdot \tilde{c}_1}{\log n} \leq \tilde{c} \cdot n - \frac{n}{\log^2 n}. \qedhere
\end{align*}
\end{proof}

In the next lemma, we show that \Whp~$\Gamma$ is $\poly(n)$ for every step in an interval of length $2 b \log^3 n$.

\begin{lem} \label{lem:gamma_continuous}
Let $\tilde{c} := 2 \cdot \frac{8c}{\delta}$ be the constant defined in \cref{lem:large_gamma_exponential_drop}. For any $n \leq b \leq n^3$ and for any $t \geq 0$ being a multiple of $b$,
\[
\Pro{ \bigcap_{s \in [t, t + 2b \log^ 3 n]} \left\{ \Gamma^{s} \leq \tilde{c} \cdot n^{26} \right\} } \geq 1 - n^{-10}.
\]
\end{lem}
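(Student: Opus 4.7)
The plan is a direct Markov-plus-union-bound argument. The key insight is that \cref{lem:batching_pot_changes} can be applied not only at full batch boundaries but at any intermediate step $s$ within a batch, because the first $b' \le b$ balls of a batch starting at $t' := t+jb$ are still i.i.d.\ draws from $(p, W)$, and the constraint $\alpha \leq n/(2CSb)$ in that lemma is automatically preserved when $b' \leq b$. Substituting $b'$ for $b$ in the bound would give, for every $s = t' + b' \in [t', t'+b]$ and every $i \in [n]$,
\[
\Ex{\Phi_i^s \mid \mathfrak{F}^{t'}} \leq \Phi_i^{t'} \cdot \left(1 + (p_i - \tfrac{1}{n})\alpha b' + 5C^2 S^2 \tfrac{(b')^2 \alpha^2}{n^2}\right),
\]
and analogously for $\Psi_i^s$.

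Summing over $i$, using condition $\mathcal{C}_2$ (so that $\sum_i \Phi_i^{t'}(p_i - \tfrac{1}{n}) \leq \tfrac{C-1}{n}\Phi^{t'}$) together with the fact that $\alpha b \leq \tfrac{\eps\delta}{40 C^2 S^2} \cdot n$ in both regimes of the $\min$ defining $\alpha$, I would collapse the correction factors to a single constant and obtain
\[
\Ex{\Gamma^s \mid \mathfrak{F}^{t'}} \leq K_0 \cdot \Gamma^{t'}
\]
for some constant $K_0 := K_0(\eps, \delta, C, S) > 1$. By \cref{lem:large_gamma_exponential_drop}~$(ii)$ and the tower property, this yields $\Ex{\Gamma^s} \leq K_0 \tilde{c} n / 2$ uniformly for every $s \in [t, t+2b\log^3 n]$.

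Finally, Markov's inequality gives $\Pr[\Gamma^s > \tilde{c} n^{26}] \leq K_0/(2n^{25})$. Since $b \leq n^3$, the interval $[t, t+2b\log^3 n]$ contains at most $3 n^3 \log^3 n$ integer steps for large $n$, and a union bound yields failure probability at most $3 K_0 \log^3 n /(2 n^{22}) \leq n^{-10}$ for $n$ large enough.

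The main obstacle is justifying the sub-batch extension of \cref{lem:batching_pot_changes}; this is essentially a one-line observation (its proof never uses that the sub-batch size equals the full batch size, only that the $b'$ ball-bin/weight pairs are i.i.d.\ conditional on $\mathfrak{F}^{t'}$), but it deserves an explicit remark. A secondary consistency check is that $\alpha b = O(n)$ holds in both regimes $b \leq n\log n$ and $b > n\log n$: in the former $\alpha = \Theta(1/\log n)$ so $\alpha b \leq n$, and in the latter $\alpha = \Theta(n/b)$ so $\alpha b = \Theta(n)$; hence the amplification factor $K_0$ is genuinely $O(1)$ in both cases.
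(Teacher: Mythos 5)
Your proposal is correct and takes essentially the same approach as the paper: both rest on the sub-batch growth bound $\Ex{\Gamma^{t'+r}\mid\mathfrak{F}^{t'}}\leq K_0\cdot\Gamma^{t'}$ with $K_0=O(1)$ (obtained by applying \cref{eq:phi_batched_i} and \cref{eq:psi_batched_i} from \cref{lem:batching_pot_changes} to the first $r\le b$ balls of a batch), the $\Ex{\Gamma^{t'}}=O(n)$ estimate at batch boundaries from \cref{lem:large_gamma_exponential_drop}, and Markov's inequality plus a union bound over the $O(b\log^3 n)=O(n^3\log^3 n)$ steps. The only difference is organizational: the paper applies Markov twice (once at each batch boundary with threshold $\tilde c\, n^{12}$, and once conditionally for the intra-batch ratio with threshold $n^{14}$), whereas you fold both facts into a single uniform bound $\Ex{\Gamma^s}=O(n)$ via the tower property before applying Markov once — a mild streamlining that reaches the same conclusion by the same route.
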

\begin{proof}
Using \cref{lem:large_gamma_exponential_drop}~$(i)$, Markov's inequality and the union bound, we have for any $t \geq 0$,
\begin{equation} \label{eq:base_union_bound}
\Pro{ \bigcap_{s \in [0, 2 \log^ 3 n]} \left\{ \Gamma^{t + s \cdot b} \leq \tilde{c} \cdot n^{12} \right\} } \geq 1 - \frac{2\log^3 n}{n^{11}}.
\end{equation}
Given that $\Gamma^{t + s \cdot b} \leq \tilde{c} \cdot n^{12}$, we will upper bound $\Gamma^{t + s \cdot b + r}$ for any $r \in [0, b)$. To this end, we will upper bound for each bin $i \in [n]$ the terms $\Phi_i^{t + s \cdot b + r}$ and $\Psi_i^{t + s \cdot b + r}$ separately. Proceeding using \cref{eq:phi_batched_i} in \cref{lem:batching_pot_changes},
\begin{align*}
\Ex{\Phi_i^{t + s \cdot b + r} \mid \Phi_i^{t + s \cdot b}}& \leq  \Phi_i^{t + s \cdot b} \cdot \Big( 1 + \Big(p_i - \frac{1}{n}\Big) \cdot \alpha + 2 \cdot p_i \cdot S\alpha^2 \Big)^r \\ & \leq  \Phi_i^{t + s \cdot b} \cdot \Big( 1 + \frac{C\alpha}{n} + 2 \cdot \frac{C}{n} \cdot S\alpha^2 \Big)^r \\ &\stackrel{(a)}{\leq} \Phi_i^{t + s \cdot b} \cdot \Big( 1 + \frac{2C\alpha}{n} \Big)^r \\ &\leq \Phi_i^{t + s \cdot b} \cdot e^{2\alpha C \cdot \frac{r}{n}} \leq \Phi_i^{t + s \cdot b} \cdot e^{2\alpha C \cdot \frac{b}{n}} \stackrel{(b)}{\leq} 2 \cdot \Phi_i^{t + s \cdot b},
\end{align*}
using in $(a)$ that $\alpha \leq \frac{\eps\delta}{40 \cdot C^2 \cdot S^2} \leq \frac{2}{S}$ and in $(b)$ that $\alpha \leq \frac{\eps\delta}{40 \cdot C^2 \cdot S^2} \cdot \frac{n}{b} \leq \frac{1}{4C} \cdot \frac{n}{b}$. Similarly, using \cref{eq:psi_batched_i} in \cref{lem:batching_pot_changes},
\begin{align*}
\Ex{\Psi_i^{t + s \cdot b + r} \mid \Psi_i^{t + s \cdot b}} &\leq \Psi_i^{t + s \cdot b} \cdot \Big( 1 + \Big( \frac{1}{n} - p_i\Big) \cdot \alpha + 2 \cdot p_i \cdot S\alpha^2 \Big)^r \\ &\leq \Psi_i^{t + s \cdot b} \cdot \Big( 1 + \frac{C \alpha}{n} + 2 \cdot \frac{C}{\alpha} \cdot S\alpha^2 \Big)^r \\ &\stackrel{(a)}{\leq} \Psi_i^{t + s \cdot b} \cdot \Big( 1 + \frac{2C\alpha}{n} \Big)^r \\ &\leq \Psi_i^{t + s \cdot b} \cdot e^{2\alpha C \cdot \frac{r}{n}} \leq \Psi_i^{t + s \cdot b} \cdot e^{2\alpha C \cdot \frac{b}{n}} \stackrel{(b)}{\leq} 2 \cdot \Psi_i^{t + s \cdot b},
\end{align*}
using in $(a)$ that $\alpha \leq \frac{\eps\delta}{40 \cdot C^2 \cdot S^2} \leq \frac{2}{S}$ and in $(b)$ that $\alpha \leq \frac{\eps\delta}{40 \cdot C^2 \cdot S^2} \cdot \frac{n}{b} \leq \frac{1}{4C} \cdot \frac{n}{b}$. Hence, aggregating over the bins,
\[
\Ex{\Gamma^{t + s \cdot b + r} \mid \Gamma^{t + s \cdot b}} \leq 2 \cdot \Gamma^{t + s \cdot b}.
\]
Applying Markov's inequality, for any $r \in [0, b)$,
\[
\Pro{\Gamma^{t + s \cdot b + r} \leq n^{14} \cdot \Gamma^{t + s \cdot b}} \geq 1 - 2 \cdot n^{-14}.
\]
Hence, by a union bound over the $2b \cdot \log^3 n \leq 2 \cdot n^3 \cdot \log^3 n$ possible rounds for $s \in [0, 2\log^3 n]$ and $r \in [0, b)$,%
\begin{align}
\Pro{ \bigcap_{r \in [0, b]}\bigcap_{s\in [0, 2\log^3 n]} \left\{ \Gamma^{t + s \cdot b + r} \leq n^{14} \cdot \Gamma^{t + s \cdot b} \right\} } \geq 1 - 2 \cdot n^{-14} \cdot 2 b \log^3 n \geq 1 - \frac{1}{2} \cdot n^{-10}. \label{eq:double_intersection_lb}
\end{align}
Finally, taking the union bound of \cref{eq:base_union_bound} and \cref{eq:double_intersection_lb}, we conclude
\begin{align*}
\lefteqn{\Pro{ \bigcap_{s \in [t, t + 2b \log^ 3 n]} \left\{ \Gamma^{s} \leq \tilde{c} \cdot n^{26} \right\} }}  \\
& \geq \Pro{ \bigcap_{r \in [0, b]}\bigcap_{s\in [0, 2\log^3 n]} \left\{ \Gamma^{t + s \cdot b + r} \leq n^{14} \cdot \Gamma^{t + s \cdot b} \right\} \cap \bigcap_{s \in [0, 2 \log^ 3 n]} \left\{ \Gamma^{t + s \cdot b} \leq \tilde{c} \cdot n^{12} \right\} }\\
 & \geq 1 - \frac{1}{2} \cdot n^{-10} - \frac{2\log^3 n}{n^{11}} \geq 1 - n^{-10}. \qedhere
\end{align*}
\end{proof}

We will now show that \Whp~there is a step where the exponential potential $\tilde{\Gamma}$ becomes $\Oh(n)$.

\begin{lem} \label{lem:gamma_1_poly_n_implies_gamma_2_linear_whp}
Let $\tilde{c} := 2 \cdot \frac{8c}{\delta}$ be the constant defined in \cref{lem:large_gamma_exponential_drop}. For any $t \geq 0$ being a multiple of $b$,
\[
\Pro{\bigcup_{s \in [0, b \log^3 n]} \left\lbrace \tilde{\Gamma}^{t + s \cdot b} \leq \tilde{c} \cdot n \right\rbrace} \geq 1 - 2 \cdot n^{-8}.
\]
\end{lem}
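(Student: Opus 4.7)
The plan is to leverage the multiplicative drift in \cref{lem:large_gamma_exponential_drop}(iii): whenever $\tilde{\Gamma}$ is at least $\tilde{c} \cdot n$ at the start of a batch, it contracts in expectation by a factor of $(1-\tilde{c}_1/\log n)$ over that batch. Iterating this contraction $\log^3 n$ times compresses the potential by a factor of roughly $e^{-\tilde{c}_1 \log^2 n}$. Combined with the unconditional bound $\ex{\tilde{\Gamma}^t} \leq (\tilde{c}/2) \cdot n$ from \cref{lem:large_gamma_exponential_drop}(i) and a final Markov step, this will yield a super-polynomially small probability that $\tilde{\Gamma}$ remains above $\tilde{c} n$ throughout the entire range, which is stronger than the stated $2n^{-8}$.

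Concretely, I would introduce the stopping time $\tau := \min\{s \geq 0 : \tilde{\Gamma}^{t+s\cdot b} \leq \tilde{c} \cdot n\}$ and note that $\{\tau > s\}$ is $\mathfrak{F}^{t+s\cdot b}$-measurable and implies $\tilde{\Gamma}^{t+s\cdot b} > \tilde{c} n$, which is exactly the conditioning required by \cref{lem:large_gamma_exponential_drop}(iii). Multiplying that conditional inequality by $\mathbf{1}\{\tau > s\}$, taking expectations, and using $\mathbf{1}\{\tau > s+1\} \leq \mathbf{1}\{\tau > s\}$ gives
\[
\ex{\tilde{\Gamma}^{t+(s+1)\cdot b}\cdot\mathbf{1}\{\tau > s+1\}} \leq \Big(1-\frac{\tilde{c}_1}{\log n}\Big)\cdot \ex{\tilde{\Gamma}^{t+s\cdot b}\cdot\mathbf{1}\{\tau > s\}}.
\]
A straightforward induction on $s$, using the base case $\ex{\tilde{\Gamma}^t\cdot\mathbf{1}\{\tau>0\}} \leq \ex{\tilde{\Gamma}^t} \leq (\tilde{c}/2)\cdot n$, yields
\[
\ex{\tilde{\Gamma}^{t+s\cdot b}\cdot\mathbf{1}\{\tau > s\}} \leq \frac{\tilde{c}}{2}\cdot n \cdot \Big(1-\frac{\tilde{c}_1}{\log n}\Big)^{s}.
\]

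Choosing $s = \log^3 n$ bounds the right-hand side by $(\tilde{c}/2)\cdot n \cdot e^{-\tilde{c}_1 \log^2 n}$. Since on $\{\tau > s\}$ one has $\tilde{\Gamma}^{t+s\cdot b} > \tilde{c} n$ by definition of $\tau$, Markov's inequality gives
\[
\Pro{\tau > \log^3 n} \leq \frac{1}{\tilde{c}\cdot n}\cdot \ex{\tilde{\Gamma}^{t+s\cdot b}\cdot\mathbf{1}\{\tau > s\}} \leq \frac{1}{2}\cdot e^{-\tilde{c}_1 \log^2 n},
\]
which is certainly at most $2n^{-8}$ for large $n$. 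The complementary event is exactly the existence of some $s \in [0,\log^3 n]$ (hence a fortiori $s \in [0, b\log^3 n]$) with $\tilde{\Gamma}^{t+s\cdot b} \leq \tilde{c} n$, so the claim follows. The only delicate point in the argument is checking that the stopping-time indicator $\mathbf{1}\{\tau>s\}$ is correctly placed so that the conditional drift of \cref{lem:large_gamma_exponential_drop}(iii) can be iterated; everything else reduces to geometric decay and a single application of Markov's inequality, so I do not expect any substantive obstacle.
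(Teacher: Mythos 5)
Your proof is correct and is essentially the same idea as the paper's: you both "kill" the process at the first batch where $\tilde{\Gamma}$ drops below $\tilde{c}\cdot n$, then iterate the one-batch multiplicative drift from \cref{lem:large_gamma_exponential_drop}(iii). The only real difference is bookkeeping. The paper first applies Markov's inequality to part (ii) to get $\tilde{\Gamma}^t \leq \tilde{c} n^9$ with probability $\geq 1-n^{-8}$, then iterates the drift of the killed potential conditionally on that event, and finishes with a second Markov step plus a union bound. You instead seed the induction with the unconditional bound $\Ex{\tilde{\Gamma}^t} \leq (\tilde{c}/2)n$ and apply Markov only once at the end. That is cleaner, avoids the intermediate conditioning, and in fact gives the strictly stronger bound $\Pro{\tau > \log^3 n} \leq \tfrac{1}{2}e^{-\tilde{c}_1\log^2 n}$, which is super-polynomially small and so dominates $2n^{-8}$ with room to spare. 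Your use of the stopping time $\tau$ is also exactly equivalent to the paper's killed potential $\widehat{\Gamma}^{t+r\cdot b} = \tilde{\Gamma}^{t+r\cdot b}\cdot\mathbf{1}\{\tau\geq r\}$, and the crucial measurability observation — that $\{\tau>s\}\in\mathfrak{F}^{t+s\cdot b}$ and implies $\tilde{\Gamma}^{t+s\cdot b}>\tilde{c}n$, so part (iii) applies — is stated correctly. You also correctly noted that proving the existence of $s\in[0,\log^3 n]$ implies the statement a fortiori for the larger index range written in the lemma.
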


\begin{proof}
By \cref{lem:large_gamma_exponential_drop}~$(ii)$, using Markov's inequality at time $t$ being a multiple of $b$, we have
\begin{equation} \label{eq:basic_markov_tilde_gamma}
\Pro{\tilde{\Gamma}^{t} \leq \tilde{c} \cdot n^9} \geq 1 - n^{-8}.
\end{equation}
Assuming $\tilde{\Gamma}^{t} \leq \tilde{c} \cdot n^9$ and by \cref{lem:large_gamma_exponential_drop}~$(iii)$ if at some step $\tilde{\Gamma}^{r} > \tilde{c} \cdot n$, then \[
\Ex{\tilde{\Gamma}^{r+1} \,\mid\, \tilde{\Gamma}^{r}, \tilde{\Gamma}^{r} > \tilde{c} \cdot n} \leq 
\Big(1-\frac{\tilde{c}_1}{\log n}\Big) \cdot \tilde{\Gamma}^{r},
\]
where $\tilde{c}_1 > 0$ is some constant.
For any $r \in [0, \log^3 n]$, we define the ``killed'' potential function,
\[
\widehat{\Gamma}^{t + r\cdot b} := \tilde{\Gamma}^{t + r \cdot b} \cdot \mathbf{1}_{\bigcap_{ \tilde{r} \in [0, r)} \{ \tilde{\Gamma}^{t + \tilde{r} \cdot b} > \tilde{c} \cdot n\} }.
\]
This potential satisfies the drop inequality of \cref{lem:large_gamma_exponential_drop}  without any condition on the value of $\widehat{\Gamma}^{r}$, that is,
\[
\Ex{\widehat{\Gamma}^{t + (r+1) \cdot b} \,\mid\, \widehat{\Gamma}^{t + r \cdot b}} \leq \Big(1-\frac{\tilde{c}_1}{\log n}\Big) \cdot \widehat{\Gamma}^{t + r \cdot b}.
\]
Inductively applying this for $\log^3 n$ batches, and since $\tilde{c}_1 := \tilde{c}_1(\eps, \delta) > 0$ is a constant, %
\[
\Ex{\left. \widehat{\Gamma}^{t + (\log^3 n) \cdot b} \,\, \right\vert \,\, \tilde{\Gamma}^{t}} \leq \Big(1-\frac{\tilde{c}_1}{\log n}\Big)^{\log^3 n} \cdot \tilde{\Gamma}^{t} \leq e^{-\tilde{c}_1 \cdot \log^2 n} \cdot \tilde{c} \cdot n^{9} < n^{-7}. 
\]
So by Markov's inequality, \[
 \Pro{\widehat{\Gamma}^{t + (\log^3 n) \cdot b} \geq n  \, \mid\, \tilde{\Gamma}^{t} \leq \tilde{c} \cdot n^9} \leq n^{-{8}}
\]
By union bound with \cref{eq:basic_markov_tilde_gamma},
\begin{align*}
 \Pro{\widehat{\Gamma}^{t + (\log^3 n) \cdot b} \geq n} 
 & = \Pro{\widehat{\Gamma}^{t + (\log^3 n) \cdot b} \geq n\, \mid\, \tilde{\Gamma}^{t} \leq \tilde{c} \cdot n^9} \cdot \Pro{\tilde{\Gamma}^{t} \leq \tilde{c} \cdot n^9} + \Pro{\tilde{\Gamma}^{t} > \tilde{c} \cdot n^9} \\
 & < n^{-8} + n^{-8} = 2 \cdot n^{-8}.
\end{align*}
Due to the definition of $\tilde{\Gamma}$, at any step $t \geq 0$, deterministically $\tilde{\Gamma}^t \geq 2n$. So,
we conclude that~w.p.~at least $1 - 2 \cdot n^{-8}$, there must be at least one time step $r \in [0, \log^3 n]$, with $\widehat{\Gamma}^{t + r \cdot b} = 0$ and so $\tilde{\Gamma}^{t + s \cdot b} \leq \tilde{c} \cdot n$ for some $s \in [0, \log^3 n]$.
\end{proof}

\subsubsection{Completing the Proof of Lemma~\ref{lem:gamma_linear_whp}}\label{sec:gamma_linear_whp_complete}

We are now ready to prove \cref{lem:gamma_linear_whp}, using a method of bounded differences with a bad event \cref{lem:kutlin_3_3} (\cite[Theorem 3.3]{K02}).

\GammaLinearWhp*

\begin{proof}

Our starting point is to apply
 \cref{lem:gamma_1_poly_n_implies_gamma_2_linear_whp}, which proves that there is at least one time step $t + \rho \cdot b \in [t - b\log^3 n, t]$ with $\rho \in [-\log^3 n, 0]$ such that the potential $\tilde{\Gamma}$ is small,
 \begin{align} \label{eq:starting_point}
\Pro{\bigcup_{\rho \in [- \log^3 n, 0]} \left\lbrace \tilde{\Gamma}^{t + \rho \cdot b} \leq \tilde{c} \cdot n \right\rbrace } &\geq 1 - 2 \cdot n^{-8}.
 \end{align}
Note that if $t < b \cdot \log^3 n$, then deterministically $\tilde{\Gamma}^0 = 2n \leq \tilde{c} \cdot n$ (which corresponds to $\rho = -t/b$).

We are now going to apply the concentration inequality \cref{lem:kutlin_3_3} to each of the batches starting at $t + \rho \cdot b, \ldots, t + (\log^3 n) \cdot b$ and show that the potential remains $ \leq \tilde{c} \cdot n$ at the end of each batch. In particular, we will show that for any $\tilde{r} \in [\rho, \log^3 n]$, for $r = t + b \cdot \tilde{r}$,
\[
\Pro{\tilde{\Gamma}^{r+b} > \tilde{c} \cdot n \mid \mathfrak{F}^r, \tilde{\Gamma}^r \leq \tilde{c} \cdot n} \leq 3 \cdot n^{-4}.
\]

We will show this by applying \cref{lem:kutlin_3_3} for all steps of the batch $[r, r + b]$. We define the good event \[
\mathcal{G}_r := \mathcal{G}_r^{r+b} := \bigcap_{s \in [r, r + b]} \Big( \{ \Gamma^s \leq \tilde{c} \cdot n^{26} \} \cap \mathcal{H}^s \Big),
\] 
and $\mathcal{B}_r := (\mathcal{G}_r)^c$ the bad event. Using a union bound over \cref{lem:many_h_i} and \cref{lem:gamma_continuous},
\begin{align} \label{eq:bad_event_union_bound}
\Pro{ \bigcap_{s \in [t-b \log^3 n, t+ b \log^ 3 n]} \left( \{ \Gamma^{s} \leq \tilde{c} \cdot n^{26} \} \cap \mathcal{H}^s \right) } \geq 1 - 2n^{-10}.
\end{align}

Consider any $u \in [r, r + b]$. Further, we define the slightly weaker good event, $\tilde{\mathcal{G}}_r^u := \bigcap_{s \in [r, u]} \Big( \{ \Gamma^s \leq 2\tilde{c} \cdot n^{26} \} \cap \mathcal{H}^s \Big)$ and the ``killed'' potential, %
\[
\widehat{\Gamma}_r^u := \mathbf{1}_{\tilde{\mathcal{G}}_r^u} \cdot \tilde{\Gamma}^u.
\]
We will show that the sequence $\widehat{\Gamma}_r^r, \ldots , \widehat{\Gamma}_r^{r + b}$ is strongly difference-bounded by $(n^{5/4}, \frac{n}{b} \cdot n^{1/4}, 2 \cdot n^{-10})$ (\cref{def:strongly_dif_bounded}).

Let $\omega \in [n]^{b}$ be an allocation vector encoding the allocations made in $[r, r + b]$. Let $\omega'$ be an allocating vector resulting from $\omega$ by changing one arbitrary allocation. It follows that,
\begin{align*}
|\widehat{\Gamma}_r^{r+b}(\omega) - \widehat{\Gamma}_r^{r+b}(\omega')| &\leq \max_{\tilde{\omega}} \widehat{\Gamma}_r^{r+b}(\tilde{\omega}) - \min_{\tilde{\omega}} \widehat{\Gamma}_r^{r+b}(\tilde{\omega}) \\
&\leq \max_{\tilde{\omega} \in \tilde{\mathcal{G}}_{r}^{r+b}} \tilde{\Gamma}_r^{r+b}(\tilde{\omega}) -0 
\\ & \leq n^{5/4},
\end{align*}
where in the last inequality we used \cref{lem:gamma_1_poly_implies}~$(i)$ that for any $\tilde{\omega} \in \tilde{\mathcal{G}}_r^{r+b}$, we have $\widehat{\Gamma}_r^{r+b}(\tilde{\omega}) \leq \tilde{\Gamma}_r^{r+b}(\tilde{\omega}) \leq n^{5/4}$. 

We will now derive a refined bound by additionally assuming that $\omega \in \mathcal{G}_r$. Then, for any $u \in [r, r + b]$,
\[
\Gamma^{r+u}(\omega') \leq 2 \cdot \Gamma^{r+u}(\omega) \leq 2 \tilde{c} \cdot n^{26},
\]
where the first inequality is by \cref{lem:gamma_1_poly_implies}~$(iii)$. 
Hence $\omega' \in \tilde{\mathcal{G}}_r^{r+b}$, so $\mathbf{1}_{\tilde{\mathcal{G}}_r^{r+b}(\omega')} = 1$ and  $\widehat{\Gamma}_r^{r+b}(\omega') 
= \tilde{\Gamma}_r^{r+b}(\omega')$. Similarly, for $\omega \in \mathcal{G}_r \subseteq \tilde{\mathcal{G}}_r^{r+b}$, we have $\widehat{\Gamma}_r^{r+b}(\omega) 
= \tilde{\Gamma}_r^{r+b}(\omega)$ and by \cref{lem:gamma_1_poly_implies}~$(ii)$,
\[ 
|\widehat{\Gamma}_r^{r+b}(\omega) - \widehat{\Gamma}_r^{r+b}(\omega')| = |\tilde{\Gamma}^{r+b}(\omega) - \tilde{\Gamma}^{r+b}(\omega')| \leq \frac{n}{b} \cdot n^{1/4}.
\]

Within a single batch all allocations are independent, so we apply \cref{lem:kutlin_3_3}, choosing $\gamma_k := \frac{1}{b}$ and $N := b$, which states that for any $\lambda > 0$ and $\mu := \Ex{\widehat{\Gamma}_r^{r + b} > \mu + \lambda \mid \mathfrak{F}^r, \tilde{\Gamma}^r \leq \tilde{c} \cdot n}$,
\[
\Pro{\widehat{\Gamma}_r^{r + b} > \mu + \lambda \mid \mathfrak{F}^r, \tilde{\Gamma}^r \leq \tilde{c} \cdot n} \leq \exp\left( - \frac{\lambda^2}{2 \cdot \sum_{k = 1}^{b} (\frac{n}{b} \cdot n^{1/4} + n^{5/4} \cdot \frac{1}{b})^2 }\right) + 2 \cdot n^{-10} \cdot \sum_{k = 1}^b b. 
\]
By \cref{lem:large_gamma_exponential_drop}~$(iv)$, we have $\mu \leq \ex{\widehat{\Gamma}_r^{r+b} \mid \tilde{\Gamma}^r < \tilde{c} \cdot n}  \leq \ex{\tilde{\Gamma}_r^{r+b} \mid \tilde{\Gamma}^r < \tilde{c} \cdot n} \leq \tilde{c} \cdot n - n/\log^2 n$. Hence, for $\lambda := n / \log^2 n$, since $n \leq b \leq n^3$, we have
\begin{align*}
\Pro{\widehat{\Gamma}_r^{r+b} > \tilde{c} \cdot n \mid \mathfrak{F}^r, \tilde{\Gamma}^r \leq  \tilde{c} \cdot n}  & \leq \exp\left( - \frac{n^2/\log^4 n}{2 \cdot b \cdot (2 \cdot \frac{n}{b} \cdot n^{1/4})^2 }\right) + 2n^{-10} \cdot b^2 \\
& \leq \exp\left( - \frac{b}{8 \cdot \log^4 n \cdot n^{1/2}}\right) + 2n^{-10} \cdot n^6 \leq 3 \cdot n^{-4}. 
\end{align*}
Let $\mathcal{K}_{\rho}^{\tilde{r}} := \mathcal{G}_{\rho}^{t + \tilde{r} \cdot b} \cap \{ \tilde{\Gamma}^{t + \rho \cdot b} \leq \tilde{c} \cdot n \}$ for $\tilde{r} \in [\rho, \log^3 n]$. For any $\tilde{r} \geq \rho$, since $\mathcal{K}_\rho^{\tilde{r}+1} \subseteq \mathcal{K}_\rho^{\tilde{r}}$, we have%
\begin{align} \label{eq:k_killed}
\Pro{\mathbf{1}_{\mathcal{K}_\rho^{\tilde{r}+1}} \cdot \widehat{\Gamma}^{t + (\tilde{r}+1) \cdot b} >  \tilde{c} \cdot n \mid \mathfrak{F}^r, \mathbf{1}_{\mathcal{K}_\rho^{\tilde{r}}} \cdot \widehat{\Gamma}^{t + \tilde{r} \cdot b} \leq  \tilde{c} \cdot n} \leq 3 \cdot n^{-4}.
\end{align}
By union bound of \cref{eq:starting_point} and \cref{eq:bad_event_union_bound}, 
\begin{align} 
\Pro{\bigcup_{\rho \in [-\log^3 n]} \mathcal{K}_{\rho}^{\log^3 n}}
 & \geq \Pro{\mathcal{G}_{-\log^3 n}^{\log^3 n} \cap \bigcup_{\rho \in [- \log^3 n, 0]} \left\lbrace \tilde{\Gamma}^{t + \rho \cdot b} \leq \tilde{c} \cdot n \right\rbrace} \notag \\
 & \geq 1 - 2 \cdot n^{-8} - 2\cdot n^{-10} \geq 1 - 3 \cdot n^{-8}.\label{eq:exists_k_event}
\end{align}
Let $\mathcal{A} := \bigcap_{\tilde{r} \in [0, \log^3 n]} \left\lbrace \tilde{\Gamma}^{t + \tilde{r} \cdot b} \leq \tilde{c} \cdot n \right\rbrace$ and $\mathcal{A}_{\rho} := \bigcap_{\tilde{r} \in [\rho, \log^3 n]} \left\lbrace \widehat{\Gamma}^{t + \tilde{r} \cdot b} \cdot \mathbf{1}_{\mathcal{K}_\rho^{\tilde{r}}} \leq \tilde{c} \cdot n \right\rbrace$. Then, 
\begin{align*}
\Pro{\mathcal{A}_\rho \mid \tilde{\Gamma}^{t + \rho \cdot b} \leq \tilde{c} \cdot n } 
  & \geq \prod_{\tilde{r} \in [\rho, \log^3 n - 1]} \mathbf{Pr} \left[ \bigcap_{\tilde{s} \in [\rho+1, \tilde{r}+1]} \left\lbrace \mathbf{1}_{\mathcal{K}_\rho^{\tilde{s}}} \cdot \widehat{\Gamma}^{t + \tilde{s} \cdot b} \leq \tilde{c} \cdot n \right\rbrace \right. \\
  &  \qquad \qquad \qquad  \left. \bigg\vert \, \bigcap_{\tilde{s} \in [\rho+1, \tilde{r} - 1]} \left\lbrace \mathbf{1}_{\mathcal{K}_\rho^{\tilde{s}}} \cdot \widehat{\Gamma}^{t + \tilde{s} \cdot b} \leq \tilde{c} \cdot n \right\rbrace, \mathbf{1}_{\mathcal{K}_\rho^{\tilde{s}}} \cdot \widehat{\Gamma}^{t + \tilde{r} \cdot b} \leq \tilde{c} \cdot n\right] \\
 & \geq \prod_{\tilde{r} \in [\rho, \log^3 n - 1]} \Pro{\mathbf{1}_{\mathcal{K}_\rho^{\tilde{r}+1}} \cdot \widehat{\Gamma}^{\tilde{r}+b} > \tilde{c} \cdot n \mid \mathfrak{F}^{t + \tilde{r} \cdot b}, \mathbf{1}_{\mathcal{K}_\rho^{\tilde{r}}} \cdot \widehat{\Gamma}^{t + \tilde{r} \cdot b} \leq  \tilde{c} \cdot n} \\
 & \geq (1 - 3n^{-4})^{2\log^3 n} \geq 1 - 6 \cdot n^{-4} \cdot \log^3 n,
\end{align*}
where in the last inequality we have used \cref{eq:k_killed} and the fact $\rho \geq -\log^3 n$. So,
\begin{align}
\Pro{\mathcal{A}_\rho} 
 & = \Pro{\mathcal{A}_\rho \mid \tilde{\Gamma}^{t + \rho \cdot b} \leq \tilde{c} \cdot n} \cdot \Pro{\tilde{\Gamma}^{t + \rho \cdot b} \leq \tilde{c} \cdot n} + 1 \cdot \Pro{\neg \left\{ \tilde{\Gamma}^{t + \rho \cdot b} \leq \tilde{c} \cdot n \right\}} \notag \\
 & \geq 1 - 6 \cdot n^{-4} \cdot \log^3 n. \label{eq:event_a_rho} 
\end{align}
Note that for any $\rho \in [-\log^3 n, 0]$, we have that $\mathcal{A}_\rho \cap \mathcal{K}_\rho^{\log^3 n} \subseteq \mathcal{A}$. Hence we conclude by the union bound of \cref{eq:exists_k_event} and \cref{eq:event_a_rho}, that
\[
\Pro{\mathcal{A}} \geq \Pro{\bigcup_{\rho \in [-\log^3 n, 0]} \mathcal{K}_{\rho}^{\log^3 n} \cap \bigcap_{\rho \in [-\log^3 n, 0]} \mathcal{A}_\rho} \geq 1  - 3 \cdot n^{-8} - 6 \cdot n^{-4} \cdot \log^6 n \geq 1- n^{-3}.
\]

\end{proof}

\subsection{Step 2: Completing the Proof of Theorem~\ref{thm:strong_gap_bound}}\label{sec:step_two}

Recall the definition of the $\Lambda$ potential function,
\[
\Lambda^t := \sum_{i : y_i^t \geq k} \Lambda_i^t \cdot e^{\gamma \cdot (y_i^t - k)},
\]
where $\gamma := \min\big(\frac{\eps}{4CS}, \frac{n \log n}{b} \big)$ and $k := \frac{1}{\tilde{\alpha}} \cdot \log(\tilde{c} / \delta) = \Theta(\max(b/n, \log n))$.

We will now show that when $\tilde{\Gamma}^t = \Oh(n)$, the stronger potential function $\Lambda^t$ drops in expectation. This will allow us to prove that $\Lambda^m = \poly(n)$ and deduce that \Whp~$\Gap(m) = \Oh(b/n + \log n)$.

\begin{lem} \label{lem:lambda_drops}
Let $\tilde{c} := 2 \cdot \frac{8c}{\delta}$ be the constant defined in \cref{lem:large_gamma_exponential_drop}. For any $t \geq 0$ being a multiple of $b$,
\[
\Ex{\Lambda^{t+b} \mid \mathfrak{F}^t, \tilde{\Gamma}^t \leq \tilde{c} \cdot n} \leq \Lambda^t \cdot e^{-\frac{\gamma\eps}{2n} \cdot b} + n \cdot e^{\frac{C \gamma}{n} \cdot b}.
\]
\end{lem}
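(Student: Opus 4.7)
The plan is to decompose $\Lambda^{t+b}$ across the $n$ bins and bound each contribution individually using the two conditions. Write $\Lambda^{t+b} = \sum_{j=1}^n \mathbf{1}_{\tilde{x}_j^{t+b} \geq k} \cdot e^{\gamma(\tilde{x}_j^{t+b} - k)}$ with the unsorted labeling, and split the bin set $[n] = H \cup L$ where $H := \{j : \tilde{x}_j^t \geq k\}$ are the currently heavy bins. The crucial consequence of conditioning on $\tilde{\Gamma}^t \leq \tilde{c} n$ is that $|H| \leq \delta n$: each $j \in H$ contributes at least $e^{\tilde{\alpha} k} = \tilde{c}/\delta$ to $\tilde{\Gamma}^t$, and the bound $\tilde{c} n$ then forces this count. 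Since $H$ coincides with the top $|H| \leq \delta n$ bins in the load ranking, condition $\mathcal{C}_1$ (combined with the non-decreasing structure of the probability vector, cf.\ $\mathcal{D}_0$, which all processes of interest satisfy) yields the pointwise bound $p_j \leq (1-\eps)/n$ for every $j \in H$.

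The second step is a per-bin expectation computation analogous to the proof of \cref{lem:batching_pot_changes}, but with the smoothing parameter $\gamma$ in place of $\alpha$. Using $\tilde{x}_j^{t+b} = \tilde{x}_j^t + B_j - A$, where $B_j := \sum_{s=1}^b w^{t+s} \mathbf{1}_{i_s = j}$ and $A := \frac{1}{n}\sum_{s=1}^b w^{t+s}$, dropping the indicator upward, and exploiting the independence of the $b$ ball samples given $\mathfrak{F}^t$ together with the weight MGF bound \cref{lem:bounded_weight_moment}, one obtains for any bin $j$,
\[
\ex{\Lambda_j^{t+b} \mid \mathfrak{F}^t} \leq e^{\gamma(\tilde{x}_j^t - k)} \cdot \bigl(1 + (p_j - 1/n)\gamma + 2 p_j S \gamma^2\bigr)^b \leq e^{\gamma(\tilde{x}_j^t - k)} \cdot \exp\bigl(b\gamma(p_j - 1/n) + 2 b p_j S \gamma^2\bigr).
\]
The constraints $\gamma \leq \eps/(4CS)$ and $p_j \leq C/n$ (from $\mathcal{C}_2$) guarantee the quadratic-in-$\gamma$ term is absorbed: $2 b p_j S \gamma^2 \leq \eps \gamma b/(2n)$.

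For $j \in H$, the pointwise bound $p_j \leq (1-\eps)/n$ gives $b\gamma(p_j - 1/n) + 2 b p_j S\gamma^2 \leq -\eps\gamma b/n + \eps\gamma b/(2n) = -\eps\gamma b/(2n)$, hence $\ex{\Lambda_j^{t+b}} \leq \Lambda_j^t \cdot e^{-\eps\gamma b/(2n)}$; summing over $H$ produces the first term $\Lambda^t \cdot e^{-\eps \gamma b /(2n)}$. For $j \in L$, the factor $e^{\gamma(\tilde{x}_j^t - k)} \leq 1$ (since $\tilde{x}_j^t < k$), and only $p_j \leq C/n$ is needed to obtain $b\gamma(p_j - 1/n) + 2bp_jS\gamma^2 \leq C \gamma b/n$; summing over at most $n$ bins gives the second term $n \cdot e^{C\gamma b/n}$. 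The main obstacle I anticipate is precisely the extraction of $p_j \leq (1-\eps)/n$ for heavy bins, which is not an immediate consequence of the prefix-sum condition $\mathcal{C}_1$ by itself; the natural monotonicity of $p$ across ranks, applied to the fact that $H$ lies inside the top $\delta n$ ranks, converts the aggregate prefix bound into the pointwise one required to drive the shrinkage.
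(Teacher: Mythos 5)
Your overall architecture tracks the paper's proof closely: the derivation of $|H| \leq \delta n$ from the conditioning event $\tilde{\Gamma}^t \leq \tilde{c} n$ (each $j \in H$ contributes at least $e^{\tilde{\alpha} k} = \tilde{c}/\delta$), the per-bin MGF calculation via \cref{lem:bounded_weight_moment} in the style of \cref{eq:phi_batched_i}, the absorption of the quadratic-in-$\gamma$ term using $\gamma \leq \eps/(4CS)$ together with $p_i \leq C/n$, and the split of the sum into heavy bins (which shrink) and light bins (which contribute at most $e^{C\gamma b/n}$ each) are all exactly the paper's steps. However, the one step you yourself flag as the ``main obstacle'' is where your argument steps outside the lemma's hypotheses, and this is a genuine gap. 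You resolve the obstacle by invoking monotonicity of $p$ (condition $\mathcal{D}_0$) to convert the prefix bound of $\mathcal{C}_1$ into the pointwise bound $p_j \leq (1-\eps)/n$ for each $j \in H$. But \cref{lem:lambda_drops} sits inside the proof of \cref{thm:strong_gap_bound}, which is stated purely under $\mathcal{C}_1$ and $\mathcal{C}_2$; and one of the paper's two applications of \cref{thm:strong_gap_bound} is precisely graphical allocation (\cref{thm:graphical}), whose probability vector $p^t$ is \emph{not} monotone in the load ranking — \cref{lem:expansion} certifies only the prefix/suffix bounds of $\mathcal{C}_1$ (and $\mathcal{C}_2$), not $\mathcal{D}_0$. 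So the claim that ``all processes of interest satisfy $\mathcal{D}_0$'' fails for exactly the application the lemma's generality is designed to reach.

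The paper avoids the pointwise bound entirely: after reducing each $\Ex{\Lambda_i^{t+b} \mid \mathfrak{F}^t}$ to $\Lambda_i^t \cdot\bigl(1 + (p_i-1/n)\gamma + 2p_iS\gamma^2\bigr)^b$, it invokes the majorization lemma \cref{lem:quasilem2} over the heavy bins. Since the coefficients $\Lambda_i^t$ are non-increasing in the sorted index $i$ and, by $\mathcal{C}_1$, the prefix sums of $p$ over $i \leq \delta n$ are dominated by those of the extremal vector $q$ with $q_i = (1-\eps)/n$, the aggregate is maximized at $q$, which then yields the factor $\bigl(1 - \gamma\eps/n + 2CS\gamma^2/n\bigr)^b$ without ever touching an individual $p_j$. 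Your argument is correct and self-contained for the subclass of processes satisfying $\mathcal{D}_0$ and $\mathcal{D}_1$ (e.g.\ \TwoChoice, $(1+\beta)$, $\Quantile(\delta)$), but it does not establish \cref{lem:lambda_drops} under its stated hypotheses. To repair it, replace the pointwise extraction by the Abel-summation / majorization step as in \cref{lem:quasilem2}, which only needs the prefix-sum inequality of $\mathcal{C}_1$ together with the monotonicity of the coefficients $\Lambda_i^t$ (a property of the sorted load vector, not of $p$).
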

\begin{proof}
 When $\{ \tilde{\Gamma}^t \leq \tilde{c} \cdot n \}$ holds, the number of bins with load $y_i^t \geq k$ is at most 
\[
\tilde{c} \cdot n \cdot e^{- \tilde{\alpha} \cdot k} = \tilde{c} \cdot n \cdot e^{-\log(\tilde{c}/\delta)} = \delta \cdot n.
\]
For any bin $i$ with $y_i^t \geq k$, we get as in \cref{eq:phi_batched_i},%
\begin{align*}
\Ex{\Lambda_i^{t + b} \mid \mathfrak{F}^t} 
 & \leq \Lambda_i^t \cdot \left( 1 + \Big(p_i - \frac{1}{n}\Big) \cdot \gamma + 2 \cdot p_i \cdot S\gamma^2 \right)^b.
\end{align*}
For these bins $i \leq \delta n$ and so by \cref{lem:quasilem2}, the upper bound on $\Ex{\Lambda^{t+b} \mid \mathfrak{F}^t, \tilde{\Gamma}^t \leq \tilde{c} \cdot n}$ is maximized when $p_i = \frac{1 - \eps}{n}$, so
\begin{align*}
\sum_{i : y_i^t \geq k} \Ex{\Lambda_i^{t + b} \mid \mathfrak{F}^t, \tilde{\Gamma}^t \leq \tilde{c} \cdot n} 
 & \stackrel{(a)}{\leq} \Lambda_i^t \cdot \left( 1 - \frac{\gamma\eps}{n} + 2 \cdot C \cdot S \cdot \frac{\gamma^2}{n} \right)^b \\
 & \stackrel{(b)}{\leq} \Lambda_i^t \cdot \left( 1 - \frac{\gamma\eps}{2n}\right)^b \stackrel{(c)}{\leq} \Lambda_i^t \cdot e^{-\frac{\gamma\eps}{2n} \cdot b},
\end{align*}
using in $(a)$ that $p_i \leq C/n$, in $(b)$ that $\gamma\leq \frac{\eps}{4CS}$ and in $(c)$ that $1 + z \leq e^z$ for any $z$. For the rest of the bins with $i > \delta n$,
\begin{align*}
\Ex{\Lambda_i^{t + b} \mid \mathfrak{F}^t} 
 & \leq \Lambda_i^t \cdot \left( 1 + \Big(p_i - \frac{1}{n}\Big) \cdot \gamma + 2 \cdot p_i \cdot S\gamma^2 \right)^b \\
 & \stackrel{(a)}{\leq} \Lambda_i^t \cdot \left( 1 + \frac{C}{n} \cdot \gamma - \frac{1}{n} \cdot \gamma + 2 \cdot \frac{C}{n} \cdot S\gamma^2 \right)^b \\
 & \stackrel{(b)}{\leq} \Lambda_i^t \cdot \left( 1 + \frac{C \cdot \gamma}{n}\right)^b \stackrel{(c)}{\leq} \left( 1 + \frac{C \cdot \gamma}{n}\right)^b \stackrel{(d)}{\leq} e^{\frac{C\gamma}{n} \cdot b},
\end{align*}
using in $(a)$ that $p_i \leq C/n$, in $(b)$ that $\gamma \leq \frac{\eps}{4CS}$, in $(c)$ that $\Lambda_i^t \leq 1$ and in $(d)$ that $1 + z \leq e^z$ for any $z$.

Aggregating the contributions of all bins, 
\begin{align*}
\Ex{\Lambda^{t+b} \mid \mathfrak{F}^t, \tilde{\Gamma}^t \leq \tilde{c} \cdot n} 
 &\leq \sum_{ i : y_i^t \geq k} \Lambda_i^t \cdot e^{-\frac{\gamma\eps}{2n} \cdot b} + \sum_{ i : y_i^t < k} e^{\frac{C\gamma}{n} \cdot b} \\ 
 &\leq \Lambda^t \cdot e^{-\frac{\gamma\eps}{2n} \cdot b} + n \cdot e^{\frac{C\gamma}{n} \cdot b}.\qedhere
\end{align*}
\end{proof}

\StrongGapBound*
\begin{proof}
Consider first the case when $m \geq b \cdot \log^3 n$. Let $t_0 = m - b \cdot \log^3 n$. Let $\mathcal{E}^t := \{ \tilde{\Gamma}^{t} \leq \tilde{c} \cdot n\}$. Then using \cref{lem:gamma_linear_whp},
\begin{equation} \label{eq:eps_interval}
\Pro{ \bigcap_{j \in [0, \log^3 n]} \mathcal{E}^{t_0 + j \cdot b} } \geq 1 - n^{-3}.
\end{equation}
We define the killed potential $\tilde{\Lambda}$, with $\tilde{\Lambda}^{t_0} := \Lambda^{t_0}$ and for $j > 0$,
\[
\tilde{\Lambda}^{t_0 + j \cdot b} := \mathbf{1}_{\cap_{s \in [0, j)} \mathcal{E}^{t_0 + s \cdot b}} \cdot \Lambda^{t_0 + j \cdot b}.
\]
By \cref{lem:lambda_drops}, we have
\begin{align} \label{eq:lambda_drop}
\Ex{\tilde{\Lambda}^{t_0+(j+1)\cdot b} \mid \mathfrak{F}^{t_0+j \cdot b}} \leq \tilde{\Lambda}^{t_0 + j \cdot b} \cdot e^{-\frac{\gamma\eps}{2n} \cdot b} + n \cdot e^{\frac{C \gamma}{n} \cdot b}.
\end{align}
Assuming $\mathcal{E}^{t_0}$ holds, we have %
\[
y_1^{t_0} \leq \frac{1}{\tilde{\alpha}} \cdot ( \log \tilde{c} + \log n) \leq \frac{2}{\tilde{\alpha}} \cdot \log n,
\]
for sufficiently large $n$. Hence for some constant $\kappa_1 >0$,
\[
\tilde{\Lambda}^{t_0} \leq n \cdot e^{\gamma \cdot y_1^{t_0}} \leq e^{\kappa \log^2 n}.
\]

Applying \cref{lem:geometric_arithmetic} to \cref{eq:lambda_drop} with $a := e^{-\frac{\gamma\eps}{2n} \cdot b}$ and $b := n \cdot e^{\frac{C \gamma}{n} \cdot b}$ for $\log^3 n$ steps, %
\begin{align}
\Ex{\tilde{\Lambda}^{m} \mid \mathfrak{F}^{t_0}, \tilde{\Lambda}^{t_0} \leq e^{\kappa_1 \log^2 n}} 
& \leq e^{\kappa_1 \log^2 n} \cdot a^{\log^3 n} + \frac{b}{1 - a} \notag \\
&\stackrel{(a)}{\leq} 1 + 1.5 \cdot b \leq  2 \cdot n \cdot e^{\frac{C\gamma}{n} \cdot b} \stackrel{(b)}{\leq} 2 \cdot n^{1 + \kappa_2} \label{eq:poly_n_expectation}.
\end{align}
using in $(a)$ that $\frac{\gamma\eps}{2n} \cdot b = \Omega(1)$ and $a$ a constant $< 1$ and in $(b)$ that  $\frac{C \gamma}{n} \cdot b \leq \kappa_2 \cdot \log n$ for some constant $\kappa_2 > 0$, since $\gamma = \min\big(\frac{\eps}{4CS}, \frac{n \log n}{b} \big)$.

By Markov's inequality, we have
\begin{align*}
\Pro{\tilde{\Lambda}^{m} \leq 2 \cdot n^{4 + \kappa_2} \mid \mathfrak{F}^{t_0}, \tilde{\Lambda}^{t_0} \leq e^{\kappa_1 \log^2 n} } \geq 1 - n^{-3}.
\end{align*}
Hence, by \cref{eq:eps_interval},
\begin{align} \label{eq:tilde_lambda_poly_n}
\Pro{\tilde{\Lambda}^{m} \leq 2 \cdot n^{4 + \kappa_2}} = \Pro{\tilde{\Lambda}^{m} \leq 2 \cdot n^{4 + \kappa_2} \mid \mathcal{E}^{t_0}} \cdot \Pro{\mathcal{E}^{t_0}} \geq (1 - n^{-3}) \cdot (1 - n^{-3}).
\end{align}
Combining \cref{eq:eps_interval} and \cref{eq:tilde_lambda_poly_n}, we have
\begin{align*}
\Pro{\Lambda^{m} \leq 2 \cdot n^{4 + \kappa_2}}  &\geq \Pro{\left\lbrace\tilde{\Lambda}^{m} \leq 2 \cdot n^{4 + \kappa_2} \right\rbrace \cap \bigcap_{j \in [0, \log^3 n]} \mathcal{E}^{t_0 + j \cdot b}} 
\\ %
\\ &\geq (1 - n^{-3}) \cdot (1 - n^{-3}) - n^{-3} \geq 1 - n^{-2}.
\end{align*}
Finally, $\{ \Lambda^{m} \leq 2 \cdot n^{4 + \kappa_2} \}$ implies
\[
y_1^{m} \leq k + \frac{\log 2}{\gamma} + \frac{1}{\gamma} \cdot (4 + \kappa_2) \cdot \log n = \Oh( b/n + \log n),
\]
since $\gamma = \min\big(\frac{\eps}{4CS}, \frac{n \log n}{b} \big)$ and $\Theta(\max(b/n, \log n))$, so the claim follows.

For the case when $m < b \cdot \log^3 n$, note that $\tilde{\Lambda}^{t_0} = 2n$ deterministically, which is a stronger starting point in \cref{eq:poly_n_expectation} to prove that $\ex{\Lambda^m} \leq 2 \cdot n^{1 + \kappa_2}$, which in turn implies the gap bound.
\end{proof}

\section{Application to Graphical Allocations and (\texorpdfstring{$1+\beta$}{1+beta})-process}\label{sec:graphical}

In~\cite{PTW15}, the authors proved several bounds on the gap for the $(1+\beta)$ process (in the setting without batches) where balls are sampled from a weight distribution with constant $\lambda > 0$ as defined in~\cref{sec:batched_model}. 
In the second part of~\cite{PTW15}, the authors used a majorization argument to deduce gap bounds for graphical balanced allocation. However, due to the involved majorization argument not working for weights, all results for graphical allocation in~\cite{PTW15} assume balls are unweighted. This lack of results for weighted graphical allocations is summarized as Open Question 1 in~\cite{PTW15}. By leveraging the results in previous sections, we are able to fill this ``gap''. 

For a $d$-regular (and connected) graph, let us define the conductance as:
\[
 \Phi(G) := \min_{S \subseteq V \colon 1 \leq |S| \leq n/2} \frac{|E(S,V \setminus S)|}{|S| \cdot d}.
\]
We will call a family of graphs an \emph{expander}, if $\Phi$ is at least a constant bounded below from $0$ (as $n \rightarrow \infty$).

\begin{restatable}{lem}{LemExpansion}\label{lem:expansion}
Consider \Graphical on a $d$-regular graph with conductance $\Phi$ with batch size $b = 1$. Then for any $\mathfrak{F}^t$, the probability vector $p^t$ in round $t \geq 0$ satisfies for all $1 \leq k \leq n/2$,
\[
 \sum_{i=1}^k p_i^t \leq (1-\Phi) \cdot \frac{k}{n},
\]
and similarly, for any $n/2+1 \leq k \leq n$,
\[
 \sum_{i=k}^{n} p_i^t \geq (1+\Phi) \cdot \frac{n-k+1}{n}.
\]
Further, $\max_{i \in [n]} p_i^t \leq  \frac{d}{n}$. Thus, the vector $p^t$ satisfies condition $\mathcal{C}_1$ with $\delta=1/2$, $\epsilon=\Phi$ and condition $\mathcal{C}_2$ with $C=d$.
\end{restatable}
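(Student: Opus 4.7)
The plan is to directly compute the prefix and suffix sums of the time-$t$ probability vector $p^t$ in terms of edge counts of a bin set, and then invoke the conductance assumption on $G$.

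For the prefix sum, I would fix the labeling so that $y_1^t \geq y_2^t \geq \ldots \geq y_n^t$ and let $S_k = \{1, \ldots, k\}$ denote the $k$ most loaded bins. The key observation is that a ball is placed into some bin of $S_k$ if and only if the uniformly sampled edge lies entirely within $S_k$: any edge with exactly one endpoint in $S_k$ has its strictly lighter endpoint outside $S_k$, so the ball lands outside. Thus $\sum_{i=1}^k p_i^t = |E(S_k, S_k)|/|E|$. Using the degree identity $2|E(S_k,S_k)| + |E(S_k, V\setminus S_k)| = d k$ together with the conductance bound $|E(S_k, V\setminus S_k)| \geq \Phi \cdot d k$ (which applies because $|S_k| = k \leq n/2$), I obtain $|E(S_k,S_k)| \leq (1-\Phi) \cdot dk/2$. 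Dividing by $|E| = nd/2$ gives the desired $(1-\Phi)k/n$ bound.

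The suffix bound is symmetric. Letting $T_k = \{k, \ldots, n\}$ (so $|T_k| = n-k+1 \leq n/2$ for $k \geq n/2 + 1$), a ball is placed into $T_k$ iff the sampled edge has at least one endpoint in $T_k$, since mixed edges now have their lighter endpoint inside $T_k$. Hence $\sum_{i=k}^n p_i^t = (|E(T_k, T_k)| + |E(T_k, V\setminus T_k)|)/|E|$, which after using the degree identity on $T_k$ simplifies to $(d|T_k| + |E(T_k, V\setminus T_k)|)/(nd)$. The conductance bound $|E(T_k, V\setminus T_k)| \geq \Phi \cdot d |T_k|$ then yields the stated $(1+\Phi)(n-k+1)/n$ lower bound. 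For the maximum-probability claim, each bin $i$ is incident to exactly $d$ edges, so $p_i^t \leq d/|E| = 2/n \leq d/n$.

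The only mild technicality is ties between bins straddling the boundary of $S_k$ (or $T_k$): with strict inequalities the equivalences above are exact, whereas with ties one can either choose the sorting worst-case (which only inflates the $p_i$ being summed and hence only strengthens the inequality we want) or appeal to random tie-breaking as in \eqref{eq:averaging_pi}. Given the clean edge-counting identities, I do not expect any step to be substantially hard; the main thing to track is that $\delta = 1/2$ makes $\epsilon \cdot \delta/(1-\delta) = \epsilon$, so the single parameter $\epsilon = \Phi$ simultaneously witnesses both halves of $\mathcal{C}_1$, while $\mathcal{C}_2$ follows immediately with $C = d$.
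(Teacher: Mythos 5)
Your proof is correct and follows essentially the same strategy as the paper: identify the set $S_k$ of the $k$ heaviest (or $T_k$ of the $n-k+1$ lightest) bins, translate the probability of landing there into an edge-count ratio, and invoke the conductance bound. The one small variation is in the suffix estimate: you compute $\sum_{i=k}^n p_i^t = (|E(T_k,T_k)| + |E(T_k, V\setminus T_k)|)/|E|$ directly and lower-bound the cut size, while the paper instead upper-bounds the complementary prefix sum $\sum_{i=1}^{k-1} p_i^t$ (applying conductance to the complement, which is the smaller side when $k \geq n/2+1$) and subtracts from $1$. These are algebraically equivalent and both rely on the same conductance application, so this is a cosmetic rather than structural difference. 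A minor plus for your version: your bound $\max_i p_i^t \leq d/|E| = 2/n$ is the tight edge-count bound, whereas the paper only records the looser $d/n$; both suffice for $\mathcal{C}_2$ with $C=d$, though. Your remark about ties is a fair caveat that the paper silently elides; choosing the worst-case relabeling of tied bins (or random tie-breaking via \eqref{eq:averaging_pi}) resolves it in both treatments.
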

The proof of this lemma closely follows~\cite[Proof of Theorem~3.2]{PTW15}.

\begin{proof}
Fix any load vector $x^t$ in round $t$. Consider any $1 \leq k \leq n/2$. Let $S_k$ be the $k$ ``heaviest'' bins with the largest load. Hence in order to allocate a ball into $S_k$, both endpoints of the sampled edge must be in $S_k$, and hence
\begin{align*}
 \sum_{i=1}^k p_i^t &= \frac{2 \cdot |E(S_k,S_k)|}{2 \cdot |E|} \\
 &= \frac{|E(S_k,V) | -|E(S_k,V \setminus S_k)|}{2 \cdot |E|} \\
 &\leq \frac{|S_k| \cdot d - |S_k| \cdot \Phi \cdot d}{nd} = (1-\Phi) \cdot \frac{k}{n},
\end{align*}
where the inequality used the definition of conductance $\Phi$. Hence $p^t$ satisfies condition $\mathcal{C}_1$ with $\epsilon=\Phi$. 
Now, we will consider the suffix sums for $n/2+1 \leq k \leq n$. We start by upper bounding the prefix sum up to $k-1$,
\begin{align*}
\sum_{i=1}^{k-1} p_i^t &\leq \frac{|S_k| \cdot d - |V \setminus S_k| \cdot \Phi \cdot d}{nd} \\
&\leq \frac{(k-1) \cdot d - (n - k + 1) \cdot \Phi \cdot d}{nd} \\
&= \frac{(k-1) - (n - k + 1) \cdot \Phi}{n},
\end{align*}
where the inequality used our assumption that $G$ has expansion $\Phi > 0$.
Hence the suffix sum is
\[
\sum_{i = k}^n p_i^t = 1 - \sum_{i=1}^{k-1} p_i^t \geq 1 - \frac{(k-1) - (n - k + 1) \cdot \Phi}{n} = (1+\Phi) \cdot \frac{n - k + 1}{n}.
\]

Finally, we also know that for any bin $i \in [n]$,
$
p_i^{t} \leq \frac{d}{n},
$
since in the worst-case we allocate a ball into bin $i$ whenever an edge incident to $i$ is chosen. 
\end{proof}

The next result is for \Graphical in the unbatched setting.
\begin{thm}\label{thm:unbatched_graphical}
Consider \Graphical on a $d$-regular graph with conductance $\Phi > 0$. Further, consider the non-batched setting, i.e., $b = 1$ and assume that balls are sampled from a weight distribution with constant $\lambda > 0$. Then there is a constant $k := k(\lambda) > 0$ such that for any $m \geq 0$,
\[
\Pro{\max_{i \in [n]} |y_i^m| \leq k \cdot \frac{d}{\Phi} \cdot \log n } \geq 1-n^{-2}.
\]
\end{thm}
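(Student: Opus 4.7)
The plan is to reuse the hyperbolic cosine potential machinery developed in \cref{sec:general_hyperbolic_pot}, specifically applying \cref{cor:main_ptw} with $\kappa=1$ step-by-step (so with $b=1$), mirroring the structure of the proof of \cref{thm:weak_gap_bound} but without batching. The gap bound will then follow by bounding $\ex{\Gamma^m}$ linearly in $n$ and invoking Markov's inequality on the hyperbolic cosine potential $\Gamma$ with a suitable smoothing parameter $\alpha$.

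The two ingredients needed are a condition on the probability vector and a bound on the expected one-step changes of $\Phi_i$ and $\Psi_i$. For the first, \cref{lem:expansion} shows that at every round $t \geq 0$ and for every load vector revealed by $\mathfrak{F}^t$, the (time-dependent) probability vector $p^t$ of \Graphical satisfies $\mathcal{C}_1$ with $\delta = 1/2$ and $\epsilon = \Phi$, as well as $\mathcal{C}_2$ with $C = d$. For the second, a single-step specialization of the computation performed in \cref{lem:batching_pot_changes} (essentially \cref{eq:phi_batched_i} and \cref{eq:psi_batched_i} with $b=1$), combined with \cref{lem:bounded_weight_moment}, gives for any sufficiently small $\alpha > 0$ and each bin $i$,
\[
\Ex{\Phi_i^{t+1} \mid \mathfrak{F}^t} \leq \Phi_i^t \cdot \left(1 + \Big(p_i^t - \frac{1}{n}\Big)\alpha + 2SC \cdot \frac{\alpha^2}{n}\right),
\]
and the symmetric bound for $\Psi_i^{t+1}$. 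Hence the preconditions of \cref{cor:main_ptw} are met with $\kappa = 1$ and $K = 2SC = \Oh(d)$, where $S := S(\lambda)$ is the constant from \cref{lem:bounded_weight_moment}.

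Choosing $\alpha := \frac{\epsilon \delta}{8K} = \Theta(\Phi/d)$ and applying \cref{cor:main_ptw} yields $\ex{\Gamma^m} \leq (8c/\delta) \cdot n$ for all $m \geq 0$, where $c := c(\delta) > 0$ is a constant. Markov's inequality then gives $\Pro{\Gamma^m \leq (8c/\delta)\cdot n^3} \geq 1 - n^{-2}$, which in turn implies
\[
\max_{i \in [n]} |y_i^m| \leq \frac{1}{\alpha}\cdot \bigl( \log(8c/\delta) + 3\log n \bigr) = \Oh\!\left( \frac{d}{\Phi} \cdot \log n \right),
\]
as desired. The main subtlety, and the reason the theorem does not follow verbatim from \cref{thm:weak_gap_bound}, is that $p^t$ is time-dependent and varies with the current load ordering. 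This is not a genuine obstacle: \cref{cor:main_ptw} is applied conditionally on $\mathfrak{F}^t$, and \cref{lem:expansion} supplies $\mathcal{C}_1$ and $\mathcal{C}_2$ in every such state; any random tie-breaking is absorbed exactly as in the remark following \cref{thm:weak_gap_bound}.
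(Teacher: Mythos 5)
Your proposal is correct and follows essentially the same route as the paper: verify $\mathcal{C}_1$ and $\mathcal{C}_2$ via \cref{lem:expansion}, establish one-step MGF bounds for $\Phi_i$ and $\Psi_i$ with $K = \Oh(dS)$, apply \cref{cor:main_ptw} with $\kappa = 1$ to get $\ex{\Gamma^m} = \Oh(n)$, and finish with Markov's inequality. The only cosmetic difference is that you re-derive the per-step drift bounds by specializing the computation in \cref{eq:phi_batched_i} and \cref{eq:psi_batched_i} to $b=1$, whereas the paper cites Lemmas 2.1 and 2.3 of~\cite{PTW15} directly; this yields the same constant up to a factor of $2$ and does not change the argument.
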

\begin{proof}
Using~\cite[Lemma 2.1]{PTW15}, we have for $\alpha = \frac{\Phi}{16 \cdot d \cdot S}$ and some constant $S := S(\lambda) > 1$,
\[
\Ex{\Delta\Phi_i^{t+1} \mid x^t} \leq \Phi_i^t \cdot \Big(\Big(p_i - \frac{1}{n}\Big) \cdot \alpha + dS \cdot \frac{\alpha^2}{n}\Big),
\]
and using~\cite[Lemma 2.3]{PTW15},
\[
\Ex{\Delta\Psi_i^{t+1} \mid x^t} \leq  \Psi_i^t \cdot \Big(\Big(\frac{1}{n} - p_i\Big) \cdot \alpha + dS \cdot \frac{\alpha^2}{n}\Big).
\]
Hence, applying \cref{cor:main_ptw} for $\eps := \Phi$, $\delta := 1/2$, we get for any $m \geq 0$, 
\[
\Ex{\Gamma^m} \leq \frac{8c}{\delta} \cdot n,
\]
for some constant $c := c(\delta) > 0$. Hence, by Markov's inequality
\[
\Pro{\Gamma^m \leq \frac{8c}{\delta} \cdot n^3} \geq 1 - n^{-2}.
\]
The event $\{ \Gamma^m \leq \frac{8c}{\delta} \cdot n^3 \}$ implies that
\[
\max_{i \in [n]} |y_i^m| \leq \log \Big(\frac{8c}{\delta} \Big) + 3 \cdot \frac{16 \cdot d \cdot S}{\Phi} \cdot \log n \leq k \cdot \frac{d}{\Phi} \cdot \log n,
\]
for some constant $k := k(\lambda) > 0$.
\end{proof}
The next result is the batched version of \cref{thm:unbatched_graphical}.
\begin{thm}\label{thm:graphical}
Consider \Graphical on a $d$-regular graph with conductance $\Phi > 0$. Further, consider the batched setting with $b \geq n$ and assume that balls are sampled from a weight distribution with constant $\lambda > 0$. Then there is a constant $k := k(\lambda) > 0$ such that it holds for any $m \geq 0$ being a multiple of $b$,
\[
\Pro{\max_{i \in [n]} |y_i^m| \leq k \cdot \frac{d^2}{\Phi} \cdot \frac{b}{n} \cdot \log n } \geq 1-n^{-2}.
\]
Further, if the conductance $\Phi$ is lower bounded by a constant $>0$ (i.e., $G$ is an expander), $d > 0$ is constant and $n \leq b \leq n^3$, then there is a constant $k := k(\lambda, d) >0$ such that for any $m \geq 0$ being a multiple of $b$,
\[
 \Pro{ y_1^m \leq k \cdot \left( \frac{b}{n} + \log n \right)} \geq 1-n^{-2}.
\]
\end{thm}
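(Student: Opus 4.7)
The plan is to reduce both statements directly to \cref{thm:weak_gap_bound} and \cref{thm:strong_gap_bound} by verifying, via \cref{lem:expansion}, that the (time-dependent) probability vector used by \Graphical satisfies conditions $\mathcal{C}_1$ and $\mathcal{C}_2$ at the start of every batch. Although \cref{lem:expansion} is stated for $b=1$, the conclusions it draws are deterministic statements about the probability vector induced by any load vector $x$: for a $d$-regular graph with conductance $\Phi$, this vector always satisfies $\mathcal{C}_1$ with $\delta=1/2$, $\epsilon=\Phi$, and $\mathcal{C}_2$ with $C=d$. In the batched setting, the vector $p^{j b}$ governing the $j$-th batch is exactly of this form (it is the probability vector obtained by sampling a uniform edge under the load snapshot taken at time $jb$), so the conditions hold with the same parameters at every batch boundary. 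Moreover, \Graphical is intrinsically a process with random tie-breaking between equally loaded endpoints of a sampled edge, which is precisely the setting handled in the tie-breaking remarks following \cref{thm:weak_gap_bound} and \cref{thm:strong_gap_bound}.

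For the first claim, I would invoke \cref{thm:weak_gap_bound} together with its two remarks (on time-dependent $p^t$ and on random tie-breaking), taking $\epsilon=\Phi$, $\delta=1/2$, $C=d$, and $S=S(\lambda)$ from the weight distribution. Tracing the constants in the proof of \cref{thm:weak_gap_bound} — where $K = 5 C^2 S^2 \cdot b/n$ and $\alpha = \epsilon\delta/(8K)$ combine to give $\max_i|y_i^m| = \Oh(C^2 S^2/(\epsilon\delta)) \cdot (b/n) \log n$ — substitution yields $\Oh(d^2 S^2 /\Phi) \cdot (b/n) \log n$. Since $S$ depends only on $\lambda$, this collapses into the claimed $k \cdot (d^2/\Phi) \cdot (b/n) \log n$ bound with $k = k(\lambda)$.

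For the second claim, the additional assumptions $\Phi = \Omega(1)$ and $d=\Oh(1)$ make $\epsilon$, $\delta$ and $C$ all constants, and we are in the regime $n\leq b\leq n^3$, so \cref{thm:strong_gap_bound} (again via its time-dependent-$p^t$ remark) applies verbatim and gives $y_1^m \leq \kappa \cdot (b/n+\log n)$ with probability at least $1-n^{-2}$, where $\kappa = \kappa(\lambda,d)$ absorbs the (now constant) dependencies on $\Phi$, $\delta$, $\eps$, $C$, $S$. The main thing worth double-checking — and what I expect to be the only genuine obstacle — is confirming that the rather delicate machinery behind \cref{thm:strong_gap_bound} (the interplay of $\Gamma$, $\tilde\Gamma$, and $\Lambda$, the bounded-difference concentration in \cref{lem:gamma_linear_whp}, and the killed-potential argument in \cref{sec:step_two}) really only consults the probability vector at the start of the current batch. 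This is the case because \cref{lem:batching_pot_changes} and \cref{cor:main_ptw} only use $\mathcal{C}_1$ and $\mathcal{C}_2$ of the vector active during one batch, the bounded-difference step in \cref{lem:gamma_linear_whp} treats the $b$ allocations within a single batch as independent given $\mathfrak F^t$ (which they are for \Graphical), and the $\Lambda$-drop in \cref{lem:lambda_drops} is established batch-by-batch from the $\mathcal{C}_1$, $\mathcal{C}_2$ conditions at the batch start. Hence the time-dependent extension genuinely applies and both bounds follow by substitution.
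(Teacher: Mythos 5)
Your proposal is correct and follows essentially the same route as the paper's (very terse) proof: invoke \cref{lem:expansion} to establish that \Graphical's time-dependent probability vector satisfies $\mathcal{C}_1$ with $\delta=1/2$, $\eps=\Phi$ and $\mathcal{C}_2$ with $C=d$, then apply \cref{thm:weak_gap_bound} for the first bound and \cref{thm:strong_gap_bound} for the second via the time-dependent-$p^t$ remarks, tracing constants to expose the $d^2/\Phi$ factor. One small inaccuracy: only \cref{thm:weak_gap_bound} is followed by a random-tie-breaking remark (\cref{thm:strong_gap_bound} only has the time-dependent remark), but this is immaterial since \cref{lem:expansion} already characterizes the \Graphical probability vector directly and the time-dependent remark is what carries the argument.
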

Note that our first gap bound for constant $d > 0$, generalizes~\cite[Theorem 3.2]{PTW15}, which is a gap bound of $\Oh(\frac{\log n}{\Phi})$ in the setting without batches and weights. Similarly, our second result extends the $\Oh(\log n)$ bound from~\cite{PTW15} for expanders, and proves that the same gap bound applies in the weighted batched setting with any $b=\Oh(n \log n)$. 

\begin{proof}
The first result follows directly from \cref{lem:expansion} and \cref{thm:weak_gap_bound}.
For the second result, $\epsilon=\Phi$ is a constant $>0$, and we can apply the refined gap bound from~\cref{thm:strong_gap_bound}.
\end{proof}

Next we improve the upper bound on the gap for $(1+\beta)$ for very small $\beta$. In~\cite[Corollary 2.12]{PTW15}, it was shown that this gap is $\Oh(\log n/\beta + \log(1/\beta)/\beta)$. For $1/\beta = n^{\omega(1)}$, the second term dominates. We improve this gap bound to $\Oh(\log n/\beta)$. This is tight up to multiplicative constants for $\beta \leq 1/2$, due to a lower bound of $\Omega( \log n/\beta)$ as shown in~\cite[Section~4]{PTW15}.
\begin{restatable}{thm}{OnePlusBetaGap}\label{thm:OnePlusBetaGap}
Consider the $(1+\beta)$ process for any $\beta \in (0,1]$. Then there exists a constant $k > 0$, such for any $m \geq 1$,
\[
\Pro{\Gap(m) \leq k \cdot \frac{\log n}{\beta} } \geq 1 - n^{-2}.
\]
\end{restatable}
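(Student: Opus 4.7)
The plan is to apply \cref{cor:main_ptw} directly in the unbatched, unweighted setting (i.e., $b=1$ and unit weights), choosing the smoothing parameter $\alpha$ to be a small constant multiple of $\beta$. By \cref{pro:verify}, the $(1+\beta)$ process already satisfies condition $\mathcal{C}_1$ with $\delta = 1/4$ and $\epsilon = \beta/2$, as well as condition $\mathcal{C}_2$ with $C = 2$, so the quantile and bounded-bias hypotheses are in place for free.

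First, I will verify the single-step drift precondition of \cref{cor:main_ptw} with $\kappa = 1$. When a ball is allocated to bin $j$, the normalised load $y_j$ changes by $1 - 1/n$, while every other $y_i$ changes by $-1/n$. A direct Taylor expansion, using $e^x \leq 1 + x + x^2$ for $|x|\leq 1$, yields, for any $0 < \alpha < 1/2$ and every bin $i$,
\[
\Ex{\Phi_i^{t+1} \mid \mathfrak{F}^t} \leq \Phi_i^t \cdot \Bigl(1 + \Bigl(p_i - \tfrac{1}{n}\Bigr)\alpha + K \cdot \tfrac{\alpha^2}{n}\Bigr),
\]
for a universal constant $K$; the symmetric inequality for $\Psi_i$ follows from the same expansion after replacing $\alpha$ by $-\alpha$.

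Second, I will set $\alpha := c_1 \cdot \beta$ with $c_1 > 0$ chosen small enough that $\alpha < \epsilon\delta/(8K) = \beta/(64K)$. \cref{cor:main_ptw} then immediately gives $\Ex{\Gamma^m} \leq (8c/\delta) \cdot n = O(n)$ for every $m \geq 0$, where $c = c(1/4)$ is the constant from \cref{thm:main_ptw}. Markov's inequality yields $\Gamma^m \leq (8c/\delta)\cdot n^{3}$ with probability at least $1 - n^{-2}$, and on this event
\[
\Gap(m) = y_1^m \leq \frac{1}{\alpha}\bigl(\log(8c/\delta) + 3\log n\bigr) \leq k \cdot \frac{\log n}{\beta}
\]
for a suitable constant $k$, which gives the claimed bound.

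The argument is essentially bookkeeping once the machinery of \cref{sec:general_hyperbolic_pot} is in place, so I do not anticipate any serious obstacle. The improvement over the $\Oh(\log n/\beta + \log(1/\beta)/\beta)$ upper bound of~\cite{PTW15} (which matters for $\beta \ll 1/n$) is already baked into \cref{thm:main_ptw}: its additive error term scales as $c \cdot \epsilon \cdot \alpha = \Theta(\beta^2)$ rather than the $\Theta(\alpha) = \Theta(\beta)$ one gets from the looser PTW bound, so the error can be absorbed into the drift at a cost of only a constant factor, with no $\log(1/\beta)$ blow-up.
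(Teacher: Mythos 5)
Your proposal follows essentially the same route as the paper: invoke \cref{pro:verify} to get $\mathcal{C}_1$ with $(\delta,\eps)=(1/4,\beta/2)$, establish the single-step drift precondition for the hyperbolic cosine potential, apply \cref{cor:main_ptw} with $\kappa=1$ to get $\Ex{\Gamma^m}=\Oh(n)$, then Markov plus taking logarithms. The only structural difference is that you re-derive the single-step drift by hand whereas the paper simply cites~\cite[Lemmas 2.1, 2.3]{PTW15}. Your closing remark correctly identifies the source of the improvement over~\cite[Cor.~2.12]{PTW15}: the additive error in \cref{thm:main_ptw} carries the extra factor $\eps$, so it is $\Theta(\eps\alpha)=\Theta(\beta^2)$ rather than $\Theta(\alpha)$, which is exactly what kills the $\log(1/\beta)/\beta$ term — the paper does not spell this out, so it is useful commentary.

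There is one genuine gap: you explicitly restrict to \emph{unit weights}, but the theorem is claimed (see \cref{tab:our_results} and the surrounding discussion of~\cite[Cor.~2.12]{PTW15}) in the random-weight setting with a weight distribution satisfying the MGF condition of \cref{sec:batched_model}. Your Taylor step uses $e^x\leq 1+x+x^2$ for the \emph{deterministic} increment $1-1/n$, which does not bound $\Ex{e^{\alpha(W-W/n)}}$ for a random weight $W$. To repair this you should instead invoke \cref{lem:bounded_weight_moment}, which gives $\Ex{e^{\alpha\kappa W}}\leq 1+\alpha\kappa+S\alpha^2\kappa^2$ and yields the same form of drift inequality with a constant $K=\Oh(S)$ (equivalently, cite~\cite[Lemmas 2.1, 2.3]{PTW15} directly, as the paper does); the rest of your argument then goes through unchanged with $\alpha := c_1\beta$ for $c_1$ small enough that $\alpha < \eps\delta/(8K)$.
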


\begin{proof}
Using~\cite[Lemma 2.1]{PTW15}, we have for $\alpha = \frac{\beta}{4}$ and some constant $S > 1$,
\[
\Ex{\Delta\Phi_i^{t+1} \mid x^t} \leq \Phi_i^t \cdot \Big(\Big(p_i - \frac{1}{n}\Big) \cdot \alpha + 2S \cdot \frac{\alpha^2}{n}\Big),
\]
and using~\cite[Lemma 2.3]{PTW15},
\[
\Ex{\Delta\Psi_i^{t+1} \mid x^t} \leq  \Psi_i^t \cdot \Big(\Big(\frac{1}{n} - p_i\Big) \cdot \alpha + 2S \cdot \frac{\alpha^2}{n}\Big).
\]
By \cref{pro:verify}, the $1+\beta$ process satisfies the $\mathcal{C}_1$ condition for $\eps = \frac{\beta}{4}$ and $\delta = \frac{1}{4}$. By \cref{cor:main_ptw}, there exists $c := c(\delta) > 0$ such that for any $m \geq 1$
\[
\Ex{\Gamma^m} \leq \frac{8c}{\delta} \cdot n.
\]
Hence, using Markov's inequality
\[
\Pro{\Gamma^m \leq n^4} \geq 1 - n^{-2}.
\]
Note that when $\{ \Gamma^m \leq n^4 \}$ holds, we have \[
\Gap(m) \leq \frac{4}{\alpha} \cdot \log n = 4 \cdot \frac{8\cdot (2S)}{\beta \delta} \cdot \log n = \Oh\Big(\frac{\log n}{\beta}\Big). \qedhere
\]
\end{proof}

\section{Lower Bounds}\label{sec:lower_bounds}
For the lower bounds we always assume that balls are unweighted (or equivalently, have unit weight). We recall the following result which assumes no batching, i.e., balls are allocated sequentially using perfect knowledge about the bin loads.
\begin{lem}[{\cite[Theorem 10.4]{LSS21}}] \label{lem:lower_bound_b_n}
Consider any allocation process in the unweighted setting with probability vector $q$ with $\min_{i \in [n]} q_i \geq C/n$ for some constant $C > 0$. Then there exists a constant $k > 0$, such that for $m = \Theta(n \log n)$,
\[
\Pro{\Gap(m) \geq k \cdot \log n} \geq 1 - n^{-2}.
\]
\end{lem}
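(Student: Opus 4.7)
The plan is to reduce to the classical lower bound for \OneChoice\ via a mixture coupling enabled by the hypothesis $\min_i q_i \geq C/n$. Setting $q'_i := (q_i - C/n)/(1-C) \geq 0$ yields a valid probability vector on $[n]$, and $q_i = C \cdot \tfrac{1}{n} + (1-C) \cdot q'_i$. Couple the process so that, independently for each ball, a $\mathrm{Bernoulli}(C)$ coin determines whether the ball is \emph{blue} (placed into a uniformly random bin) or \emph{red} (placed according to $q'$ applied to the current sorted load vector). The marginal distribution of each ball's destination is $q$, so this reproduces the original process.

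In isolation, the blue balls form a \OneChoice\ sub-process whose destinations are independent of everything else. Let $B$ be the number of blue balls; a Chernoff bound gives $\Pro{B \geq Cm/2} \geq 1 - e^{-\Omega(m)} \geq 1 - \tfrac{1}{2}n^{-2}$ for $m = \Theta(n \log n)$. Conditional on $B$, the classical sharp high-probability lower bound for \OneChoice\ (e.g., Raab--Steger) states that for any sufficiently small constant $\eta > 0$, the maximum blue load is at least $B/n + (1-\eta)\sqrt{2(B/n) \log n}$ with probability at least $1 - \tfrac{1}{2}n^{-2}$.

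Since every bin's total load dominates its blue load, combining these events and writing $m = c n \log n$ yields
\[
\Gap(m) \;\geq\; \frac{B}{n} + (1-\eta)\sqrt{2(B/n)\log n} - \frac{m}{n} \;\geq\; \Bigl[(1-\eta)\sqrt{Cc} - \Bigl(1 - \tfrac{C}{2}\Bigr) c\Bigr] \cdot \log n,
\]
with probability at least $1 - n^{-2}$. Choosing the constant $c > 0$ sufficiently small depending on $C$ so that $(1-\eta)\sqrt{Cc} > (1-C/2) c$ makes the bracketed quantity a positive constant $k > 0$, which proves the claim. The main technical delicacy is invoking the \OneChoice\ max-load lower bound in its high-probability form ($1 - n^{-\Omega(1)}$) rather than merely in expectation or with constant probability; this is a standard result which is typically derived via a Poissonization argument combined with anti-concentration of the Binomial and the negative association of the bin loads.
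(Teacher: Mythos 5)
The paper does not give its own proof of this lemma; it is imported verbatim as a citation to \cite[Theorem~10.4]{LSS21}. Your proposal therefore supplies a self-contained argument where the paper merely refers out, and the argument is essentially sound. The key observation—that $q_i \geq C/n$ lets you write $q = C \cdot \mathrm{Unif} + (1-C) \cdot q'$, and since the uniform part is invariant under the sorting that defines the allocation rule, the ``blue'' balls form an honest \OneChoice subprocess whose destinations are i.i.d.\ uniform and independent of everything else—is correct and clean. Coupling this with the fact that each bin's total load dominates its blue load, a Chernoff bound on the number $B$ of blue balls, and a high-probability \OneChoice lower bound at $m = \Theta(n\log n)$ closes the argument. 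Three small remarks. First, the decomposition $q'_i = (q_i - C/n)/(1-C)$ degenerates when $C = 1$ (which is the only other admissible case, since $\sum q_i = 1$ forces $C \leq 1$), but then $q$ is exactly uniform and the claim is just the \OneChoice lower bound directly. Second, the precise Raab--Steger constant $\sqrt{2}$ is unnecessary: your final step only needs that for $m = cn\log n$ the \OneChoice max load is at least $m/n + a\sqrt{(m/n)\log n}$ w.h.p.\ for \emph{some} fixed $a > 0$, since for small enough $c$ the $\sqrt{c}$ term beats the linear-in-$c$ deficit $c(1 - C/2)$; this makes the argument robust to the exact form of the tail estimate at $m = \Theta(n\log n)$, where the pure Gaussian approximation to the Poisson tail is only order-correct. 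Third, you are right that the failure probability can be pushed to $n^{-2}$ (and in fact to $e^{-n^{\Omega(1)}}$ for the lower tail on the maximum) via Poissonization, a one-bin anti-concentration estimate of order $n^{-1+\Omega(\eta)}$, and negative association across bins---worth making that step explicit since the lemma demands the polynomial error bound rather than $1 - o(1)$.
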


We use the following majorization result from~\cite{LSS21} (see also~\cite[Section 3]{PTW15}).

\begin{lem}[Lemma 4.13 in~\cite{LSS21}]\label{lem:majorization}
Consider two allocation processes $Q$ and $P$. The allocation process $Q$ uses at each round a fixed allocation distribution $q$. The allocation process $P$ uses a time-dependent allocation distribution $p^{t}$, which may depend on $\mathfrak{F}^{t}$ but majorizes $q$ at each round $t \geq 0$. 
Let $y^{t}(Q)$ and $y^{t}(P)$ be the two normalized load vectors, sorted decreasingly. Then there is a coupling such that for all rounds $t \geq 0$, $y^{t}(P)$ majorizes $y^{t}(Q)$.
\end{lem}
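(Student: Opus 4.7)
The plan is to proceed by induction on $t$, constructing at each step a coupling of the two allocations that preserves majorization. The base case $t=0$ is immediate since both normalized load vectors are zero. For the inductive step, assume $y^t(P)$ majorizes $y^t(Q)$ and draw a single uniform random variable $U \in [0,1]$, which I use to define $I_P := \min\{k : \sum_{i=1}^{k} p_i^t \geq U\}$ and $I_Q := \min\{k : \sum_{i=1}^{k} q_i \geq U\}$. Because the hypothesis that $p^t$ majorizes $q$ says exactly that the prefix sums of $p^t$ dominate those of $q$, this shared randomness forces $I_P \leq I_Q$ almost surely.

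Given this coupling, the problem reduces to the following deterministic statement: if $u$ and $v$ are sorted-decreasing integer vectors with the same sum such that $u$ majorizes $v$, and $1 \leq j \leq \ell \leq n$, then $\mathrm{sort}(u + e_j)$ majorizes $\mathrm{sort}(v + e_\ell)$. Writing $U_k, V_k$ for the prefix sums and letting $j^*$ (resp.\ $\ell^*$) denote the first index of $u$ (resp.\ $v$) sharing a value with $u_j$ (resp.\ $v_\ell$), a short check shows that adding one ball at position $j$ and re-sorting changes $U_k$ into $U_k + \mathbf{1}[j^* \leq k]$, and similarly for $V_k$. Thus the inequality $U_k \geq V_k$ can fail only at those indices $k$ where $j^* > k$ and $\ell^* \leq k$; in every other case the indicator change is non-negative and the old slack suffices.

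In this troublesome case I would show that the pre-existing slack already satisfies $U_k - V_k \geq 1$. From $j \geq j^* > k$ and $j \leq \ell$ we obtain $\ell > k$, and the combination of $\ell^* \leq k < \ell$ with sorted monotonicity forces $v_k = v_{k+1} = \cdots = v_\ell$. Suppose for contradiction that $U_k = V_k$. Then majorization at index $k-1$ gives $u_k \leq v_k$, and a short induction on $m \in \{k, k+1, \ldots, j\}$, using majorization at index $m+1$ together with the constancy of $v$ on $[k,\ell]$ and the sorted order of $u$, shows $u_k = u_{k+1} = \cdots = u_j = v_k$. But $j^* > k$ combined with $j > k$ and sorted monotonicity yields $u_j < u_k$, a contradiction; hence $U_k \geq V_k + 1$ as needed.

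The main obstacle is precisely this last chain of inequalities: one must unwind ties in both vectors and invoke the inductive majorization hypothesis at many consecutive indices to drive the contradiction. Once the deterministic lemma is in hand, applying it with $u = \mathrm{sort}(x^t(P))$, $v = \mathrm{sort}(x^t(Q))$, $j = I_P$, $\ell = I_Q$ yields majorization of the un-normalized sorted load vectors after step $t+1$. Since both processes allocate exactly one ball per round, the averages advance identically, so majorization of the normalized sorted vectors $y^{t+1}(P)$ and $y^{t+1}(Q)$ follows, closing the induction.
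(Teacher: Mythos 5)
Your proof is correct and follows the canonical single-coin coupling argument for one-step majorization preservation in balanced allocations: a shared uniform $U$ selects the rank for both processes (giving $I_P \leq I_Q$ by the prefix-sum dominance), and then a deterministic lemma shows that incrementing the more-majorizing vector at a no-larger rank preserves majorization after re-sorting. The paper does not reproduce a proof of this lemma (it is cited from Lemma 4.13 of [LSS21], which in turn parallels [PTW15], Section 3), and your argument, including the careful treatment of ties via $j^*$, $\ell^*$ and the chain of equalities forced when $U_k = V_k$, matches that standard derivation.
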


Combining the two lemmas above, we can now prove a lower bound which holds for any batch size:
\begin{restatable}{pro}{LogLower}\label{pro:log_lower}
Consider any allocation process with probability vector $p$ with $\min_{i \in [n]} p_i \geq C/n$ for some constant $C > 0$, in the unweighted batched setting for any $b \geq 1$. Then there exists a constant $k > 0$ such that for $m = \Theta(n \log n)$, 
\[
\Pro{\Gap(m) \geq k\cdot \log n } \geq 1 - n^{-2}.
\]
\end{restatable}

Note that this statement applies to the $(1+\beta)$-process for constant $\beta \in (0,1)$ and $\Quantile(\delta)$ and constant $\delta \in (0,1)$, but it does not apply to \TwoChoice.

\begin{proof}
For the purpose of this lower bound derivation, we assume that the batched setting allocates all $m$ balls sequentially in rounds $t=1,2,\ldots,m$. As the load information does not get updated within each batch of size $b$, this means that the allocation made in each round is described by an allocation vector $p^t$, which depends on $t$ but also on the history of the process, i.e., $\mathfrak{F}^{t}$. 

Let $q$ be the vector $p$ sorted in non-decreasing order. Then, at each round $p^t$ majorizes $q$, since the outdated information implies that $p^t$ is a permutation of $q$.%

We apply Lemma~\ref{lem:majorization} with $p^t$ and $q$ as defined above. Hence for $t=m$, there is a coupling such that the load vector $y^{m}(P)$ majorizes $y^{m}(Q)$, in particular,
\[
 y_1^{m}(P) \geq y_1^{m}(Q),
\]
which is equivalent to $\Gap(P,m) \geq \Gap(Q,m)$. Hence the statement of the lemma follows by \cref{lem:lower_bound_b_n}.
\end{proof}

\begin{restatable}{pro}{LemBnLower}\label{pro:bn_lower}
Consider any allocation process with probability vector $p$ with  $\max_{i \in [n]} p_i \geq \frac{C}{n}$ for some $C > 1$, in the unweighted batched setting with $b \geq n \log n$.  Then, for $\gamma := \min(C - 1, 0.5)$, any bin $j=\argmax_{i \in [n]} p_i$ satisfies
\[
\Pro{ y_j^{b} \geq \frac{\gamma}{4} \cdot \frac{b}{n}} \geq 1 - n^{-\gamma^2/8}.
\]
\end{restatable}
\begin{proof}
For convenience, let us define $\gamma:=\min(C-1, 0.5)$, so  $\gamma \in (0,1/2)$.
Note that during the first batch consisting of $b \geq n \log n$ balls, the load vector is never updated and all balls are allocated using the same probability vector $p$. Hence
each ball will be allocated into some bin $i$ with probability $\frac{C}{n}$, independently. Let $X:=x_i^{b}=\sum_{j=1}^b X_j$, where the $X_j$'s are independent Bernoulli random variables with  $\Ex{X_j} \geq \frac{1+\gamma}{n}$. Hence $\Ex{X} \geq b \cdot \frac{1+\gamma}{n}$. Using the following Chernoff bound, which states that for any $\lambda > 0$,
\[
 \Pro{ X \leq (1-\lambda) \cdot \Ex{ X } } \leq \exp\left( -\lambda^2/ 2 \cdot \Ex{ X} \right).
\]
Picking $\lambda = \gamma/2$ implies
\[
 \Pro{ x_i^{b} \leq (1-\gamma/2) \cdot (1+\gamma) \cdot \frac{b}{n} } \leq \exp\left(-\frac{\gamma^2}{8} \cdot \frac{b}{n} \right) \leq n^{-\gamma^2/8},
\]
where the last inequality used our assumption that $b \geq n \log n$.
If $x_i^b \geq (1-\gamma/2) \cdot (1+\gamma) \cdot \frac{b}{n} $, then this implies for the normalised load,
\[
 y_j^b = x_i^b - \frac{b}{n} \geq \frac{\gamma}{2} \cdot \frac{b}{n} - \frac{\gamma^2}{2} \cdot \frac{b}{n} \geq \frac{\gamma}{4} \cdot \frac{b}{n},
\]
where the last inequality used $\gamma \leq 1/2$. %
\end{proof}

The lemma above can be applied to any process satisfying condition $\mathcal{C}_2$, so unlike \cref{pro:log_lower}, it applies to \TwoChoice.

Note that for \DChoice, $\max_{i \in [n]} p_i \approx \frac{d}{n}$. Hence this lower bound shows that, in sharp contrast to the classical setting without batches, that large values of $d$ lead to a worse performance. This is explained by the higher bias towards underloaded bins, which when given no (or outdated) information about the bins, will lead to a larger gap.

Let us remark that in the proof above, we assumed that the allocation process uses the same probability vector and bin labeling in all rounds of the same batch. In particular, this analysis does not apply to \TwoChoice with random tie-breaking. However, \TwoChoice with random tie-breaking will allocate all balls in the first batch following \OneChoice. Exploiting this, we can then prove that by the end of the batch, there is a unique bin which attains the minimum load if $b = \Omega(n \log n)$, which means for the second batch we can apply \cref{pro:bn_lower}, and conclude that a lower bound of $\Omega(\frac{b}{n})$ holds with constant probability $>0$.

First, we will make use of the following property of $n$ independent Poisson random variables:

\begin{lem} \label{lem:poisson_property}
Consider any $n \geq 2$ and  $\lambda \geq 16 \cdot \log n$. Let $X_1, \ldots , X_n$ be independent Poisson random variables with $X_i \sim \Pois(\lambda)$, and denote by for $Y_{(n)}, Y_{(n-1)}$ the smallest and second smallest of the $X_i$'s. Then there exist constants $\kappa_1, \kappa_2 > 0$ such that,
\[
\Pro{Y_{(n-1)} - Y_{(n)} \geq \kappa_1 \cdot \sqrt{\lambda/\log n}} \geq \kappa_2.
\]
\end{lem}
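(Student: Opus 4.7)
The plan is to exploit the independence of the $X_i$ together with sharp two-sided estimates for the lower tail of a single $\Pois(\lambda)$, so as to produce two integer thresholds $m_1<m_2$ with $m_2-m_1=\Theta(\sqrt{\lambda/\log n})$ for which both $\Pro{X\le m_1}$ and $\Pro{X\le m_2}$ equal $\Theta(1/n)$. The first step is to establish that for $X\sim\Pois(\lambda)$ and any $y\in[1,\sqrt{2\log n}]$,
\[
\Pro{X\le \lambda-y\sqrt{\lambda}}=\Theta\!\left(\tfrac{1}{y}\,e^{-y^{2}/2}\right),
\]
with absolute constants. The hypothesis $\lambda\ge 16\log n$ forces $y\sqrt{\lambda}\le \lambda/2$, placing us squarely in the moderate-deviation regime, so this follows either from Cramer's theorem for Poissons or, more directly, from Stirling's approximation applied pointwise to $\Pro{X=k}$ and summed using the geometric decay of the tail. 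As a byproduct one records $\Pro{X=k}=O(1/\sqrt{\lambda})$ uniformly in $k$.

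Next, I would define $m_1$ to be the smallest integer with $\Pro{X\le m_1}\ge 1/n$. The pointwise bound $\Pro{X=m_1}=O(1/(n\sqrt{\lambda}))$ together with $\lambda\ge 16\log n$ gives $\Pro{X\le m_1}\in[1/n,2/n]$, and writing $m_1=\lambda-y_1\sqrt{\lambda}$ forces $y_1=\sqrt{2\log n}(1+o(1))$. Setting $m_2:=m_1+\lceil\sqrt{\lambda/\log n}\rceil$ and writing $m_2=\lambda-y_2\sqrt{\lambda}$ yields $y_1-y_2=1/\sqrt{\log n}\,(1+o(1))$, hence $y_2^{2}=y_1^{2}-2\sqrt{2}(1+o(1))$; re-applying the Step~1 asymptotics gives an absolute constant $C_1$ with $\Pro{X\le m_2}\le C_1/n$. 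Now let $N_j:=|\{i:X_i\le m_j\}|$. Since $N_1\le N_2$, the event $\{N_1=N_2=1\}$ is precisely the event that a unique index $i^{\star}$ satisfies $X_{i^{\star}}\le m_1$ and every other index has $X_j>m_2$. On this event $Y_{(n)}\le m_1$ and $Y_{(n-1)}>m_2$, hence $Y_{(n-1)}-Y_{(n)}\ge m_2-m_1\ge\sqrt{\lambda/\log n}$; and by independence,
\[
\Pro{N_1=N_2=1}=n\cdot\Pro{X_1\le m_1}\cdot\Pro{X_1>m_2}^{n-1}\ge n\cdot\tfrac{1}{n}\cdot\Bigl(1-\tfrac{C_1}{n}\Bigr)^{n-1}\ge e^{-2C_1},
\]
for $n$ large, yielding the statement with $\kappa_1=1$ and $\kappa_2=e^{-2C_1}$ (the finitely many small $n$ being absorbed by shrinking $\kappa_2$).

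The main obstacle is the first step: we crucially need a \emph{matching lower bound} on $\Pro{X_1\le m_1}$, not merely an upper bound of Chernoff type, and we need it tight enough in constants that the $\Theta(\sqrt{\lambda/\log n})$ separation between $m_1$ and $m_2$ changes the tail probability only by a bounded multiplicative factor. Any slack of even a factor of $\log n$ in the estimate for $\Pro{X\le \lambda-y\sqrt{\lambda}}$ would force the guaranteed gap $\kappa_1\sqrt{\lambda/\log n}$ in the statement to shrink below the target scale.
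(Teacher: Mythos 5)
Your high-level plan -- pick thresholds $m_1<m_2$ separated by $\Theta(\sqrt{\lambda/\log n})$ with $\Pro{X\le m_1}$ and $\Pro{X\le m_2}$ both $\Theta(1/n)$, then argue a unique index falls below $m_1$ while all others exceed $m_2$ -- is essentially the paper's strategy (they implement the last step via deferred decisions rather than your cleaner $\{N_1=N_2=1\}$ event, but the effect is the same). The problem is that Step~1, on which the whole derivation rests, is false in the relevant parameter range. You claim $\Pro{X\le\lambda-y\sqrt{\lambda}}=\Theta\bigl(y^{-1}e^{-y^2/2}\bigr)$ with \emph{absolute} constants for $y\in[1,\sqrt{2\log n}]$, but the normal approximation to the Poisson lower tail is only accurate when $y=o(\lambda^{1/6})$. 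Here $\lambda$ may be as small as $16\log n$ while $y$ goes up to $\sqrt{2\log n}$, so $y^3/\sqrt{\lambda}=\Theta(\log n)$. A Stirling expansion gives
\[
\log\Pro{X=\lambda-y\sqrt{\lambda}}=-\tfrac{y^2}{2}-\tfrac{y^3}{6\sqrt{\lambda}}-\tfrac12\log(2\pi\lambda)+O\!\Bigl(\tfrac{y^4}{\lambda}\Bigr),
\]
so the missing factor $e^{-y^3/(6\sqrt{\lambda})}$ can be as small as $n^{-c}$ for a constant $c>0$ -- polynomial slack, far worse than the "factor of $\log n$" you flag at the end. Concretely, at $\lambda=16\log n$ solving $\lambda I(a)=\log n$ (Poisson rate function) gives $m_1\approx 0.665\lambda$, hence $y_1\approx 1.34\sqrt{\log n}$, not $\sqrt{2\log n}(1+o(1))$ as you assert; the later arithmetic ($y_2^2=y_1^2-2\sqrt2(1+o(1))$, hence $\Pro{X\le m_2}\le C_1/n$) therefore does not follow as written.

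What saves the argument, and what the paper actually does, is that you never need a two-sided Gaussian formula for the CDF itself; you only need the \emph{local} statement that shifting the threshold by $\Theta(\sqrt{\lambda/\log n})$ inflates the tail probability by at most a constant factor. This comes straight from the exact ratio $\Pro{\Pois(\lambda)=k+1}/\Pro{\Pois(\lambda)=k}=\lambda/(k+1)$: a Chernoff bound places $m_1\ge\lambda-2\sqrt{\lambda\log n}$, the ratio then shows $\Pro{X=m_1}=O\bigl(n^{-1}\cdot\sqrt{\log n/\lambda}\bigr)$ (by lower-bounding $\Pro{X\le m_1}$ with a geometric block of $\sqrt{\lambda/\log n}$ near-equal terms just below $m_1$), and walking $c\sqrt{\lambda/\log n}$ steps upward multiplies the pmf by $(\lambda/m_1)^{c\sqrt{\lambda/\log n}}=e^{O(c)}$. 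Combined, $\Pro{X\le m_1+c\sqrt{\lambda/\log n}}\le O(1/n)$, which is all you need; $\Pro{X\le m_1}\ge1/n$ is true by definition of $m_1$. If you replace Step~1 by this ratio argument, the rest of your proof -- including the $\{N_1=N_2=1\}$ counting and the $(1-C_1/n)^{n-1}\ge e^{-2C_1}$ estimate -- goes through correctly.
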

\begin{proof}
Let $X \sim \Pois(\lambda)$, where $\lambda := m/n \geq 16 \cdot \log n$. Let $k \geq 0$ be the minimal integer such that
\begin{align*}
 \Pro{ \Pois(\lambda) \leq k } \geq n^{-1}.
\end{align*}
By \cref{lem:chernoff_lower} for $\delta := \sqrt{4 \cdot \lambda^{-1} \cdot \log n}$, we have
\[
\Pro{X \leq \lambda - \sqrt{4 \cdot \lambda \cdot \log n}} \leq e^{-\lambda \cdot \delta^2/2} = e^{-2 \log n} = n^{-2}.
\]
Hence it follows that $k \geq \lambda - 2 \cdot \sqrt{ \lambda \cdot \log n}$.
Next note that
\begin{align}
 \frac{ \Pro{ \Pois(\lambda) = k+1 }}{ \Pro{ \Pois(\lambda) = k } }
 &= \frac{\lambda}{k+1}, \label{eq:ratio}
\end{align}
which, since $k \geq \frac{1}{2} \lambda$ (as $\lambda \geq 16 \log n $), also implies that
\begin{align*}
 \Pro{ \Pois(\lambda) \leq k } \leq 2 \cdot n^{-1}.
\end{align*}
Our next claim is that
\begin{align*}
 \Pro{ \Pois(\lambda) = k } \leq 2 \cdot n^{-1} \cdot 1/\sqrt{\lambda/\log n}.
\end{align*}
We will now derive this claim. 
We have
\begin{align*}
 2n^{-1} &\geq
 \Pro{ \Pois(\lambda) \leq k } \\
 &= \sum_{j=0}^{k} \Pro{ \Pois(\lambda) =j } \\
 &\stackrel{(a)}{=} \sum_{j=0}^{k} \Pro{ \Pois(\lambda) =k } \cdot \prod_{i=j}^{k-1} \frac{i}{\lambda} \\
  &\geq
   \sqrt{\lambda/\log n} \cdot \Pro{ \Pois(\lambda) =k } \cdot \left( \frac{k-\sqrt{\lambda}}{\lambda} \right)^{\sqrt{\lambda/\log n}} \\
  &\stackrel{(b)}{\geq} \sqrt{\lambda/\log n} \cdot \Pro{ \Pois(\lambda) =k } \cdot \left( \frac{\lambda-3 \sqrt{\lambda \log n}}{\lambda} \right)^{\sqrt{\lambda/\log n}} \\
   &= \sqrt{\lambda/\log n} \cdot \Pro{ \Pois(\lambda) =k } \cdot \left(1 - \frac{3}{\sqrt{ \lambda / \log n}} \right)^{\sqrt{\lambda/\log n}} \\
  &\stackrel{(c)}{\geq} \sqrt{\lambda/\log n} \cdot \Pro{ \Pois(\lambda) =k } \cdot c_1,
 \end{align*}
for some constant $c_1 > 0$, where in $(a)$ we used \cref{eq:ratio} and in $(b)$ we used that $k \geq \lambda - 2 \sqrt{\lambda \log n}$, and in $(c)$ that $\lambda \geq 16 \log n$.

Next we wish to upper bound
\begin{align*}
 \Pro{ \Pois(\lambda) \leq k + c_2 \cdot \sqrt{ \lambda / \log n} }, %
\end{align*}
for some constant $c_2 > 0$.
Note that
\begin{align}
\lefteqn{  \Pro{ \Pois(\lambda) \leq k + c_2 \cdot \sqrt{ \lambda/\log n}}} \notag \\ &\stackrel{(a)}{\leq}  \Pro{ \Pois(\lambda) \leq k }
 + \sum_{i=1}^{c_2 \sqrt{\lambda/\log n}} \frac{\lambda^i}{k \cdot (k+1) \cdot \ldots \cdot (k+i-1)} \cdot \Pro{ \Pois(\lambda) = k } \notag \\
 &\leq 2n^{-1} + c_2 \cdot \sqrt{\lambda/\log n} \cdot \frac{\lambda^{c_2 \sqrt{\lambda/\log n}}}{k^{c_2 \sqrt{\lambda/\log n}}} \cdot 2 \cdot n^{-1} \cdot 1/\sqrt{\lambda/\log n} \notag \\
 &= 2n^{-1} + 2c_2 \cdot \left(1 - \frac{1}{c \sqrt{\lambda \cdot \log n}} \right)^{-c_2 \sqrt{\lambda/\log n} } \cdot n^{-1} \notag \\
 &\leq c_3 \cdot n^{-1}, \label{eq:upper}
 \end{align}
 for another constant $c_3 > 0$, where $(a)$ is due to \cref{eq:ratio}.

We now use the principle of deferred decisions when exposing the $n$ independent Poisson variables with mean $\lambda$ denoted by $X_1,X_2,\ldots,X_n$ one by one. Let $\tau:= \min\{ j \colon X_j \leq k\}$.
With probability $1-(1-1/n)^n \geq 1-1/e$, we have $\tau < n$. Conditional on that, $X_{\tau+1},\ldots,X_{n}$ are still $n - \tau$ independent Poisson variables with mean $\lambda$. Due to \cref{eq:upper}, the probability that all of the following Poisson random variables are larger than $k + c_2 \cdot \sqrt{\lambda/\log n}$ is at least
\[
   \left( 1 - c_3 \cdot n^{-1} \right)^{\tau} 
  \geq  \left( 1 - c_3 \cdot n^{-1} \right)^{n} \geq c_4,
\]
where $c_4 > 0$ is another constant.

Hence with probability at least $(1 - 1/e) \cdot c_4$, we have a gap of at least $c_2 \cdot \sqrt{\lambda/\log n}$ between $Y_{(n-1)}$ and $Y_{(n)}$.
\end{proof}

We can now derive the lower bound for allocation processes with random tie-breaking.
\begin{lem}\label{pro:new_bn_lower}
Consider an allocation process with probability vector $p$ and random tie-breaking, such that $p_{n} \geq \frac{C}{n}$ for some constant $C \in (1,1.5]$ in the unweighted batched setting with $b \geq \frac{384}{(C-1)^2} n \log n$. Then, there exist constants $\kappa_1 := \kappa_1(C), \kappa_2 := \kappa_2(C) > 0$, such that
\[
\Pro{\Gap(2b) \geq \frac{C-1}{8} \cdot \frac{b}{n}} \geq \kappa_2.
\]
\end{lem}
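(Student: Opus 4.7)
The crucial structural observation is that at the start of the first batch all bins are tied at load $0$, so by the averaging in \cref{eq:averaging_pi} we have $\tilde{p}_i(y^0) = \frac{1}{n}\sum_j p_j = \frac{1}{n}$ for every bin $i$. Hence the first batch of $b$ balls is allocated exactly according to \OneChoice, so $(x_1^b,\ldots,x_n^b)$ is multinomial with $b$ trials uniform over $n$ bins. The plan is to produce, by time $b$, a \emph{unique} and notably lightly-loaded bin $j^*$, and then run a Chernoff argument on the second batch in the style of \cref{pro:bn_lower} to show that $j^*$ receives disproportionately many balls during $[b,2b]$.

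For the first batch, set $\lambda := b/n \geq 384\log n/(C-1)^2 \geq 16\log n$. By standard Poissonization/de-Poissonization (or, equivalently, by repeating the deferred-decisions argument of \cref{lem:poisson_property} directly on the multinomial), there exist absolute constants $\kappa_1',\kappa_2'>0$ such that with probability at least $\kappa_2'$ there is a unique minimum-loaded bin $j^*$ with
\[
x_i^b \;\geq\; x_{j^*}^b + \kappa_1'\sqrt{\lambda/\log n} \qquad \text{for every } i\neq j^*.
\]
In parallel, a Chernoff bound on $\mathrm{Bin}(b,1/n)$ together with a union bound over bins yields $y_{j^*}^b \geq -c\sqrt{\lambda\log n}$ for an absolute constant $c>0$, with probability at least $1-n^{-2}$; a routine computation with $c=\sqrt{6}$ suffices.

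Now condition on both events and consider the second batch. Since $j^*$ is the unique minimum of $y^b$, no other bin shares its load level, so its entry in $\tilde{p}(y^b)$ is exactly $p_n \geq C/n$. Writing $\gamma := C-1\in(0,1/2]$ and letting $X$ be the number of second-batch balls placed in $j^*$, we have $X$ stochastically dominating $\mathrm{Bin}(b,C/n)$, so $\Ex{X}\geq (1+\gamma)b/n$ and \cref{lem:chernoff_lower} gives
\[
\Pro{X \leq (1+\gamma/2)\cdot b/n} \;\leq\; \exp(-\Omega(\gamma^2 b/n)) \;\leq\; n^{-10},
\]
using $b/n \geq 384\log n/\gamma^2$. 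On the intersection of all three events — an event of probability at least some absolute constant $\kappa_2>0$ — we conclude
\[
y_{j^*}^{2b} \;=\; y_{j^*}^b + \Big(X - \tfrac{b}{n}\Big) \;\geq\; -c\sqrt{\lambda\log n} + \tfrac{\gamma}{2}\cdot\tfrac{b}{n} \;\geq\; \tfrac{C-1}{8}\cdot\tfrac{b}{n},
\]
where the last inequality uses the hypothesis $b/n \geq 384\log n/\gamma^2$ to force $c\sqrt{\lambda\log n} \leq \gamma b/(8n)$; note that the constant $384 = 64 c^2 = 64 \cdot 6$ has been chosen precisely for this step. The one nontrivial ingredient and the main obstacle is the transfer of \cref{lem:poisson_property} from i.i.d.\ Poissons to the multinomial distribution. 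This follows from careful de-Poissonization (since we only need a constant-probability lower bound, the $\Theta(1/\sqrt{b})$ loss in conditioning on the exact sum can be absorbed into $\kappa_2$ by arguing about the joint event $\{\text{gap} \geq \kappa_1'\sqrt{\lambda/\log n}\} \cap \{N=b\}$ under independent Poissons), or alternatively by rerunning the deferred-decisions argument of \cref{lem:poisson_property} verbatim on the multinomial.
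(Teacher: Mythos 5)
Your proof has the same skeleton as the paper's: observe that $\tilde p(y^0)$ is uniform so the first batch is \OneChoice; establish a unique minimum bin with a gap $\Omega(\sqrt{\lambda/\log n})$ above it; show the minimum bin's normalized load is $\geq -\sqrt{6\,(b/n)\log n}$ by Chernoff; then feed a Binomial lower-tail bound (essentially \cref{pro:bn_lower}) to the second batch and combine. The final arithmetic using $b \geq \frac{384}{(C-1)^2} n\log n$ is also the same.

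The one place where your argument genuinely breaks is exactly the step you flag: transferring \cref{lem:poisson_property} from i.i.d.\ Poissons to the multinomial. Your route (a), conditioning on the exact sum, does not work as stated. You write that the $\Theta(1/\sqrt b)$ ``loss'' from conditioning on $\{N=b\}$ ``can be absorbed into $\kappa_2$,'' but $\Pro{\text{gap} \mid N=b} = \Pro{\text{gap} \cap \{N=b\}}/\Pro{N=b}$ requires a \emph{lower bound} on $\Pro{\text{gap} \cap \{N=b\}}$ of order $1/\sqrt b$; a constant lower bound on $\Pro{\text{gap}}$ alone does not supply that (the gap event could in principle concentrate on sums away from $b$). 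The paper avoids this entirely by Poissonizing with the slightly deflated rate $\lambda=(b-4\sqrt b)/n$: then $S_n:=\sum_i X_i \le b$ w.p.\ $1-o(1)$, the load vector of the first $S_n$ balls is distributed exactly as independent Poissons, and the remaining $b-S_n \le 8\sqrt b$ balls are handled by Markov (the minimum bin receives at most $16\sqrt b/n$ extras w.p.\ $\ge 1/2$, which is less than $\kappa_1\sqrt{\lambda/\log n}$, so the unique minimum and its separation survive). Your route (b), rerunning the deferred-decisions argument directly on the multinomial, is more promising but you would have to address the (negative) dependence between cells; the paper's splitting trick is the cleaner fix and is worth adopting. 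Everything downstream of that step in your write-up is correct, and your Binomial Chernoff calculation for the second batch gives an even stronger exponent ($n^{-24}$) than needed.
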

\begin{proof}
Initially, all bins have load $0$, so the first $b$ balls will be allocated using \OneChoice. %
In order to use the Poisson Approximation Method~\cite[Theorem 5.6]{MU2017}, let $\tilde{X}_1,\tilde{X}_2, \ldots , \tilde{X}_n$ be $n$ independent Poisson distributed random variables with rate $\lambda = (b - 4 \cdot \sqrt{b})/n$. By \cref{lem:chernoff}, the sum $S_n := \sum_{i = 1}^n \tilde{X}_i$ is in the range $[b-8\sqrt{b}, b]$, with probability at least $1-o(1)$. By \cref{lem:poisson_property}, we have that with at least constant $\kappa_2 > 0$ probability, the difference between the smallest and second smallest bin is at least $\kappa_1 \cdot \sqrt{\lambda / \log n}$, for some constant $\kappa_1 > 0$.

Consider now the allocation of the remaining $b - S_n \leq 8 \sqrt{b}$ balls. The average load of a bin through these balls is $8 \sqrt{b} / n$. Using Markov's inequality, the smallest bin does not receive more than $16 \sqrt{b} / n$ additional balls with probability at least $1/2$. 

Since $\kappa_1 \cdot \sqrt{\lambda/\log n} \geq \kappa_1 \cdot \sqrt{0.5 \cdot b/n \cdot 1/\log n} \geq 16 \sqrt{b} / n$ we can conclude that there is still a unique minimally loaded bin after the allocation of all $b$ balls. Further, by using a Chernoff bound for \OneChoice, it follows that 
\[
 \Pro{ y_{n}^{b} \leq b/n - \sqrt{6 \cdot b/n \log n} } \geq 1-n^{-2}.
\]
Taking the union bound, we conclude that at the end of the first batch, the following holds:
\begin{align}
 \Pro{ y_{n}^{b} \in [- \sqrt{6 \cdot b/n \log n}, y_{n-1}^{b}-1 ] } \geq \kappa_1 \cdot \frac{1}{2} - o(1) - n^{-2}. \label{eq:lower_one}
\end{align}
Conditioning on $y_n^{b} \leq y_{n-1}^{b}-1$, we have $\tilde{p}_{n}(x^{b}) \geq p_{n} \geq \frac{C}{n}$. For simplicity, let us fix label $n$ to be the index of the bin with smallest load at time $b$. Applying \cref{pro:bn_lower} to the allocations made in the second batch to bin $n$, we conclude that there is a constant $\gamma > 0$ such that
\begin{align}
 \Pro{ x_{n}^{2b} - x_{n}^{b} \geq \left(1 + \frac{\gamma}{4} \right) \cdot \frac{b}{n} ~\Big|~y_{n}^{b} \in [- \sqrt{6 \cdot b/n \log n}, y_{n-1}^{b}-1 ] } \geq 1 - n^{-\gamma^2/8}. \label{eq:lower_two}
\end{align}
Both events in \cref{eq:lower_one} and \cref{eq:lower_two} hold with probability at least $\kappa_1 \cdot \frac{1}{3}$, and in this case,
\begin{align*}
    x_n^{2b} &= x_n^{b} + x_{n}^{2b} - x_{n}^{b} \\
    &\geq \frac{b}{n} - \sqrt{6 \cdot b/n \log n} + \left(1 + \frac{C-1}{4} \right) \cdot \frac{b}{n} \\
    &\geq \frac{2b}{n} - \sqrt{6 \cdot b/n \log n} + \frac{C-1}{4} \cdot \frac{b}{n} \\
    &\stackrel{(a)}{\geq} \frac{2b}{n} + \frac{C-1}{8} \cdot \frac{b}{n},
\end{align*}
where we have used in $(a)$ that if $b \geq \frac{384}{(C-1)^2} n \log n$ then,
\[
 \frac{C-1}{4} \cdot \frac{b}{n} \geq 2 \cdot \sqrt{6 \cdot b/n \log n} 
 \quad \Leftrightarrow 
 \quad b \geq \frac{384}{(C-1)^2} \cdot n \log n.
\]
Hence $\Gap(2b) \geq \frac{C-1}{8} \cdot \frac{b}{n}$.
\end{proof}

\section{Experiments} \label{sec:experiments}

In this section, we complement out analysis with some experiments (\cref{fig:batch_comparisons}, \cref{fig:hybrid},\cref{fig:exp_weighted} and \cref{fig:batched_with_random_tie_breaking}).

\begin{figure}[H]
    \centering
    \includegraphics[scale=0.8]{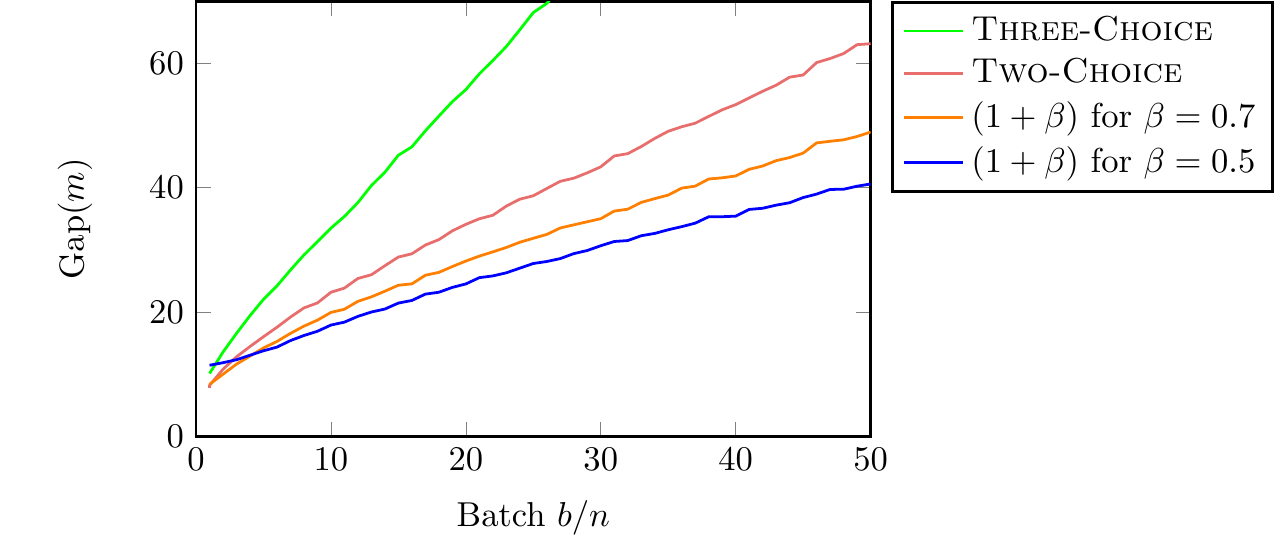}
    \caption{Comparison between $(1+\beta)$ for $\beta = 0.5$ and $\beta = 0.7$, \TwoChoice and \ThreeChoice without random tie-breaking, for the unweighted batched
    setting with $b \in \{ n, 2n , \ldots 50n \}$ for $n = 10^3$ and $m = n^2$ ($100$ runs). For small $b$, \TwoChoice
    and \ThreeChoice outperform the $(1+\beta)$ processes, which is caused by the smaller probabilities for the heavily loaded bins. Conversely, for larger $b$, the larger probabilities for the lightly loaded bins are responsible for creating larger gaps for \TwoChoice and \ThreeChoice, as suggested by  \cref{pro:bn_lower}. For \DChoice the largest term is approximately $d/n$ and for $(1+\beta)$ this is $(1+\beta)/n$, which corresponds to the observed performance: $3/n \geq 2/n \geq 1.7/n \geq 1.5/n$. Similar observations were made in the queuing setting in~\cite{M00}.}
    \label{fig:batch_comparisons}
\end{figure}

\begin{figure}[H]
    \centering
    \includegraphics[scale=0.8]{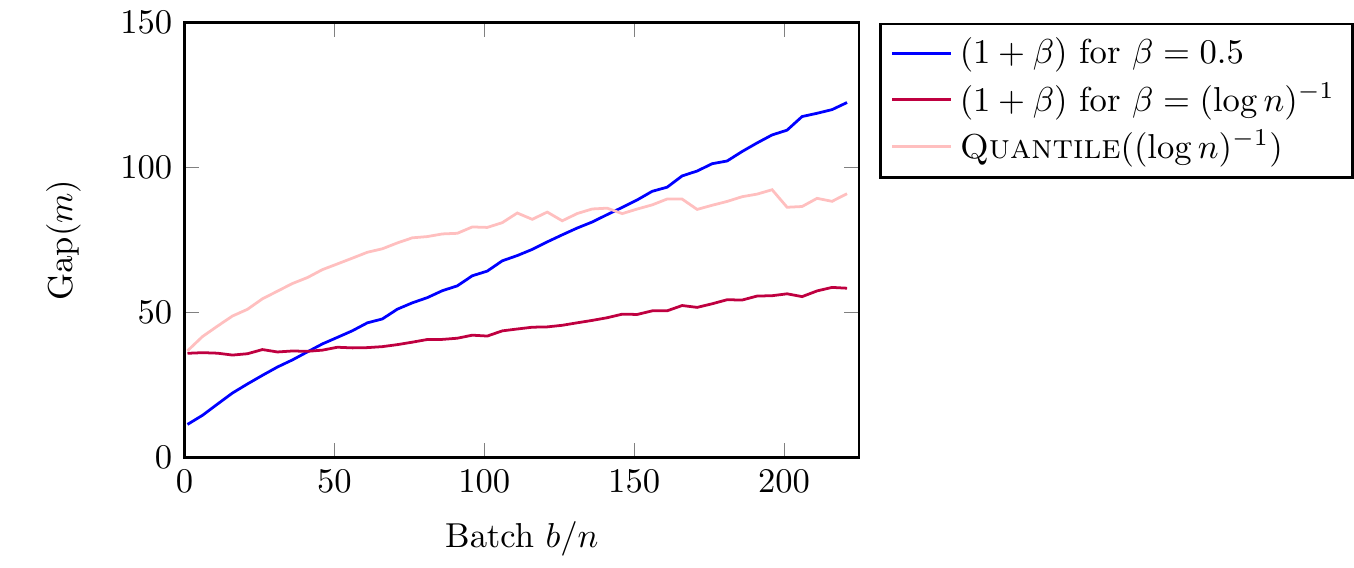}
    \caption{Empirical results for the unweighted batched setting showing that the $\textsc{Quantile}( (\log n)^{-1})$ process and $(1+\beta)$ with $\beta = (\log n)^{-1}$ achieve better gaps than $(1+\beta)$ for $\beta = 0.5$ for large values of $b \geq 150n$, $n = 10^3$ and $m = n^2$ ($100$ runs). This is probably due to the smaller maximum entry in the probability vector, which for the first two processes is $(1+(\log n)^{-1})/n$, while for the third process it is $1.5/n$.}
    \label{fig:hybrid}
\end{figure}

\begin{figure}[H]
    \centering
    \includegraphics[scale=0.8]{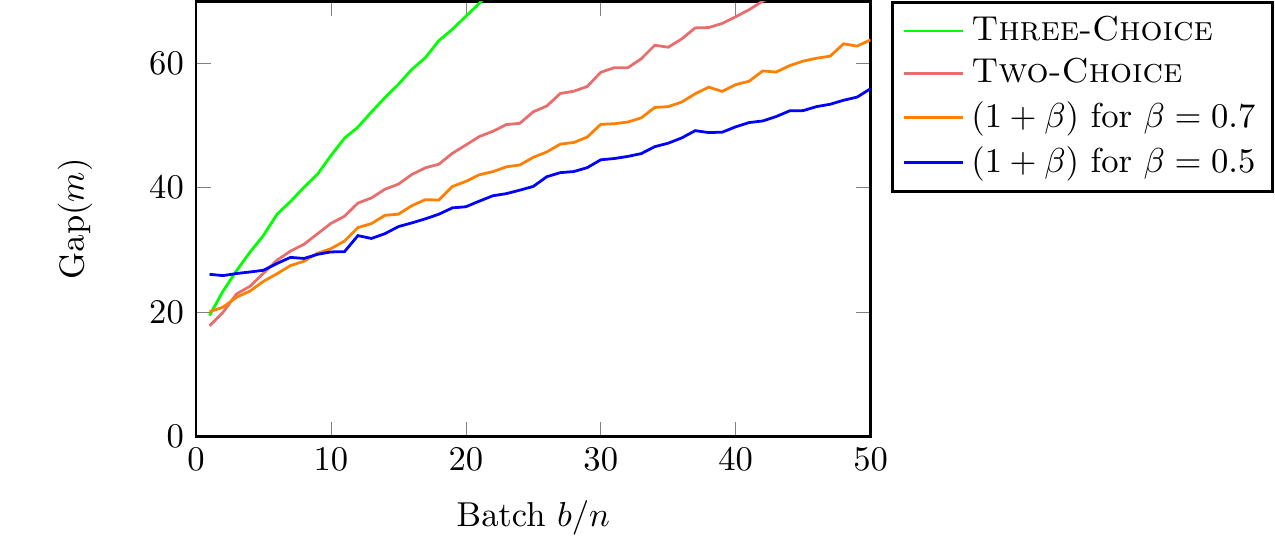}
    \caption{Empirical results for the weighted batched setting, where weights are sampled from an exponential distribution with mean $1$. Further, $n=1000$ and $m=n^2$ ($100$ runs). Overall, we seem to have a similar ordering among the four processes as in Figure~\ref{fig:batch_comparisons}, but for small values of $b$ the weights of the balls create larger gaps (in comparison to the unweighted setting). This makes sense, as some of the balls will be of weight $\Omega(\log n)$.}
    \label{fig:exp_weighted}
\end{figure}

\begin{figure}[H]
    \centering
    \includegraphics[scale=0.8]{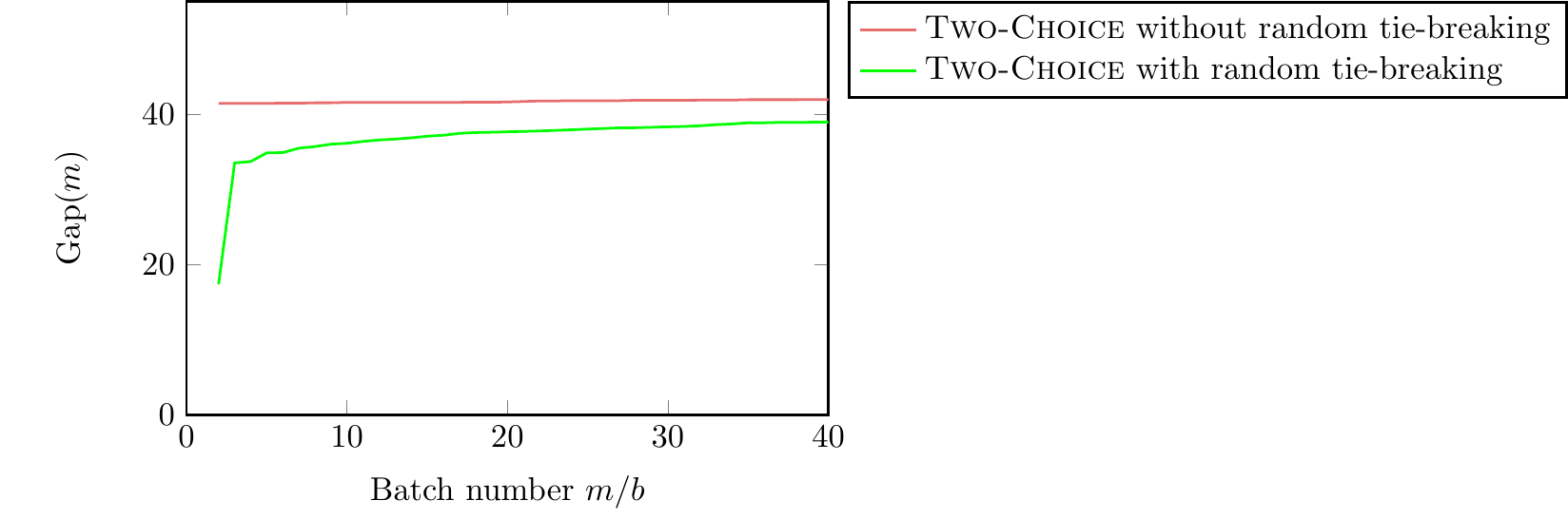}
    \caption{Empirical results for the unweighted batched setting, for the \TwoChoice process with and without random tie-breaking, for $b = 25\cdot n$, $n = 10^3$ and $100$ runs. When using random tie-breaking the maximum load is slightly smaller, especially for the first batch where it allocates using \OneChoice (so $\max_{i \in [n]} p_i = 1/n$ instead of $2/n$). However, from the second batch onwards these are asymptotically $\Omega(b/n \log n)$ for $b \geq n \log n$ (see \cref{pro:new_bn_lower}).}
    \label{fig:batched_with_random_tie_breaking}
\end{figure}

\section{Conclusions}\label{sec:conclusion}

In this work, we studied balanced allocations in a batched setting, following the model proposed in~\cite{BCE12}. As our main result, we proved that for any batch size $n \leq b \leq n^3$ and $m \geq n$, a gap bound of  $\Oh( b/n + \log n)$ holds with high probability. This analysis covered both weighted balls and a number of allocation processes satisfying two mild conditions on their probability vector, thereby demonstrating that many of the sequential allocation processes perform well in the batched setting, and can thus be ``parallelized''. We also proved lower bounds which match our upper bound up to multiplicative constants for a family of processes. 

Our results also imply a slight improvement on the gap for the $(1+\beta)$-process with very small $\beta$. Further, we proved the first gap bounds for graphical allocation with weights, thereby addressing Open Question 1 in~\cite{PTW15}.

A natural open problem is to investigate other batch sizes, e.g., $b < n$ or $b=\Omega(n^3)$, or consider a dynamic setting where the batch sizes may vary over time. Our new bounds for graphical allocation crucially depend on how much the maximum probability deviates from $1/n$, and thus on the maximum degree of the graph. Improving this dependence may lead to stronger bounds for dense graphs.

The experimental results exhibit an interesting trade-off between the probability vector and the achieved gap; having small probabilities for the heavily loaded bins is not significant for large $b$, but more important is to avoid large probabilities for the lightly loaded bins, which is achieved by processes like $\Quantile((\log n)^{-1})$ or $(1+\beta)$ for $\beta=(\log n)^{-1}$.
In other words, following a more powerful process with more choices like \DChoice leads to worse performance than a more ``indifferent'' allocation scheme like $(1+\beta)$ with $\beta=(\log n)^{-1}$.

Recall that our lower bound in \cref{pro:bn_lower} 
sheds some light onto this phenomenon, but further bounds are needed (in particular, refined upper bounds based on $\max_{i \in [n]} p_i$) so that we can rigorously compare the performance of these processes.

\bibliographystyle{plain}
\bibliography{bibliography}

\begin{thebibliography}{10}

\bibitem{ACM98}
Micah Adler, Soumen Chakrabarti, Michael Mitzenmacher, and Lars Rasmussen.
\newblock Parallel randomized load balancing.
\newblock {\em Random Structures Algorithms}, 13(2):159--188, 1998.

\bibitem{ABK18}
Dan Alistarh, Trevor Brown, Justin Kopinsky, Jerry~Zheng Li, and Giorgi
  Nadiradze.
\newblock Distributionally linearizable data structures.
\newblock In {\em Proceedings of 30th on Symposium on Parallelism in Algorithms
  and Architectures (SPAA'18)}, pages 133--142, New York, NY, USA, 2018.
  Association for Computing Machinery.

\bibitem{AGR21}
Dan Alistarh, Rati Gelashvili, and Joel Rybicki.
\newblock Brief announcement: Fast graphical population protocols.
\newblock In Seth Gilbert, editor, {\em 35th International Symposium on
  Distributed Computing, {DISC} 2021, October 4-8, 2021, Freiburg, Germany
  (Virtual Conference)}, volume 209 of {\em LIPIcs}, pages 43:1--43:4. Schloss
  Dagstuhl - Leibniz-Zentrum f{\"{u}}r Informatik, 2021.
\newblock Full version at \url{https://arxiv.org/abs/2102.08808}.

\bibitem{ABK99}
Yossi Azar, Andrei~Z. Broder, Anna~R. Karlin, and Eli Upfal.
\newblock Balanced allocations.
\newblock {\em SIAM J. Comput.}, 29(1):180--200, 1999.

\bibitem{BF21}
Nikhil Bansal and Ohad~N. Feldheim.
\newblock Well-balanced allocation on general graphs.
\newblock {\em CoRR}, abs/2106.06051, 2021.

\bibitem{BBFN14}
Petra Berenbrink, Andr{\'{e}} Brinkmann, Tom Friedetzky, and Lars Nagel.
\newblock Balls into non-uniform bins.
\newblock {\em J. Parallel Distributed Comput.}, 74(2):2065--2076, 2014.

\bibitem{BCE12}
Petra Berenbrink, Artur Czumaj, Matthias Englert, Tom Friedetzky, and Lars
  Nagel.
\newblock Multiple-choice balanced allocation in (almost) parallel.
\newblock In {\em Proceedings of 16th International Workshop on Approximation,
  Randomization, and Combinatorial Optimization (RANDOM'12)}, pages 411--422,
  Berlin Heidelberg, 2012. Springer-Verlag.

\bibitem{BCSV06}
Petra Berenbrink, Artur Czumaj, Angelika Steger, and Berthold V\"{o}cking.
\newblock Balanced allocations: the heavily loaded case.
\newblock {\em SIAM J. Comput.}, 35(6):1350--1385, 2006.

\bibitem{BFHM08}
Petra Berenbrink, Tom Friedetzky, Zengjian Hu, and Russell~A. Martin.
\newblock On weighted balls-into-bins games.
\newblock {\em Theor. Comput. Sci.}, 409(3):511--520, 2008.

\bibitem{BFKMNW18}
Petra Berenbrink, Tom Friedetzky, Peter Kling, Frederik Mallmann-Trenn, Lars
  Nagel, and Chris Wastell.
\newblock Self-stabilizing balls and bins in batches: the power of leaky bins.
\newblock {\em Algorithmica}, 80(12):3673--3703, 2018.

\bibitem{CS01}
Artur Czumaj and Volker Stemann.
\newblock Randomized allocation processes.
\newblock {\em Random Structures Algorithms}, 18(4):297--331, 2001.

\bibitem{D00}
Michael Dahlin.
\newblock Interpreting stale load information.
\newblock {\em {IEEE} Trans. Parallel Distributed Syst.}, 11(10):1033--1047,
  2000.

\bibitem{ELZ86}
Derek~L. {Eager}, Ed~D. {Lazowska}, and John {Zahorjan}.
\newblock Adaptive load sharing in homogeneous distributed systems.
\newblock {\em IEEE Transactions on Software Engineering}, SE-12(5):662--675,
  1986.

\bibitem{FGL21}
Ohad~N. Feldheim, Ori Gurel-Gurevich, and Jiange Li.
\newblock Long-term balanced allocation via thinning, 2021.

\bibitem{FL20}
Ohad~Noy Feldheim and Jiange Li.
\newblock Load balancing under {$d$}-thinning.
\newblock {\em Electronic Communications in Probability}, 25:Paper No. 1, 13,
  2020.

\bibitem{G08}
P.~Brighten Godfrey.
\newblock Balls and bins with structure: balanced allocations on hypergraphs.
\newblock In {\em Proceedings of 19th {A}nnual {ACM}-{SIAM} {S}ymposium on
  {D}iscrete {A}lgorithms (SODA'08)}, pages 511--517, 2008.

\bibitem{GMP20}
Catherine Greenhill, Bernard Mans, and Ali Pourmiri.
\newblock {Balanced Allocation on Dynamic Hypergraphs}.
\newblock In Jaros{\l}aw Byrka and Raghu Meka, editors, {\em Approximation,
  Randomization, and Combinatorial Optimization. Algorithms and Techniques
  (APPROX/RANDOM 2020)}, volume 176 of {\em Leibniz International Proceedings
  in Informatics (LIPIcs)}, pages 11:1--11:22, Dagstuhl, Germany, 2020. Schloss
  Dagstuhl--Leibniz-Zentrum f{\"u}r Informatik.

\bibitem{IK04}
Kazuo Iwama and Akinori Kawachi.
\newblock Approximated two choices in randomized load balancing.
\newblock In {\em Proceedings of 15th International Symposium on Algorithms and
  Computation (ISAAC'04)}, volume 3341, pages 545--557. Springer-Verlag, 2004.

\bibitem{KLM96}
Richard~M. Karp, Michael Luby, and Friedhelm Meyer auf~der Heide.
\newblock Efficient {PRAM} simulation on a distributed memory machine.
\newblock {\em Algorithmica}, 16(4-5):517--542, 1996.

\bibitem{KP06}
Krishnaram Kenthapadi and Rina Panigrahy.
\newblock Balanced allocation on graphs.
\newblock In {\em Proceedings of 17th {ACM}-{SIAM} {S}ymposium on {D}iscrete
  {A}lgorithms (SODA'06)}, pages 434--443, USA, 2006. Society for Industrial
  and Applied Mathematics.

\bibitem{K02}
Samuel Kutin.
\newblock Extensions to {McDiarmid}’s inequality when differences are bounded
  with high probability.
\newblock Technical report, University of Chicago, 2002.

\bibitem{LPY19}
Christoph Lenzen, Merav Parter, and Eylon Yogev.
\newblock Parallel balanced allocations: The heavily loaded case.
\newblock In Christian Scheideler and Petra Berenbrink, editors, {\em The 31st
  {ACM} on Symposium on Parallelism in Algorithms and Architectures, {SPAA}
  2019, Phoenix, AZ, USA, June 22-24, 2019}, pages 313--322, New York, NY, USA,
  2019. {ACM}.

\bibitem{LW11}
Christoph Lenzen and Roger Wattenhofer.
\newblock Tight bounds for parallel randomized load balancing: Extended
  abstract.
\newblock In {\em Proceedings of the Forty-Third Annual ACM Symposium on Theory
  of Computing}, STOC '11, page 11–20, New York, NY, USA, 2011. Association
  for Computing Machinery.

\bibitem{LS21}
Dimitrios Los and Thomas Sauerwald.
\newblock {Balanced Allocations with Incomplete Information: The Power of Two
  Queries}.
\newblock In Mark Braverman, editor, {\em 13th Innovations in Theoretical
  Computer Science Conference (ITCS 2022)}, volume 215 of {\em Leibniz
  International Proceedings in Informatics (LIPIcs)}, pages 103:1--103:23,
  Dagstuhl, Germany, 2022. Schloss Dagstuhl -- Leibniz-Zentrum f{\"u}r
  Informatik.

\bibitem{LSS21}
Dimitrios Los, Thomas Sauerwald, and John Sylvester.
\newblock {Balanced Allocations: Caching and Packing, Twinning and Thinning}.
\newblock In {\em Proceedings of the 2022 Annual ACM-SIAM Symposium on Discrete
  Algorithms (SODA)}, pages 1847--1874, Alexandria, Virginia, 2022. {SIAM}.

\bibitem{M00}
Michael Mitzenmacher.
\newblock How useful is old information?
\newblock {\em {IEEE} Trans. Parallel Distributed Syst.}, 11(1):6--20, 2000.

\bibitem{MRS01}
Michael Mitzenmacher, Andr\'{e}a~W. Richa, and Ramesh Sitaraman.
\newblock The power of two random choices: a survey of techniques and results.
\newblock In {\em Handbook of randomized computing, {V}ol. {I}, {II}}, volume~9
  of {\em Comb. Optim.}, pages 255--312. Kluwer Acad. Publ., Dordrecht,
  Netherlands, 2001.

\bibitem{MU2017}
Michael Mitzenmacher and Eli Upfal.
\newblock {\em Probability and computing}.
\newblock Cambridge University Press, Cambridge, second edition, 2017.
\newblock Randomization and probabilistic techniques in algorithms and data
  analysis.

\bibitem{N21}
Giorgi Nadiradze.
\newblock {\em On Achieving Scalability through Relaxation}.
\newblock PhD thesis, {IST Austria}, 2021.

\bibitem{PTW15}
Yuval Peres, Kunal Talwar, and Udi Wieder.
\newblock Graphical balanced allocations and the {$(1+\beta)$}-choice process.
\newblock {\em Random Structures Algorithms}, 47(4):760--775, 2015.

\bibitem{TW07}
Kunal Talwar and Udi Wieder.
\newblock Balanced allocations: the weighted case.
\newblock In {\em Proceedings of 39th ACM Symposium on Theory of Computing
  (STOC'07)}, pages 256--265, 2007.

\bibitem{TW14}
Kunal Talwar and Udi Wieder.
\newblock Balanced allocations: a simple proof for the heavily loaded case.
\newblock In {\em Automata, languages, and programming. {P}art {I}}, volume
  8572 of {\em Lecture Notes in Comput. Sci.}, pages 979--990. Springer,
  Heidelberg, Berlin, Heidelberg, 2014.

\bibitem{V03}
Berthold V\"{o}cking.
\newblock How asymmetry helps load balancing.
\newblock {\em J. ACM}, 50(4):568--589, 2003.

\bibitem{W07}
Udi Wieder.
\newblock Balanced allocations with heterogenous bins.
\newblock In {\em Proceedings of the Nineteenth Annual ACM Symposium on
  Parallel Algorithms and Architectures}, SPAA '07, page 188–193, New York,
  NY, USA, 2007. Association for Computing Machinery.

\bibitem{W17}
Udi Wieder.
\newblock Hashing, load balancing and multiple choice.
\newblock {\em Found. Trends Theor. Comput. Sci.}, 12(3-4):275--379, 2017.

\end{thebibliography}
\clearpage

\appendix

\section{Tools}

\subsection{Concentration inequalities}

The first lemma is a standard Chernoff bound for sum of independent random variables whose moment generating function is bounded.
\begin{lem}\label{lem:chernoff}
Assume $Z_1,Z_2,\ldots,Z_k$ are independent samples from a distribution $W$, for which there is a constant $\lambda > 0$ such that $\Ex{W}=1$ and $\ex{ e^{\lambda W} } \leq S$. Then for $Z:=\sum_{i=1}^k Z_i$, it holds for that
\[
 \Pro{ Z \geq 2 \ln(S)/\lambda \cdot k } \leq \exp\left(- \ln(S) \cdot k\right).
\]
Furthermore, for the special case $k=1$, we have for any $c > 0$,
\[
 \Pro{ Z_1 \geq 1/\lambda \cdot (c \cdot \ln(n) + \ln(S))} \leq n^{-c}.
\]
\end{lem}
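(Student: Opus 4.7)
The plan is to prove both statements by a single exponential Markov (Chernoff) argument, with the constant $\lambda$ from the hypothesis playing the role of the tilt parameter. The first step is to observe that for any threshold $a$, Markov's inequality applied to the random variable $e^{\lambda Z}$ gives
\[
\Pro{ Z \geq a } = \Pro{ e^{\lambda Z} \geq e^{\lambda a} } \leq e^{-\lambda a} \cdot \ex{ e^{\lambda Z} }.
\]

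Next I would exploit the independence of $Z_1,\dots,Z_k$ (each distributed as $W$) to factorize the moment generating function and then use the hypothesis $\ex{e^{\lambda W}} \leq S$:
\[
\ex{ e^{\lambda Z} } = \prod_{i=1}^{k} \ex{ e^{\lambda Z_i} } = \big( \ex{ e^{\lambda W} } \big)^k \leq S^k.
\]
Substituting this into the Markov estimate yields the uniform bound
\[
\Pro{ Z \geq a } \leq \exp\!\big( -\lambda a + k \ln S \big).
\]

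Both stated inequalities then follow by choosing $a$ appropriately. For the first, setting $a = 2 \ln(S) \cdot k / \lambda$ makes the exponent equal to $-2 k \ln S + k \ln S = -k \ln S$, matching the claimed $\exp(-\ln(S) \cdot k)$. For the second, specialising to $k=1$ and taking $a = (c \ln n + \ln S)/\lambda$ makes the exponent equal to $-c \ln n - \ln S + \ln S = -c \ln n$, which gives $n^{-c}$ as required.

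There is essentially no obstacle here: this is a textbook Chernoff argument, and the only substantive ingredient is the bounded exponential moment assumption $\ex{e^{\lambda W}} \leq S$. The (mildly suboptimal) constant $2$ in the first bound comes from using the fixed tilt $t = \lambda$ rather than optimising $t \in (0,\lambda]$; any smaller tilt would also work and could yield a sharper constant at the cost of a more cumbersome expression, but the clean factor of $2$ is convenient for the applications in \cref{lem:smoothing_argument} and \cref{lem:many_h_i}.
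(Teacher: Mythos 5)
Your proof is correct and follows essentially the same Chernoff/exponential-Markov approach as the paper, with the same factorization of the MGF over independent samples and the same choice of threshold. The only cosmetic difference is that the paper introduces a general tilt $t \in (0,\lambda]$ and invokes Jensen's inequality before settling on $t=\lambda$, whereas you take $t=\lambda$ from the start, which shortcuts that step without changing the substance.
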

\begin{proof}
Let $t \in (0,\lambda]$ to be specified later. Then,
\begin{align*}
 \Pro{ Z \geq 2 \ln(S)/\lambda \cdot k } &= \Pro{ e^{t Z} \geq e^{t \cdot 2 \ln(S) /\lambda \cdot k} } \\
 &\leq  \Ex{ e^{tZ } } \cdot \exp\left(-t \cdot 2 \log(S)/ \lambda \cdot k \right) \\
 &= \left( \Ex{ e^{ t Z_1} } \right)^k \cdot \exp\left(-t \cdot 2 \log(S) /\lambda \cdot k \right)  \\
  &\leq \left( \ex{ e^{ \lambda Z_1} } \right)^{k \cdot t/\lambda} \cdot \exp\left(-t \cdot 2 \log(S) /\lambda \cdot k \right)  \\
  &\leq S^{k \cdot t/\lambda} \cdot \exp\left(-t \cdot 2 \ln(S)/\lambda \cdot k \right) \\
  &= \exp\left( k \cdot (\ln(S) \cdot t/\lambda  - t \cdot 2 \ln(S)/ \lambda ) \right),
\end{align*}
where the second inequality is due to Jensen's inequality. Choosing $t=\lambda$ yields the claim.

For the second statement, for any $c > 0$,
\begin{align*}
    \Pro{ Z_1 \geq 1/\lambda \cdot ( c \cdot \ln(n) + \ln(S))} &\leq 
    \Pro{ e^{\lambda Z_1} \geq 
    e^{- c \cdot \ln(n)- \ln(S) }} \\
    &\leq \Ex{ e^{\lambda W}} \cdot e^{ - c \cdot \ln(n) - \ln(S)} \\
    &\leq S \cdot n^{-c} \cdot \frac{1}{S} = n^{-c}. \qedhere
\end{align*}
\end{proof}

Next we state a Chernoff bound for Poisson random variables. 

\begin{lem}[Theorem 5.4 from~\cite{MU2017}] \label{lem:chernoff_lower}
Let $X \sim \Pois(\lambda)$, then for any $0 < \delta < 1$,
\[
\Pro{X \leq (1-\delta) \cdot \lambda} \leq e^{-\lambda \delta^2/2},
\]
and 
\[
\Pro{X \geq (1+\delta) \cdot \lambda} \leq e^{-\lambda \delta^2/3}.
\]
\end{lem}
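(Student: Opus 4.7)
The statement is a classical Chernoff bound for Poisson random variables, and the approach I would take is the standard moment generating function method. The starting point is the identity $\Ex{e^{tX}} = e^{\lambda(e^t-1)}$ for $X \sim \Pois(\lambda)$ and any $t \in \mathbb{R}$, which follows directly from the power-series definition of the Poisson distribution.

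For the upper tail, I would apply Markov's inequality to $e^{tX}$ for $t > 0$:
\[
\Pro{X \geq (1+\delta)\lambda} \leq e^{-t(1+\delta)\lambda} \cdot \Ex{e^{tX}} = \exp\bigl(\lambda(e^t - 1) - t(1+\delta)\lambda\bigr).
\]
Optimizing the right-hand side over $t > 0$ gives $t = \ln(1+\delta)$, yielding the bound $\left(\tfrac{e^\delta}{(1+\delta)^{1+\delta}}\right)^\lambda$. The remaining step is purely analytic: one shows that for $\delta \in (0,1)$, $(1+\delta)\ln(1+\delta) - \delta \geq \delta^2/3$, which follows from a Taylor expansion of $\ln(1+\delta)$ (writing $f(\delta) = (1+\delta)\ln(1+\delta) - \delta - \delta^2/3$, checking $f(0) = 0$ and verifying $f'(\delta) \geq 0$ on $[0,1]$). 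This yields the desired bound $e^{-\lambda \delta^2/3}$.

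For the lower tail I would run the symmetric argument with $t < 0$: apply Markov to $e^{tX}$ with $X \leq (1-\delta)\lambda$ replaced by $-X \geq -(1-\delta)\lambda$, obtaining $\left(\tfrac{e^{-\delta}}{(1-\delta)^{1-\delta}}\right)^\lambda$ after optimizing at $t = \ln(1-\delta)$. The analytic inequality to verify here is $(1-\delta)\ln(1-\delta) + \delta \geq \delta^2/2$ for $\delta \in (0,1)$, again by expanding $\ln(1-\delta) = -\sum_{k \geq 1} \delta^k/k$ and comparing term-by-term.

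The main ``obstacle'' (really just bookkeeping) is the analytic inequalities $(1+\delta)\ln(1+\delta) - \delta \geq \delta^2/3$ and $(1-\delta)\ln(1-\delta) + \delta \geq \delta^2/2$; both are standard but require a careful Taylor/series comparison, and the constants $2$ and $3$ in the two tails come out asymmetric precisely because the series for $\ln(1-\delta)$ has all positive terms after negation, whereas the series for $\ln(1+\delta)$ alternates. Since this lemma is entirely standard (it is cited as Theorem~5.4 of \cite{MU2017}), I would in fact simply cite the reference rather than reproduce the calculation.
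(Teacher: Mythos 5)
The paper does not prove this lemma at all; it simply states it with a citation to Theorem~5.4 of Mitzenmacher--Upfal~\cite{MU2017}, which is exactly what you conclude is the right course of action. Your sketched moment-generating-function derivation is the standard proof and is correct, including the two analytic inequalities $(1+\delta)\ln(1+\delta)-\delta \geq \delta^2/3$ and $(1-\delta)\ln(1-\delta)+\delta \geq \delta^2/2$ and the reason for the asymmetric constants.
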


Following~\cite{K02}, we will now give the definition for \textit{strongly difference-bounded} and then give the statement for a bounded differences inequality with bad events.

\begin{defi}[Strongly difference-bounded -- Definition 1.6 in~\cite{K02}] \label{def:strongly_dif_bounded}
Let $\Omega_1, \ldots \Omega_N$ be probability spaces. Let $\Omega = \prod_{k = 1}^N \Omega_k$ and let $X$ be a random variable on $\Omega$. We say that $X$ is \textit{strongly difference-bounded by $(\eta_1, \eta_2, \xi)$} if the following holds: there is a ``bad'' subset $\mathcal{B} \subseteq \Omega$, where $\xi = \Pro{\omega \in \mathcal{B}}$. If $\omega, \omega' \in \Omega$ differ only in the $k$-th coordinate, and $\omega \notin B$, then
\[
|X(\omega) - X(\omega')| \leq \eta_2.
\]
Furthermore, for any $\omega$ and $\omega'$ differing only in the $k$-th coordinate, \[
|X(\omega) - X(\omega')| \leq \eta_1.
\]
\end{defi}

\begin{thm}[Theorem 3.3 in~\cite{K02}] \label{lem:kutlin_3_3}
Let $\Omega_1, \ldots, \Omega_N$ be probability spaces. Let $\Omega = \prod_{k = 1}^N \Omega_k$, and let $X$ be a random variable on $\Omega$ which is strongly difference-bounded by $(\eta_1, \eta_2, \xi)$. Let $\mu = \ex{X}$. Then for any $\lambda > 0$ and any $\gamma_1, \ldots , \gamma_N > 0$,
\[
\Pro{X \geq \mu + \lambda} \leq \exp\Big( - \frac{- \lambda^2}{2 \cdot \sum_{k \in [N]} (\eta_2 + \eta_1 \gamma_k)^2} \Big) + \xi \cdot \sum_{k \in [N]} \frac{1}{\gamma_k}.
\]
\end{thm}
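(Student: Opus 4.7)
The statement is a standard-flavoured concentration bound that strengthens the classical method of bounded differences by allowing a small ``bad'' set on which only the weaker Lipschitz constant $\eta_1$ (rather than $\eta_2$) is available. The natural route is via the Doob martingale. Let $\mathfrak{F}_k := \sigma(\omega_1,\ldots,\omega_k)$ and set $Z_k := \Ex{X \mid \mathfrak{F}_k}$, so that $Z_0 = \mu$, $Z_N = X$, and $X - \mu = \sum_{k=1}^N D_k$ with $D_k := Z_k - Z_{k-1}$. The worst-case Lipschitz bound gives $|D_k| \leq \eta_1$ unconditionally, while the ``good'' bound $|D_k| \leq \eta_2$ only holds when we can guarantee, from the information in $\mathfrak{F}_{k-1}$ (or $\mathfrak{F}_k$), that resampling the $k$-th coordinate cannot land us in $\mathcal{B}$.

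The key idea is to ``clean up'' the martingale by truncating the potentially bad contributions. Concretely, introduce the conditional bad-set probability $q_k := \Pr[\,\omega \in \mathcal{B} \mid \mathfrak{F}_k\,]$ and the events $A_k := \{q_{k-1} \leq \gamma_k\}$. On $A_k$, a comparison of $X(\omega)$ and $X(\omega')$ (where $\omega'$ differs from $\omega$ in the $k$-th coordinate) only leaves a $\gamma_k$-fraction of ``bad'' resamplings contributing at scale $\eta_1$, and a $(1-\gamma_k)$-fraction contributing at scale $\eta_2$. Averaging, one obtains $|D_k \cdot \mathbf{1}_{A_k}| \leq \eta_2 + \eta_1 \gamma_k$. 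Define the truncated differences $\tilde D_k := D_k \cdot \mathbf{1}_{A_k}$ and the martingale $\tilde Z_k := \sum_{j\leq k} \tilde D_j$.

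Now apply Azuma--Hoeffding to $\tilde Z_N$: since each $\tilde D_k$ is a martingale difference (with respect to the original filtration) bounded by $c_k := \eta_2 + \eta_1 \gamma_k$,
\[
\Pr\!\left[\tilde Z_N \geq \lambda\right] \leq \exp\!\left(-\frac{\lambda^2}{2\sum_{k=1}^N c_k^2}\right),
\]
which produces the Gaussian term in the theorem. The remaining task is to show that with high probability $\tilde Z_N = Z_N - \mu$, i.e., $\bigcap_k A_k$ holds. Since $q_k$ is a non-negative martingale with $\Ex{q_N} = \xi$, Markov/Doob's maximal inequality (applied stepwise) gives $\Pr[q_{k-1} > \gamma_k] \leq \xi/\gamma_k$, and a union bound over $k$ yields $\Pr[\bigcup_k A_k^c] \leq \xi \sum_k 1/\gamma_k$. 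Combining the two bounds via a union bound on $\{\tilde Z_N \geq \lambda\} \cup \bigcup_k A_k^c$ yields the claim.

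The main obstacle is making the ``per-step truncation'' rigorous: one needs a careful coupling argument to verify that $|\tilde D_k| \leq \eta_2 + \eta_1 \gamma_k$ almost surely on $A_k$, given that the strong-difference-boundedness is only a pointwise Lipschitz statement and $D_k$ is an average over a resampling of $\omega_k$. The cleanest way is probably to write $D_k = \Ex{X(\omega) - X(\omega^{(k)}) \mid \mathfrak{F}_k}$, where $\omega^{(k)}$ is an independent copy of $\omega_k$, and then split the outer expectation according to whether either $\omega$ or $\omega^{(k)}$ lies in $\mathcal{B}$; the probability of that split is controlled by $q_{k-1} \leq \gamma_k$ on $A_k$, giving the desired hybrid bound.
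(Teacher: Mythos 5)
The paper does not prove this lemma at all; it is imported verbatim as Theorem 3.3 from Kutin's paper [K02] and used as a black box in the proof of \cref{lem:gamma_linear_whp}. So there is no ``paper's proof'' to compare against, only Kutin's original argument.

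That said, your reconstruction is correct and, as far as I can tell, is essentially Kutin's argument: build the Doob martingale $Z_k = \Ex{X \mid \mathfrak{F}_k}$, introduce the conditional bad probability $q_{k-1} = \Pr[\omega \in \mathcal{B} \mid \mathfrak{F}_{k-1}]$, truncate the martingale differences on the events $A_k = \{q_{k-1} \le \gamma_k\}$ (which is crucially $\mathfrak{F}_{k-1}$-measurable so that $\tilde D_k = D_k\mathbf{1}_{A_k}$ remains a martingale difference), apply Azuma--Hoeffding to the truncated sum, and union-bound the error event $\bigcup_k A_k^c$ by Markov. The heart of the argument, which you correctly identify as the only genuinely delicate step, is the hybrid bound $|D_k| \le \eta_2 + \eta_1 q_{k-1}$. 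Writing $D_k = \Ex_{\omega_k',\omega_{>k}'}\big[X(\omega_{\le k},\omega_{>k}') - X(\omega_{<k},\omega_k',\omega_{>k}')\big]$ (resampling both the $k$-th and all future coordinates, coupled to share the same $\omega_{>k}'$), one observes that whenever the resampled point $(\omega_{<k},\omega_k',\omega_{>k}')$ lies outside $\mathcal{B}$ the per-term difference is at most $\eta_2$, and the probability of the complementary event is exactly $q_{k-1}$, giving the claimed bound. Your prose slightly conflates $\omega^{(k)}$ (a resampled coordinate) with a full resampled point, and the display ``$D_k = \Ex{X(\omega)-X(\omega^{(k)}) \mid \mathfrak{F}_k}$'' should really condition on $\mathfrak{F}_{k-1}$ and make the coupled future resample explicit, but the intended argument is right and you flag this yourself. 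One small reassuring detail: using $q_{k-1}$ rather than $q_k$ is not just a convenience to keep $A_k$ predictable, it is also what the coupling naturally produces, since $(\omega_{<k},\omega_k',\omega_{>k}')$ is a fresh completion of $\omega_{<k}$.
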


\subsection{Auxiliary Probabilistic Claims}

We give a proof for the well-known fact that when $\Ex{e^{\lambda W}} < \infty$ then $\ex{W^4}$ is also bounded. %

\begin{lem}\label{lem:s_bound}
Consider a random variable $W$ with $\Ex{e^{\lambda W}} < \infty$ for some $\lambda > 0$. then 
\[
\Ex{W^4} < \left(\Big(\frac{8}{\lambda}\Big) \cdot \log\Big(\frac{8}{\lambda}\Big) \right)^4 +  \Ex{e^{\lambda W}}.
\]
\end{lem}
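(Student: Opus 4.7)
The plan is to split the expectation $\Ex{W^4}$ at the threshold $T := (8/\lambda)\log(8/\lambda)$, using a trivial bound on $\{W \leq T\}$ and dominating $W^4$ by $e^{\lambda W}$ on $\{W > T\}$. Concretely,
\[
\Ex{W^4} \leq T^4 \cdot \Pr[W \leq T] + \Ex{W^4 \cdot \mathbf{1}_{W > T}} \leq T^4 + \Ex{e^{\lambda W} \cdot \mathbf{1}_{W > T}} \leq T^4 + \Ex{e^{\lambda W}},
\]
provided one can show the pointwise inequality $w^4 \leq e^{\lambda w}$ for every $w \geq T$. The strictness in the claimed bound follows because $\Ex{e^{\lambda W}} > 0$ contributes positively even in the extreme case $\Pr[W \leq T] = 1$.

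The crux is therefore the analytic inequality $4\log w \leq \lambda w$ on $[T, \infty)$. Define $f(w) := \lambda w - 4\log w$; then $f'(w) = \lambda - 4/w$ vanishes at $w = 4/\lambda$, and $f$ is strictly increasing on $(4/\lambda, \infty)$. In the interesting small-$\lambda$ regime, one has $T \geq 4/\lambda$ (equivalent to $\log(8/\lambda) \geq 1/2$), so it is enough to verify
\[
f(T) = 8\log(8/\lambda) - 4\log\bigl((8/\lambda)\log(8/\lambda)\bigr) = 4\log(8/\lambda) - 4\log\log(8/\lambda),
\]
which is nonnegative by the elementary bound $x \geq \log x$ applied with $x = \log(8/\lambda)$. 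For the complementary regime of larger $\lambda$ (specifically $\lambda \geq 4/e$), the global minimum of $f$ equals $f(4/\lambda) = 4 - 4\log(4/\lambda) \geq 0$, so $w^4 \leq e^{\lambda w}$ holds for \emph{all} $w \geq 0$ and the $T^4$ term is not even needed. Since $4/e < 8/\sqrt{e}$, the two regimes overlap and cover every $\lambda > 0$.

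The hard part, conceptually, is essentially absent here: once the threshold is chosen to trade off the polynomial growth of $w^4$ against the exponential $e^{\lambda w}$, the value $T = (8/\lambda)\log(8/\lambda)$ is precisely calibrated so that the slack $x \geq \log x$ absorbs the $\log\log(8/\lambda)$ correction. The only bookkeeping is to confirm the two monotonicity regimes overlap, which they do by the numerical comparison above. Hence the full proof reduces to the display inequality and the two-line case analysis on $\lambda$.
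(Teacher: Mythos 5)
Your proof is correct and uses the same overall strategy as the paper: split $\Ex{W^4}$ at the threshold $T=(8/\lambda)\log(8/\lambda)$, bound the small-$W$ piece trivially by $T^4$, and dominate $W^4$ by $e^{\lambda W}$ on the large-$W$ piece. Where you differ is in how the pointwise inequality $w^4 \le e^{\lambda w}$ for $w \ge T$ is established. You do it by first-derivative analysis of $f(w) = \lambda w - 4\log w$, which forces a two-regime case split on $\lambda$ (small $\lambda$, where monotonicity on $[T,\infty)$ plus $f(T)\ge 0$ via $x \ge \log x$ suffices; large $\lambda$, where the global minimum $f(4/\lambda)$ is already nonnegative), followed by a check that the regimes overlap. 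The paper instead uses a one-line algebraic device with no case split: write $e^{\lambda x/4}=e^{\lambda x/8}\cdot e^{\lambda x/8}$, lower-bound the first factor by $8/\lambda$ (which holds for $x\ge \max(0,T)$ by the choice of $T$) and the second by $\lambda x/8$ (from $e^z\ge z$), multiply to get $e^{\lambda x/4}\ge x$, then raise to the fourth power. Both routes are valid; the paper's is slightly slicker because $\max(0,\kappa)$ absorbs the $\lambda > 8$ degeneracy without bifurcating the argument, while your calculus route is arguably more systematic and would generalize more readily if the exponent $4$ were replaced by a general power. One small nitpick: the paper's displayed proof only establishes the $\le$ version of the bound (so the lemma's strict $<$ is a harmless imprecision there too), and your one-sentence justification of strictness is likewise a bit loose — but it is immaterial, since only the nonstrict inequality is ever used downstream.
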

\begin{proof}
Let $\kappa := (8/\lambda) \cdot \log(8/\lambda)$. Consider $x \geq \max(0, \kappa) =: \kappa^*$. Then
\[
e^{\lambda x/4} = e^{\lambda x/8} \cdot e^{\lambda x/8} \geq e^{\log(8/\lambda)} \cdot e^{\lambda x/8} \geq \frac{8}{\lambda} \cdot \frac{\lambda x}{8} = x,
\]
using that $e^z \geq z$ for any $z$. Hence,
\[
e^{\lambda x} = (e^{\lambda x / 4})^4 \geq x^4.
\]
Hence, if $p_x$ is the pdf of $W$, then
\begin{align*}
\Ex{W^4} 
 & = \int_{x=0}^{\infty} x^4 \cdot p_x dx = \int_{x=0}^{\kappa^*} x^4 \cdot p_x dx + \int_{x=\kappa^*}^{\infty} x^4 \cdot p_x dx \\
 & \leq \int_{x=0}^{\kappa^*} \kappa^4 \cdot p_x dx + \int_{x=\kappa^*}^{\infty} e^{\lambda x} \cdot p_x dx \\
 & \leq \kappa^4 \cdot \int_{x=0}^{\infty} p_x dx + \int_{x=0}^{\infty} e^{\lambda x} \cdot p_x dx  = \kappa^4 + \Ex{e^{\lambda W}}. \qedhere
\end{align*}
\end{proof}

Next we state an inequality for a sequence of random variables, related through a recurrence inequality.

\begin{lem} \label{lem:geometric_arithmetic}
Consider a sequence of random variables $(Z_i)_{i \in \mathbb{N}}$ such that there are $0 < a < 1$ and $b > 0$ such that every $i \geq 1$,
\[
\Ex{Z_i \mid Z_{i-1}} \leq Z_{i-1} \cdot a + b.
\]
Then for every $i \geq 1$, 
\[
\Ex{Z_i \mid Z_0}
\leq Z_0 \cdot a^i + \frac{b}{1 - a}.
\]
\end{lem}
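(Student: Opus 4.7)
The plan is to prove this by a straightforward induction on $i$, using the tower property of conditional expectation together with the hypothesized one-step recurrence inequality. No obstacle is anticipated; this is essentially the discrete analogue of solving an affine recurrence $u_i = a u_{i-1} + b$ whose fixed point is $b/(1-a)$.

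First, I would verify the base case $i = 1$: by the given recurrence, $\Ex{Z_1 \mid Z_0} \leq Z_0 \cdot a + b$, and since $a \in (0,1)$ we have $b \leq b/(1-a)$, giving the desired bound with $i = 1$.

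For the inductive step, assume that $\Ex{Z_{i-1} \mid Z_0} \leq Z_0 \cdot a^{i-1} + b/(1-a)$. Apply the tower property and the given bound:
\begin{align*}
\Ex{Z_i \mid Z_0}
&= \Ex{ \Ex{Z_i \mid Z_{i-1}} \,\big\vert\, Z_0 } \\
&\leq \Ex{ a \cdot Z_{i-1} + b \,\big\vert\, Z_0 } \\
&= a \cdot \Ex{Z_{i-1} \mid Z_0} + b \\
&\leq a \cdot \Big( Z_0 \cdot a^{i-1} + \frac{b}{1-a} \Big) + b.
\end{align*}
The final algebraic step uses the identity $\frac{ab}{1-a} + b = \frac{ab + b(1-a)}{1-a} = \frac{b}{1-a}$, which yields $\Ex{Z_i \mid Z_0} \leq Z_0 \cdot a^i + \frac{b}{1-a}$, completing the induction.

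One small technical point worth noting: the statement conditions on $Z_0$ throughout, so the tower property needs to be applied as $\Ex{\,\cdot\, \mid Z_0} = \Ex{\Ex{\,\cdot\, \mid Z_{i-1}} \mid Z_0}$, which is valid because $\sigma(Z_0) \subseteq \sigma(Z_{i-1})$ in the canonical filtration associated with the sequence (or more generally, because the recurrence bound $\Ex{Z_i \mid Z_{i-1}} \leq a Z_{i-1} + b$ can be taken to hold with respect to the natural filtration generated by $Z_0, Z_1, \ldots, Z_{i-1}$, from which the stated bound follows by monotonicity of conditional expectation).
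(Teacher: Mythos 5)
Your proof is correct and follows essentially the same approach as the paper: induction on $i$ using the tower property of conditional expectation and the one-step recurrence. The only cosmetic difference is that the paper proves the sharper intermediate bound $\Ex{Z_i \mid Z_0} \leq Z_0 a^i + b\sum_{j=0}^{i-1}a^j$ and then bounds the geometric sum, whereas you carry the final constant $b/(1-a)$ through the induction directly; both are equivalent.
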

\begin{proof}
We will prove by induction that for every $i \in \mathbb{N}$, 
\[
\Ex{Z_i \mid Z_0} \leq Z_0 \cdot a^i + b \cdot \sum_{j = 0}^{i-1} a^j.
\]
For $i = 0$, $\Ex{Z_0 \mid Z_0} \leq Z_0$. Assuming the induction hypothesis holds for some $i \geq 0$, then since $a > 0$,
\begin{align*}
\Ex{Z_{i+1} \mid Z_0} & = \Ex{\Ex{Z_{i+1} \mid Z_i}\mid Z_0} \leq \Ex{Z_{i}\mid Z_0} \cdot a + b \\
 & \leq \Big(Z_0 \cdot a^i + b \cdot \sum_{j = 0}^{i-1} a^j \Big) \cdot a + b \\
 & = Z_0 \cdot a^{i+1} +b \cdot \sum_{j = 0}^i a^j.
\end{align*}
The claims follows using that for $a \in (0,1)$, $\sum_{j=0}^{\infty} a^j = \frac{1}{1-a}$.
\end{proof}

\subsection{Auxiliary Non-Probabilistic Claims}

For the next lemma, we define for two $n$-dimensional vectors $x,y$, $\langle x,y \rangle := \sum_{i=1}^n x_i \cdot y_i$.

\begin{lem}\label{lem:quasilem2}Let $(p_k)_{k=1}^n , (q_k)_{k=1}^n $ be two probability vectors and $(c_k)_{k=1}^n$ be non-negative and non-increasing. Then if $p$ majorizes $q$, i.e., for all $1 \leq k \leq n$, $\sum_{i=1}^k p_i \geq \sum_{i=1}^k q_i$ holds, then \begin{equation*} \label{eq:toprove}
\langle p,c \rangle \geq \langle q,c \rangle.
\end{equation*}
\end{lem}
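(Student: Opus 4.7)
The plan is to prove this by summation by parts (Abel summation), which is the standard route for deducing sum inequalities from prefix-sum inequalities. Define the difference vector $d_i := p_i - q_i$ and its partial sums $D_k := \sum_{i=1}^{k} d_i$. The majorization hypothesis then reads $D_k \geq 0$ for all $1 \leq k \leq n$, and because $p$ and $q$ are both probability vectors we additionally have $D_n = \sum_{i=1}^{n} p_i - \sum_{i=1}^{n} q_i = 0$.

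Next I would apply Abel summation to rewrite
\[
\langle p, c \rangle - \langle q, c \rangle \;=\; \sum_{i=1}^{n} d_i c_i \;=\; D_n c_n + \sum_{i=1}^{n-1} D_i \, (c_i - c_{i+1}).
\]
The boundary term vanishes since $D_n = 0$. In the remaining sum, $D_i \geq 0$ by the majorization condition, and $c_i - c_{i+1} \geq 0$ because $(c_k)$ is non-increasing. Each summand is therefore non-negative, which gives $\langle p, c\rangle - \langle q, c\rangle \geq 0$, as desired.

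There is no serious obstacle: the non-negativity assumption on $(c_k)$ is not even needed in this argument (only monotonicity matters), and the fact that $p,q$ are probability vectors is used exactly once, to kill the boundary term $D_n c_n$. The proof is essentially a special case of the Hardy–Littlewood–Pólya characterization of majorization via inner products with monotone sequences, and the Abel identity is the only nontrivial ingredient.
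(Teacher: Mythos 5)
Your proof is correct and is essentially the same argument the paper gives, just written more compactly. The paper's telescoping chain $r^1=p, r^2,\ldots,r^n=q$, with $\langle r^k,c\rangle - \langle r^{k+1},c\rangle = (c_k-c_{k+1})\sum_{i=1}^k(p_i-q_i)$, is precisely the Abel identity $\langle p,c\rangle-\langle q,c\rangle=\sum_{k=1}^{n-1}D_k(c_k-c_{k+1})$ unrolled step by step; your observation that non-negativity of $c$ is not actually used is also accurate for the paper's proof.
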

\begin{proof}
We will consider a sequence of moves between $p$ and $q$, which gradually moves probability mass from lower to higher coordinates. Specifically, we define the following sequence:
\begin{align*}
r^1 &= (p_1, p_2, p_3, p_4, \ldots, p_n ) = p \\
r^2 &= (q_1, p_2 + (p_1 - q_1), p_3, p_4, \ldots,p_n ) \\
r^3 &= (q_1, q_2, p_3 + (p_1 + p_2 - q_1 - q_2), p_4, \ldots,p_n ) \\
r^4 &= (q_1, q_2, q_3, p_4 + (p_1 + p_2 + p_3 - q_1 - q_2 - q_3), \ldots,p_n ) \\
& \vdots \\
r^n &= \bigg(q_1, q_2, q_3, \ldots, q_{n-1}, p_n + \sum_{i = 1}^{n-1} (p_i - q_i) \bigg) = q,
\end{align*}
where in the last equation we used $p_n + \sum_{i = 1}^{n-1} (p_i - q_i) = p_n - p_n + q_n = q_n$.

For any $1 \leq k < n$, since $r^{k}$ and $r^{k+1}$ differ only in the $k$-th and $(k+1)$-st coordinate, and $\sum_{i=1}^{k} (p_i-q_i) \geq 0$, we conclude
it follows that
\begin{align*}
    \langle r^{k}, c \rangle - \langle r^{k+1},  c \rangle
    &\geq r_{k}^{k} c_{k} + r_{k+1}^{k} c_{k+1} - r_{k}^{k+1} c_{k} + r_{k+1}^{k+1} c_{k+1}
    \\
    &= c_k \cdot \bigg( \Big(p_k + \sum_{i = 1}^{k-1}(p_i - q_i) \Big) - q_k\bigg) 
    + c_{k+1} \cdot \bigg(p_{k+1} - \Big(p_{k+1} + \sum_{i = 1}^{k}(p_i - q_i) \Big) \bigg) \\
    & = (c_k - c_{k+1}) \cdot \sum_{i = 1}^{k}(p_i - q_i)\\
    &\geq 0.
\end{align*}
Hence $\langle p, c \rangle = \langle r^1, c \rangle \geq \langle r^2, c \rangle \geq \cdots \geq \langle r^n,c \rangle = \langle q,c \rangle$.

\end{proof}

\begin{lem} \label{lem:decreasing_fn}
The function $f(z) = z \cdot e^{k/z}$ for $k > 0$, is decreasing for $z \in (0, k]$.
\end{lem}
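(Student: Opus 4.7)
The plan is to verify monotonicity via an elementary derivative computation. Differentiating $f(z) = z \cdot e^{k/z}$ using the product rule together with the chain rule on $e^{k/z}$ gives
\[
f'(z) = e^{k/z} + z \cdot e^{k/z} \cdot \Big(-\frac{k}{z^2}\Big) = e^{k/z} \cdot \Big(1 - \frac{k}{z}\Big) = e^{k/z} \cdot \frac{z - k}{z}.
\]
For $z \in (0, k]$ the factor $e^{k/z}$ is strictly positive, $z$ in the denominator is positive, and the numerator satisfies $z - k \leq 0$. Hence $f'(z) \leq 0$ on $(0, k]$, with equality only at $z = k$, which proves that $f$ is (weakly) decreasing on $(0, k]$.

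There is no real obstacle here: the claim is a one-line calculus exercise, and the only thing to be careful about is the sign bookkeeping around $z = k$ (where the derivative vanishes) and the behavior as $z \downarrow 0$ (where $f(z) \to \infty$, which is consistent with $f$ being decreasing toward the right endpoint). No probabilistic or combinatorial machinery from the earlier sections is invoked.
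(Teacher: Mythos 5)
Your proof is correct and follows essentially the same route as the paper: differentiate via the product and chain rules, factor out $e^{k/z}$, and observe that $1 - k/z \le 0$ on $(0,k]$. The extra remarks on the sign at $z=k$ and the behavior as $z \downarrow 0$ are harmless elaborations, not a different method.
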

\begin{proof}
By differentiating,
\[
f'(z) = e^{k/z} - z \cdot e^{k/z} \cdot \frac{k}{z^2} = e^{k/z} \cdot \Big( 1 - \frac{k}{z}\Big).
\]
For $z \in (0, k]$, $f'(z) \leq 0$, so $f$ is decreasing.
\end{proof}

\end{document}